\documentclass[journal]{IEEEtran}
\usepackage{amsmath,graphicx}
\usepackage{pgf,tikz,pgfplots}

\usepackage{amsmath,amssymb,amsthm}
\usepackage{algorithmic,algorithm}
\usepackage{bm}
\usepackage{color}
\newtheorem{theorem}{Theorem}
\usepackage{caption}
\usepackage{comment}

\newtheorem{lemma}{Lemma}

\newtheorem{proposition}{Proposition}

\newtheorem{definition}{Definition}

\newcommand{\ba}{\mathbf{a}}
\newcommand{\bb}{\mathbf{b}}
\newcommand{\bc}{\mathbf{c}}
\newcommand{\be}{\mathbf{e}}
\newcommand{\bh}{\mathbf{h}}

\newcommand{\bu}{\mathbf{u}}
\newcommand{\bmu}{\boldsymbol{\mu}}
\newcommand{\bnu}{\boldsymbol{\nu}}
\newcommand{\bv}{\mathbf{v}}

\newcommand{\bw}{\mathbf{w}}
\newcommand{\bn}{\mathbf{n}}
\newcommand{\x}{\mathbf{x}}
\newcommand{\bY}{\mathbf{Y}}
\newcommand{\bN}{\mathbf{N}}
\newcommand{\bx}{\mathbf{x}}
\newcommand{\bX}{\mathbf{X}}

\newcommand{\bs}{\mathbf{s}}
\newcommand{\by}{\mathbf{y}}
\newcommand{\bz}{\mathbf{z}}
\newcommand{\bA}{\mathbf{A}}

\newcommand{\bS}{\mathbf{S}}
\newcommand{\bC}{\mathbf{C}}

\newcommand{\bH}{\mathbf{H}}

\newcommand{\I}{\mathcal{I}}
\newcommand{\RR}{\mathcal{R}}

\usepackage{subfigure}

\def\notvartualgraph{1}
\ifthenelse{\notvartualgraph=1}{

}
{

}

\hyphenation{op-tical net-works semi-conduc-tor}

\begin{document}
\title{Quantized Constant-Envelope Waveform Design for Massive MIMO DFRC Systems}
\author{\IEEEauthorblockN{Zheyu Wu, Ya-Feng Liu, Wei-Kun Chen,  and Christos Masouros}
  	\thanks{Z. Wu and and Y.-F. Liu are with the State Key Laboratory of Scientific and Engineering Computing, Institute of Computational Mathematics and Scientific/Engineering Computing, Academy of Mathematics and Systems Science, Chinese Academy of Sciences, Beijing 100190, China (e-mail: \{wuzy, yafliu\}@lsec.cc.ac.cn). W.-K. Chen is with the School of Mathematics and Statistics/Beijing Key Laboratory on MCAACI, Beijing Institute of Technology, Beijing 100081, China (e-mail: chenweikun@bit.edu.cn). C. Masouros is with the Department of Electronic and Electrical Engineering, University College London, Torrington Place, London, WC1E 7JE, UK (e-mail: c.masouros@ucl.ac.uk).}
  }
\maketitle

\maketitle

\begin{abstract}
Both dual-functional radar-communication (DFRC) and massive multiple-input multiple-output (MIMO) have been recognized as  enabling technologies for 6G wireless networks. This paper considers the advanced waveform design for hardware-efficient massive MIMO DFRC systems. Specifically,  the transmit waveform is imposed with the quantized constant-envelope (QCE)  constraint, which facilitates the employment of low-resolution digital-to-analog converters (DACs) and power-efficient amplifiers. The waveform design problem is formulated as the minimization of  the mean square error (MSE) between the designed and desired beampatterns subject to the constructive interference (CI)-based communication quality of service (QoS) constraints and the QCE constraint. To solve the formulated problem, we first utilize the penalty technique to  transform the discrete problem into an equivalent continuous penalty model.   Then, we propose an inexact augmented Lagrangian method (ALM) algorithm for solving the penalty model. In particular, the ALM subproblem at each iteration is solved by a custom-built block successive upper-bound minimization (BSUM) algorithm, which admits closed-form updates, making the proposed inexact ALM algorithm computationally efficient. Simulation results demonstrate the superiority of the proposed approach over existing state-of-the-art ones.  In addition, extensive simulations are conducted to examine the impact of various system parameters on the trade-off between communication  and radar performances.

\end{abstract}
\begin{IEEEkeywords}
Augmented Lagrangian method, constructive interference, dual-functional radar communication, massive multiple-input multiple-output,  quantized constant envelope.
\end{IEEEkeywords}
\vspace{-0.2cm}
\section{Introduction}
Communication-radar spectrum sharing (CRSS) has emerged as a promising paradigm to address the spectrum congestion issue in today's ever-expanding wireless networks \cite{CRSS}. A straightforward approach to achieve CRSS is to allow individual radar and communication systems to share the frequency band, which, however,  necessitates the exchange of side-information for managing cross-interference, leading to a prohibitive cooperation cost \cite{coexist}. As a step ahead from the coexistence scheme, the dual-functional radar-communication (DFRC) system integrates both radar and communication functionalities on a single platform, which not only eliminates the need for information exchange but also greatly reduces the hardware cost and system complexity. The DFRC system has  gained significant attention recently in both academia and industry. In particular, the DFRC system has been considered as an enabling technology for future IoT applications and integrated sensing and communications (ISAC) has been listed as one of the six major usage scenarios in future  6G wireless networks \cite{DFRCapp,DFRC2,DFRC3,DFRC4,survey_DFRC,survey_DFRC2,wang2023globally,ITU}. 

The performance of the DFRC system can be further enhanced when equipped with massive multiple-input multiple-output (MIMO) \cite{massivemimo2}, another key technology for 5G and 6G wireless networks, as the increased number of antennas offers greater flexibility to mitigate the multiuser interference and shape the spectral beampattern for radar applications. Despite the great potential of  massive MIMO DFRC systems, the high hardware cost and energy consumption pose  major practical challenges. In particular, the numbers of digital-to-analog converters (DACs) and power amplifiers (PAs),  which are the most power-hungry and costly components in  fully digital systems,  both scale with the number of transmit antennas. Therefore,   it is crucial to employ low-resolution DACs and power-efficient PAs in massive MIMO DFRC systems. It is well-known that when constant-envelope (CE) waveforms are transmitted,  the PAs can operate close to saturation without distortion and achieve their highest efficiency \cite{PAbook}.  
Quantized CE (QCE) transmission is a novel scheme that combines the CE requirement  on the transmit waveform  with the use of low-resolution DACs, in which case the transmit signals are constrained to have a fixed amplitude and their phases are limited to a finite set of values. The QCE transmission  facilitates the use of both low-resolution DACs and the most power-efficient PAs, making it a favorable transmission scheme in massive MIMO systems.   In this paper, we focus on the QCE transmit waveform design for massive MIMO DFRC systems. 
\subsection{Related Works}
Extensive research efforts have been devoted to the design and analysis of QCE precoding in massive MIMO communication-only systems \cite{SQUID,diversity,analysis2,trellis,GEMM,ciqce,ICASSP2023}.  The QCE precoding schemes can be broadly categorized into two classes: linear-quantized precoding and nonlinear precoding. Linear-quantized precoding schemes directly quantize the output of linear precoders, which, although simple,  often fail to provide satisfactory performance \cite{SQUID,diversity,analysis2}.  In contrast, nonlinear precoding schemes  optimize the transmit signal based on both the channel and the data symbols by solving appropriate optimization problems, which generally yield significantly superior performance compared to linear-quantized schemes \cite{trellis,GEMM,ciqce,ICASSP2023}. 

Various nonlinear precoders have been proposed based on different criteria. In particular, the concept of constructive interference (CI) \cite{CI1,CI2,CI3} has been incorporated into nonlinear  precoding design \cite{ciqce,ICASSP2023}. The main idea is to shape the multiuser interference (MUI) such that it aligns constructively with the signal of interest, thereby maximizing the useful signal power \cite{CItutorial}. Due to the ability to exploit MUI,  CI-based nonlinear precoders generally outperform those designed based on the   classical mean square error (MSE) metric, which treats   MUI as a detrimental factor. 
 In addition to numerous notable works on QCE precoding for communication systems, there are also a few works investigating  the QCE waveform design for massive MIMO radar systems; see \cite{radar_QCE_3,radar_QCE_1,radar_QCE_2} and the references therein. It is worth mentioning that as an extreme case of QCE transmission, one-bit waveform design (where one-bit DACs are employed) has attracted independent  research interests in both communication systems \cite{analysis1,CIfirst,CImodel,sep2,NL1P} and radar systems \cite{radar_1bit1,radar_1bit2,radar_1bit3}.


Note that, the aforementioned works on QCE waveform design focus on either communication-only or radar-only scenarios. To date, there have been very few studies on QCE waveform design for DFRC systems. As far as we know, the only existing works are \cite{DFRC1bit3,DFRC1bit2,DFRC1bit1},  all of which consider the extreme one-bit case. Specifically, in \cite{DFRC1bit3},  the one-bit DFRC waveform is designed to minimize the communication  MSE under the Cramér-Rao bound (CRB) constraint for radar direction-of-arrive estimation.  
 To encompass wider applications for radar (e.g., target tracking),  the works \cite{DFRC1bit2} and \cite{DFRC1bit1} adopt  the MSE between the designed and desired beampatterns and the similarity between the designed and desired waveforms as radar metrics, respectively, which are minimized (in weight) jointly with the communication MSE. 

There are two main limitations in the current studies on QCE waveform design for DFRC systems. 
First, the existing works \cite{DFRC1bit3,DFRC1bit2,DFRC1bit1} all adopt the MSE as the communication metric, which has been shown to be less effective   than the CI metric in the context of  nonlinear precoding. Although the works \cite{DFRC_CI1,DFRC_CI2,DFRC_CI3} have adopted the CI metric  in the waveform design for DFRC systems, they do not consider the QCE constraint. Second, as mentioned earlier, the existing works \cite{DFRC1bit3,DFRC1bit2,DFRC1bit1}  all focus on the one-bit case, and the proposed algorithms therein cannot be directly extended to handle the general QCE constraint. To the best of our knowledge, the general QCE waveform design for DFRC systems has not been explored in the literature, possibly due to the technical challenges introduced by both  the QCE constraint and the DFRC scenario.  
\vspace{-0.2cm}
 \subsection{Our Contributions}
In this paper, we consider the general QCE waveform design for massive MIMO DFRC systems. The main contributions are summarized as follows.

\begin{itemize}
\item 
\emph{Novel CI-based QoS-constrained general QCE waveform design problem formulation.}  For the first time, we investigate the  waveform design problem for DFRC systems under general QCE constraints. We employ CI as the communication metric and formulate the problem as the minimization of the MSE between the designed and desired beampatterns subject to the CI-based communication QoS constraints and the general QCE constraint.  
This results in a large-scale optimization problem with a nonconvex quartic objective function, numerous linear constraints, and discrete QCE constraints.
\item \emph{Efficient approach for solving the formulated problem.} We propose an efficient approach for solving the formulated problem. Specifically, we first apply the penalty technique to transform the discrete model with complicated QCE constraints into a continuous penalty model, and theoretically establish the equivalence between these two models. Then, we propose an inexact augmented Lagrangian method (ALM) algorithm for solving the penalty model and prove its convergence. In particular,  we  propose a custom-built successive upper-bound minimization (BSUM) algorithm for solving the ALM subproblem at each iteration. The proposed BSUM algorithm admits closed-form updates, making the inexact ALM algorithm computationally efficient and suitable for solving large-scale problems. 
\end{itemize}

 We provide extensive simulation results to verify the effectiveness of the proposed approach. For the one-bit case, the proposed approach demonstrates significantly superior performance compared to existing state-of-the-art (SOTA) algorithms \cite{DFRC2,DFRC1bit1}. In particular, with the same communication symbol error rate (SER), the  beampattern MSE achieved by the proposed approach is generally more than three times lower than those achieved by the SOTA algorithms.  We also examine the impact of various system parameters, including the numbers of transmit antennas and users in the system, the quantization level, and the safety margin threshold involved in the CI constraint, on the trade-off between communication and radar performances.  In particular, we demonstrate that the shape and power level of the achieved beampattern are strongly influenced by the number of transmit antennas.  Additionally, we show that increasing the resolution of DACs from one bit to three bits can greatly enhance the radar and communication performance, but further increases beyond three bits yield only marginal gains. This highlights the advantages of equipping massive MIMO  and low-resolution DACs in DFRC systems.

 \vspace{-0.3cm}
\subsection{Organization and Notations}
The rest of the paper is organized as follows. In Section \ref{sec:2}, we introduce the system model and formulate the problem. In Section \ref{sec:3}, we propose an efficient approach for solving the formulated problem. Extensive simulation results are presented in Section \ref{sec:4} and the paper is concluded in Section \ref{sec:5}.

 \emph{Notation:} We use $x$, $\x$, $\mathbf{X}$, and $\mathcal{X}$ to denote scalar, column vector, matrix, and set, respectively.  For a matrix $\mathbf{X}$, we denote $\x_n$ as its $n$-th column and $x_{n,t}$ as its $(n,t)$-th entry. For a set $\mathcal{X}$, $\text{conv}(\mathcal{X})$ is the convex hull of $\mathcal{X}$ and  $\mathcal{P}_\mathcal{X}(\cdot)$ is the projection operator onto $\mathcal{X}$ (if $\mathcal{X}$ is convex). 
 $\|\cdot\|$ denotes the $\ell_2$ norm of the corresponding  vector or the spectrum norm of the corresponding matrix. The operators $(\cdot)^\mathsf{T}$, $(\cdot)^\mathsf{H}$, $\RR(\cdot)$, $\I(\cdot)$, and $|\cdot|$ return the transpose, the Hermitian transpose, the real part, the imaginary part, and the modular of their corresponding argument, respectively. The notation $\left<\x,\mathbf{y}\right>$ is also used in some contexts to denote the inner product of vectors $\mathbf{x}$ and $\mathbf{y}$ for clarity, as an alternative to $\mathbf{x}^\mathsf{T}\mathbf{y}$. The symbol $\mathbf{0}$  represents the all-zero vector with appropriate dimension.  $\mathcal{C}\mathcal{N}(\mathbf{0},\sigma^2\mathbf{I})$ represents the zero-mean circularly symmetric complex Gaussian distribution with covariance matrix $\sigma^2\mathbf{I}$, where $\mathbf{I}$ denotes the identity matrix.  For a positive integer $n$, we use $[n]$ to denote $\{1,2, \dots, n\}$. Finally, $\otimes$ represents the Kronecker product and $j\triangleq\sqrt{-1}$ is the imaginary unit.

\section{System Model and Problem Formulation}\label{sec:2}
\begin{figure}
\includegraphics[scale=0.25]{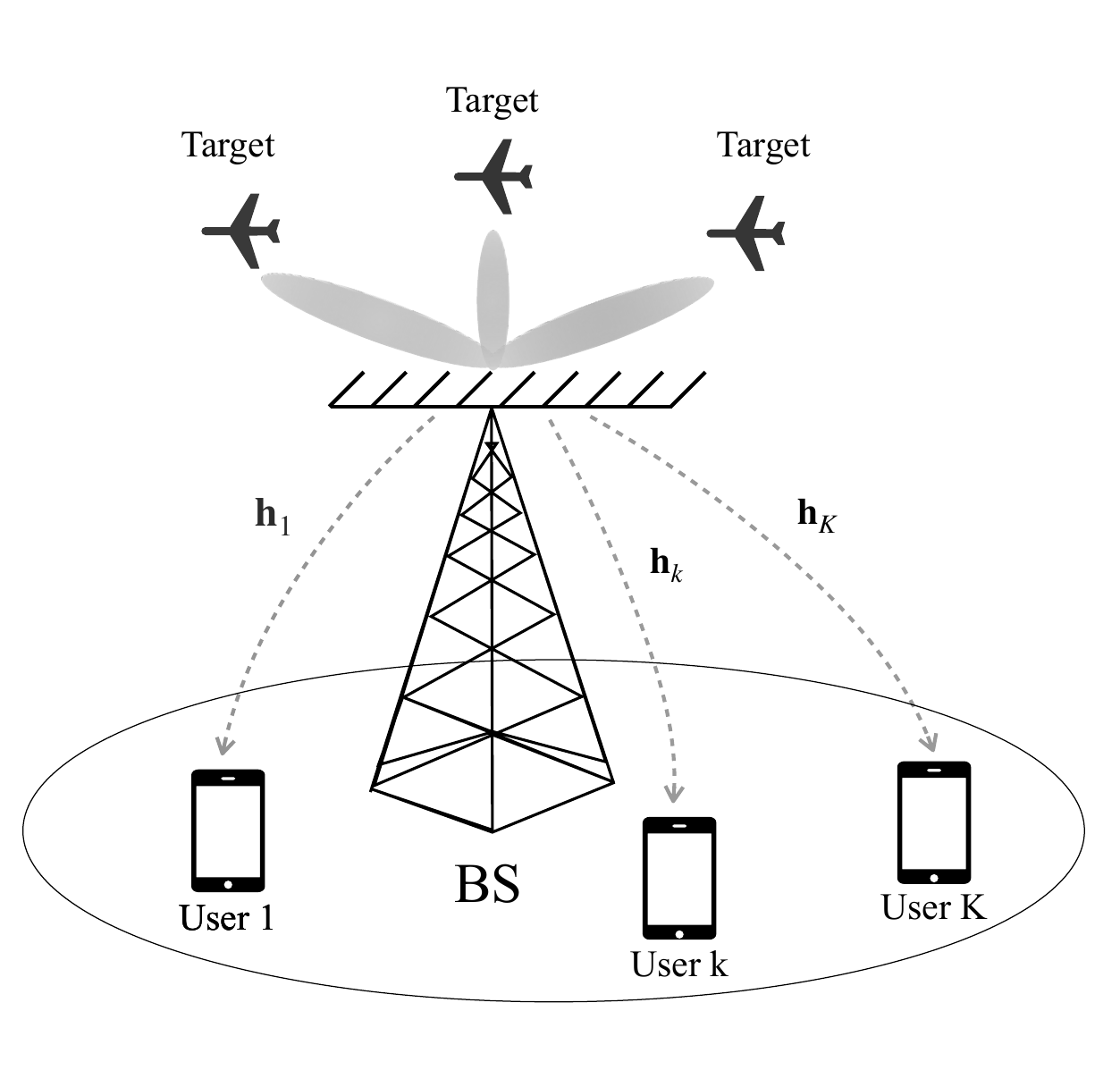}
\centering
\caption{A MIMO DFRC system, where there are one BS equipped with $N$ transmit antennas, each employed with a pair of low-resolution DACs, $K$ communication users, and many targets.}
\label{fig_system}
\vspace{-0.35cm}
\end{figure}

As depicted in Fig. \ref{fig_system}, we consider a MIMO DFRC system that simultaneously sends communication symbols to $K$ single-antenna users and transmits probing waveform to the targets for sensing purposes.  The system is equipped with a uniform linear array (ULA) with $N$ transmit antennas, each employed with a pair of low-resolution DACs to enhance the hardware efficiency. 
 
Let $\bX=[\x_1,\x_2,\dots,\x_T]\in\mathbb{C}^{N\times T}$ be the transmit signal  matrix, where $T$ is the block length and $\x_t$ is the transmit signal vector at the $t$-th  time slot.  
Typically, the transmit signal matrix should satisfy the CE constraint to achieve the highest power efficiency of PAs \cite{DFRC2,DFRC_CI1}, i.e., 
$$|x_{t,n}|=\eta,~\forall~n\in[N],~\forall~t\in[T],$$
where $\eta=\sqrt{{P}/{N}}$ with  $P$ being the maximum available power at each time slot.  As low-resolution DACs are employed,  each transmit signal is further restricted to be selected from only a finite set, i.e., 
$$x_{t,n}\in\mathcal{X}_L:=\left\{\eta e^{j\frac{(2\ell-1)\pi}{L}},~\ell\in[L]\right\},~\forall~n\in[N],~\forall~t\in[T].$$ 
By convention, we refer to $\mathcal{X}_L$  as the QCE set (as it is a quantized version of the CE signal set)  
and  $L$ as the number of quantization levels. Clearly, an $L$-level quantization can be realized by a pair of $(\log_2 L-1)$-bit DACs \cite{analysis2}. In this paper,  the transmit signal matrix $\bX$ is designed to  optimize the radar performance with a guaranteed communication QoS requirement. Next, we will introduce the models and performance metrics for communication and radar sensing, respectively.
\vspace{-0.2cm}
\subsection{Communication Model and Performance Metric}
The received signal matrix $\bY=[\by_1,\by_2,\dots,\by_T]\in\mathbb{C}^{K\times T}$ at the communication users is given by 
\begin{equation}\label{model:commun}
\bY=\bH\bX+\bN,
\end{equation}
where  $\bH=[\bh_1,\bh_2,\dots,\bh_K]^\mathsf{T}\in\mathbb{C}^{K\times N}$ is the channel matrix between the transmitter and the users, and $\bN=[\bn_1,\bn_2,\dots,\bn_T]\in\mathbb{C}^{K\times T}$ is the additive white Gaussian noise matrix with $\bn_t\sim\mathcal{CN}(\mathbf{0},\sigma^2\mathbf{I})$ for all $t\in[T]$. 
Let $\bS=[\bs_1,\bs_2,\dots,\bs_T]\in\mathbb{C}^{K\times T}$ be the data symbol matrix for the users.  The transmit signal matrix $\bX$ is designed based on both the channel $\bH$ and the data symbol matrix $\bS$ through nonlinear precoding. This kind of  precoding scheme can achieve significantly better performance than conventional linear precoding schemes designed solely based  on $\bH$, especially when low-resolution DACs are employed \cite{SQUID}.

In this paper, we adopt the CI metric as the communication performance metric and focus on the $M$-ary phase shift keying (PSK) constellation\footnote{We focus on PSK constellation for ease of presentation, but the results in this paper can be straightforwardly generalized to other constellation schemes, e.g., quadrature amplitude modulation (QAM).}.  CI refers to the interference that aligns constructively with the signal of interest, thereby enhancing the useful signal power and pushing the signal away from its corresponding decision boundary \cite{CI3}. Exploiting the idea of CI, the CI metric aims to maximize the distance between the received noise-free signal and its closest decision boundary of the data symbol, a crucial quantity known as the safety margin \cite{CImodel}.  The safety margin is closely related to the symbol error probability (SEP). Intuitively, with a larger safety margin, the received noise-free  signal would be less susceptible to the additive noise, resulting in a lower SEP.  More specifically,  it has been shown in \cite{diversity} that the SEP can be both lower and upper bounded by a function of the safety margin:  
\begin{equation}\label{upperbound}
Q\left(\frac{\sqrt{2}d}{\sigma}\right)\leq\text{SEP}\leq 2Q\left(\frac{\sqrt{2}d}{\sigma}\right),
\end{equation}
where $Q(x)=\frac{1}{\sqrt{2\pi}}\int_{x}^{\infty}e^{-\frac{1}{2}x^2}dx$ , $d$ is the safety margin, and $\sigma$ is the standard variance of the additive noise. Therefore, the CI-based communication QoS constraint can be formulated  as the form $d\geq b,$ where $b$ is the lower bound of the safety margin determined by the prespecified SEP threshold.

\begin{figure}
\centering
\includegraphics[scale=0.3]{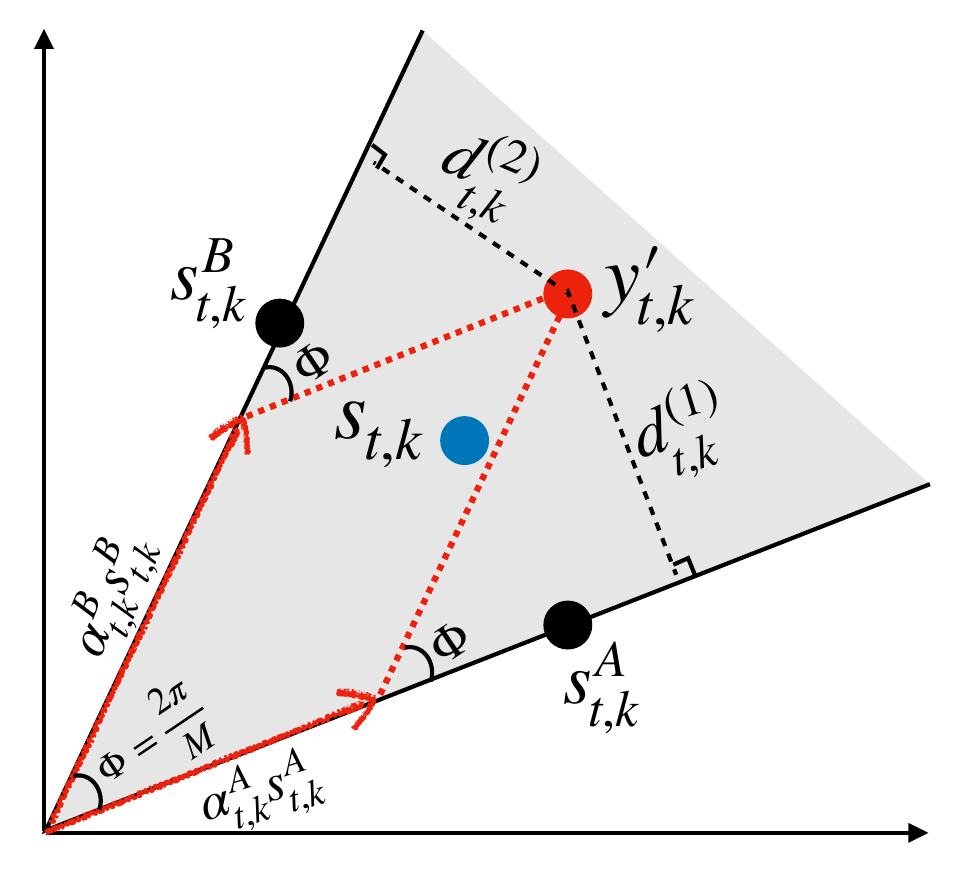}
\caption{An illustration of the CI metric.}
\label{fig_CI}
\end{figure}
Next,  we derive the CI-based communication QoS constraint for  model \eqref{model:commun}.  For illustration,  we depict a piece of the decision region of $8$-PSK constellation in Fig. \ref{fig_CI}, where $s_{t,k}$ and $y_{t,k}':=\bh_k^\mathsf{T}\x_t$ denote the data symbol and the noise-free received signal of user $k$ at the $t$-th time slot, respectively.  Clearly, the safety margin, i.e., the distance from $y_{t,k}'$ to its closest decision boundary of $s_{t,k}$, is given by $\min\left\{d_{t,k}^{(1)}, d_{t,k}^{(2)}\right\}$. Therefore, the block-level CI-based communication QoS constraint can be  formulated as \begin{equation}\label{eqn:QoS}\min\{d_{t,k}^{(1)}, d_{t,k}^{(2)}\}\geq b_{t,k},~\forall~ k\in[K],~\forall~t\in[T],\end{equation}
 where $b_{t,k}>0$ is the preset safety margin threshold for the $k$-th user at the $t$-th time slot. 
 
We still need to derive the explicit formula of $\min\{d_{t,k}^{(1)}, d_{t,k}^{(2)}\}$. Following \cite{CImodel}, we decompose $y_{t,k}'$ along the decision boundaries as 
\begin{equation}\label{eqn:decompose}
y_{t,k}'=\alpha_{t,k}^As_{t,k}^A+\alpha_{t,k}^Bs_{t,k}^B,
\end{equation}  where  $s_{t,k}^A=s_{t,k}e^{-j\frac{\pi}{M}}$ and $s_{t,k}^B=s_{t,k}e^{j\frac{\pi}{M}}$ are the unit vectors on the decision boundaries, as shown in Fig. \ref{fig_CI}. Then, we have
  $$\min\{d_{t,k}^{(1)}, d_{t,k}^{(2)}\}=\min\{\alpha_{t,k}^A\sin\Phi,\alpha_{t,k}^B\sin\Phi\},$$ where $\Phi=\frac{2\pi}{M}$. 
  By further rewriting \eqref{eqn:decompose} into the real space and noting that $y_{t,k}'=\bh_k^\mathsf{T}\x_t$, we can express $\alpha_{t,k}^A$ and $\alpha_{t,k}^B$ as 
\begin{subequations}
\small\begin{align}
\hspace{-0.1cm}\left(\begin{matrix}\alpha_{t,k}^A\\\alpha_{t,k}^B\end{matrix}\right)&\hspace{-0.05cm}=\hspace{-0.05cm}\frac{1}{\sin\Phi}\hspace{-0.05cm}\left(\begin{matrix}\hspace{-0.05cm}
\I(s_{t,k}^B)&\hspace{-0.25cm}-\RR(s_{t,k}^B)\\\hspace{-0.07cm}-\I(s_{t,k}^A)&\hspace{-0.15cm}\RR(s_{t,k}^A)\end{matrix}\right)\hspace{-0.1cm}\left(\begin{matrix}\RR(\bh_k^\mathsf{T})&\hspace{-0.15cm}-\I(\bh_k^\mathsf{T})\\\I(\bh_k^\mathsf{T})&\hspace{-0.1cm}\RR(\bh_k^\mathsf{T})\end{matrix}\right)\hspace{-0.15cm}\left(\begin{matrix}\RR(\x_t)\\\I(\x_t)\end{matrix}\right)\notag\\
&\hspace{-0.1cm}:=\hspace{-0.05cm}\frac{1}{\sin\Phi}\hspace{-0.05cm}\left(\begin{matrix}\bc_{t,2k-1}^\mathsf{T}\vspace{0.05cm}\\\bc_{t,2k}^\mathsf{T}\end{matrix}\right)\left(\begin{matrix}\RR(\x_t)\vspace{0.05cm}\\\I(\x_t)\end{matrix}\right), \tag{5}\label{CI}
\end{align}
\end{subequations}
where we have used the fact that $\RR(s_{t,k}^A)\mathcal{I}(s_{t,k}^B)-\RR(s_{t,k}^B)\mathcal{I}(s_{t,k}^A)=\sin(\arg(s_{t,k}^B)-\arg(s_{t,k}^A))=\sin\Phi$; see \cite{CImodel} for a detailed derivation. 
Let $\x_t^\RR=[\RR(\x_t)^\mathsf{T}~~\I(\x_t)^\mathsf{T}]^\mathsf{T}\in\mathbb{R}^{2N}$ be the real space representation of $\x_t$, $\bC_t=[\bc_{t,1},\bc_{t,2},\dots,\bc_{t,2K-1},\bc_{t,2K}]^\mathsf{T}\in\mathbb{R}^{2K\times2N}$ with $\bc_{t,k}$ defined in \eqref{CI}, and $\bb_t=[b_{t,1},b_{t,2},\dots,b_{t,K}]^\mathsf{T}\otimes[1,1]^\mathsf{T}\in\mathbb{R}^{2K}$. Then the constraint in \eqref{eqn:QoS} can be expressed as\vspace{-0.05cm}
$$\bC_t\x_t^\RR\geq \bb_t,~\forall~t\in[T].$$
\vspace{-0.6cm}
\subsection{Radar Model and Performance Metric }
Under the assumption that the transmitted probing waveform $\bX$ is narrow-band and that the propagation is line of sight (LoS), the baseband signal at angular $\theta\in[-\frac{\pi}{2},\frac{\pi}{2})$ at the $t$-th time slot is  \vspace{-0.1cm}$$r_t(\theta)=\ba(\theta)^\mathsf{H}\x_t,$$ where $\ba(\theta):=[1, e^{\pi\sin\theta},\dots,e^{j\pi(N-1)\sin\theta}]^\mathsf{T}$ is the steering vector and the antenna spacing is assumed to be half of the wavelength \cite{radaronly}. The beampattern,  which describes the power distribution of the probing waveform in the spacial domain, is given by\vspace{-0.2cm}
$$P(\theta)=\frac{1}{T}\sum_{t=1}^T|r_t(\theta)|^2=\frac{1}{T}\ba(\theta)^\mathsf{H}\bX\bX^\mathsf{H}\ba(\theta).\vspace{-0.1cm}$$

For radar applications such as detection and tracking, the probing waveform should exhibit strong power in the directions of potential targets while keeping power levels minimal  in other directions. To achieve this goal, a common design criterion is to match the designed beampattern $P(\theta)$ with a desired beampattern $d(\theta)$. Specifically, we adopt the MSE between the designed and desired beampatterns as the radar metric: \vspace{-0.15cm} 
 \begin{equation*}\label{eqn:sensing}
\frac{1}{Q}\sum_{q=1}^Q\left|\alpha d(\theta_q)-\frac{1}{T}\ba(\theta_q)^\mathsf{H}\bX\bX^\mathsf{H}\ba(\theta_q)\right|^2,\vspace{-0.2cm}
\end{equation*}
 where $\alpha>0$ is a scaling factor and $\{\theta_q\}_{q=1}^Q$ are the sampled angle grids.
 \vspace{-0.4cm}
\subsection{Problem Formulation}
Based on the above discussions, the QCE transmit waveform design problem, aiming to minimize the MSE between the designed and desired beampatterns under the CI-based communication QoS constraint, can be formulated as follows:
\begin{equation}\label{eqn:problem}
\begin{aligned}
\min_{\alpha>0, \bX}~&\frac{1}{Q}\sum_{q=1}^Q\left|\alpha d(\theta_q)-\frac{1}{T}\ba(\theta_q)^\mathsf{H}\bX\bX^\mathsf{H}\ba(\theta_q)\right|^2\\
\text{s.t. }~~&\bC_t\x_t^\RR\geq \bb_t,~\forall~t\in[T],\\
~&x_{t,n}\in\mathcal{X}_L, ~\forall ~n\in[N],~\forall~t\in[T], 
\end{aligned}
\end{equation}
where we recall that $\bX=[\x_1,\x_2,\dots,\x_T]$ and $\x_t^\RR$ is the real-space representation of $\x_t$.  To solve \eqref{eqn:problem}, we first note that for a given $\bX$, the problem is quadratic in $\alpha$, which admits a closed-form solution as 
$$\alpha^*(\bX)=\frac{\frac{1}{T}\sum_{q=1}^Qd(\theta_q)\ba(\theta_q)^\mathsf{H}\bX\bX^\mathsf{H}\ba(\theta_q)}{\sum_{q=1}^Qd^2(\theta_q)}.$$
Substituting $\alpha^*(\bX)$  into the objective function, problem \eqref{eqn:problem} is transformed into the following equivalent form (where we have omitted a constant scaling factor $\frac{1}{QT^2}$ in the objective function):
 \begin{equation}\label{eqn:problem2}
\begin{aligned}
\min_{\{\x_t\}}~&\sum_{q=1}^Q\left(\sum_{t=1}^T|\ba(\theta_q)^\mathsf{H}\x_t|^2\right)^2-\left(\sum_{q=1}^Q\sum_{t=1}^Tc_q|\ba(\theta_q)^\mathsf{H}\x_t|^2\right)^2\\\
\text{s.t. }~~&\bC_t\x_t^\RR\geq \bb_t,~\forall~t\in[T],\\
~&x_{t,n}\in\mathcal{X}_L, ~\forall ~n\in[N],~\forall~t\in[T],
\end{aligned}
\end{equation}
where  \begin{equation}\label{def:cq}
c_q=\frac{d(\theta_q)}{\sqrt{\sum_{q=1}^Q d^2(\theta_q)}},~\forall~q\in[Q].
\end{equation} By further rewriting \eqref{eqn:problem2} into the real space, we obtain 
\begin{subequations}\label{eqn:problem3}
\begin{align}
\hspace{-0.24cm}\min_{\{\x_t^\RR\}}\,~&\sum_{q=1}^Q\hspace{-0.05cm}\left(\sum_{t=1}^T\|\bA_q\x_t^\RR\|^2\right)^2\hspace{-0.15cm}-\hspace{-0.05cm}\left(\sum_{q=1}^Q\sum_{t=1}^Tc_q\|\bA_q\x_t^\RR\|^2\right)^2\label{obj}\\
\text{s.t. }~~&\bC_t\x_t^\RR\geq \bb_t,~\forall ~t\in[T],\label{linear}\\
&\x_{t,n}^\RR\in\mathcal{X}_L^\RR,~\forall~ n\in[N],~\forall~t\in[T],\label{qce}
\end{align}
\end{subequations}
where in \eqref{obj},
$$\begin{aligned}
\bA_q=\left(\begin{matrix}\RR(\ba(\theta_q))&\I(\ba(\theta_q))\\-\I(\ba(\theta_q))&\RR(\ba(\theta_q))\end{matrix}\right)\in\mathbb{R}^{2\times2N}\end{aligned},$$ 
 and in \eqref{qce}, $$\x_{t,n}^\RR=\left[\begin{matrix}x^\RR_{t,n}\\x^\RR_{t,n+N}\end{matrix}\right]\in\mathbb{R}^2, ~\mathcal{X}_L^\RR=\left\{\eta\left[\begin{matrix}\cos\frac{(2\ell-1)\pi}{L}\\\sin\frac{(2\ell-1)\pi}{L}\end{matrix}\right],~\ell\in[L]\right\}.$$  
 
In the rest of this paper, we will focus on problem \eqref{eqn:problem3}. From an optimization viewpoint, there are significant technical difficulties and challenges in solving problem \eqref{eqn:problem3}. First, the discrete QCE constraint \eqref{qce} is complicated, which poses a major difficulty. Second, the linear constraints in \eqref{linear} further complicates the solution of the problem. 
As shown in \cite{NL1P}, it is even NP-hard to check whether the problem is feasible with constraints \eqref{linear} and \eqref{qce}. Moreover, the objective function \eqref{obj} is nonconvex and has a complicated expression. 
 Finally, the problem size is large, especially when massive MIMO DFRC systems are considered.  In the next section, we shall propose an efficient approach for solving \eqref{eqn:problem3} by judiciously exploiting the problem structure.

\section{Proposed Approach}\label{sec:3}
In this section, we propose an efficient approach for solving problem \eqref{eqn:problem3}. First, we employ the penalty technique to tackle the discrete QCE constraint in \eqref{qce}, transforming the discrete problem into a continuous penalty model in   Section \ref{sec:penalty}. Then, we propose an inexact ALM algorithm for solving the penalty model and establish its convergence  in Section \ref{sec:alm}. In particular, we propose a custom-designed BSUM
algorithm for solving the ALM subproblem in Section \ref{sec:BSUM}. The updates of all blocks of variables in the proposed BSUM admit closed-from solutions, making the proposed inexact ALM algorithm computationally very efficient.  
\subsection{Penalty Model}\label{sec:penalty}
\begin{figure}
\centering
\includegraphics[scale=0.16]{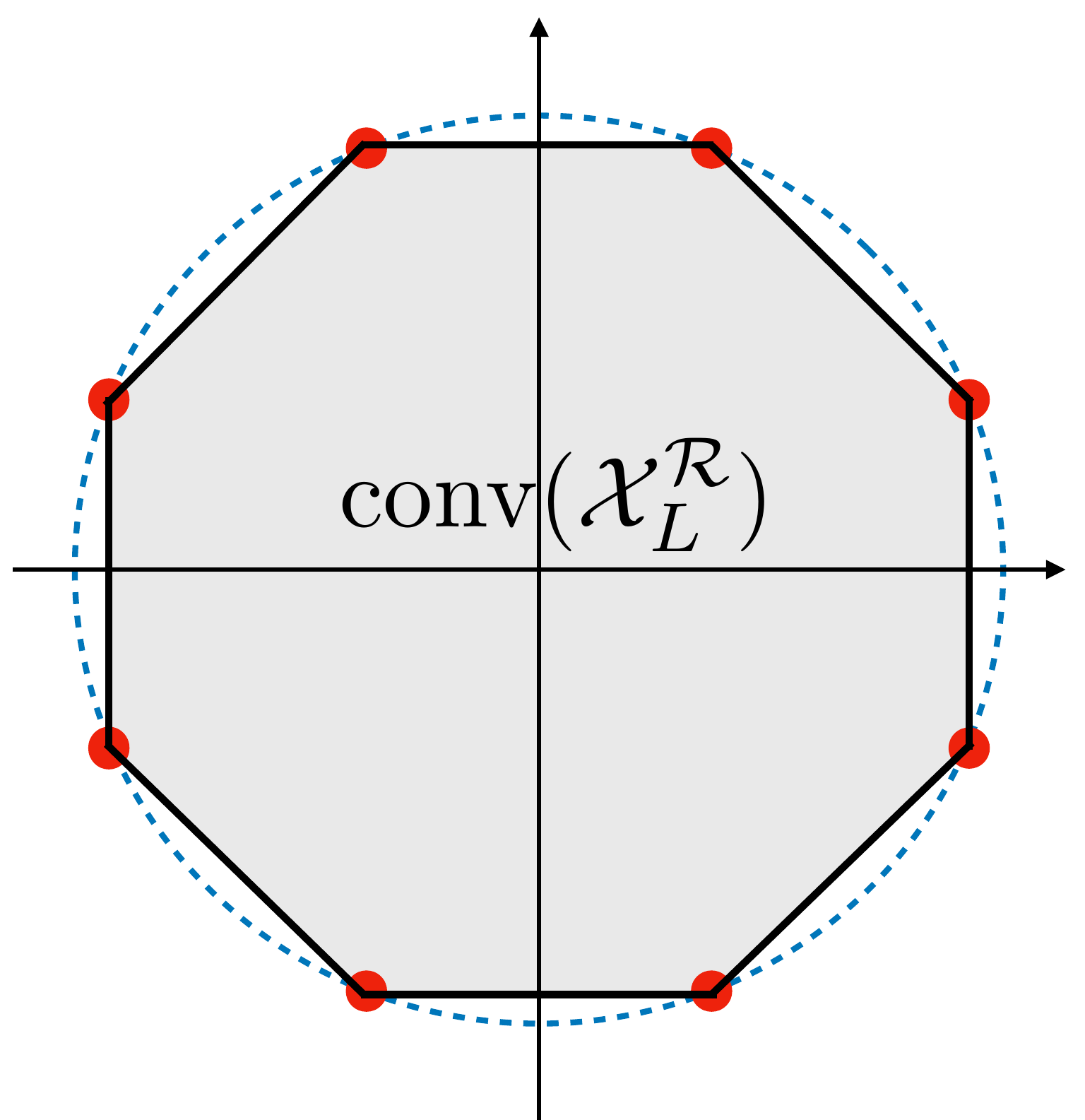}
\caption{An illustration of $\mathcal{X}_L^\RR$ and $\text{conv}(\mathcal{X}_L^\RR)$ with $L=8,$ where $\mathcal{X}_L^\RR$ is the set of 8 red QCE points and its convex hull $\text{conv}(\mathcal{X}_L^\RR)$ is the shadow region.}
\vspace{-0.2cm}
\label{convhull}
\end{figure}

The penalty technique is a powerful tool to deal with complicated constraints involved in optimization problems \cite{JSACsurvey}. Here, we use the penalty technique to handle   the discrete QCE constraint. 
Specifically, we relax the QCE constraint into its convex hull and include a negative square penalty term into the objective function as follows: 
\begin{equation}\label{eqn:problem4}
\begin{aligned}
\min_{\{\x_t^\RR\}}\,~&\sum_{q=1}^Q\left(\sum_{t=1}^T\|\bA_q\x_t^\RR\|^2\right)^2\\&-\left(\sum_{q=1}^Q\sum_{t=1}^Tc_q\|\bA_q\x_t^\RR\|^2\right)^2-\lambda\sum_{t=1}^T\|\x_t^\RR\|^2\\
\text{s.t. }~\,&\bC_t\x_t^\RR\geq \bb_t,~\forall~t\in[T],\\
&\x_{t,n}^\RR\in\text{conv}(\mathcal{X}_L^\RR), ~\forall~n\in[N],~\forall~t\in[T],\end{aligned}
\end{equation}
where $\lambda$ is the penalty parameter. The idea is based on the observation that the discrete QCE points in $\mathcal{X}_L^\RR$ are the vertices  of $\text{conv}(\mathcal{X}_L^\RR)$, which are the points in $\text{conv}(\mathcal{X}_L^\RR)$ with  the largest $\ell_2$ norm; see Fig. \ref{convhull}. Hence, by promoting a larger $\ell_2$ norm, the negative square penalty term, i.e., $-\lambda\sum_{t=1}^T\|\x_t^\RR\|^2$, encourages the solution of problem \eqref{eqn:problem4} to approach a discrete QCE point as the penalty parameter $\lambda$ increases. In the following proposition, we theoretically establish the equivalence between the continuous penalty model \eqref{eqn:problem4} and the original discrete model \eqref{eqn:problem3}.
\begin{proposition}\label{the1}
There exists $\lambda_0>0$ such that for all $\lambda>\lambda_0$, problems \eqref{eqn:problem3} and \eqref{eqn:problem4} share the same optimal solutions.
\end{proposition}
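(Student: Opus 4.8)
The plan is to rewrite the penalty objective so that its last term becomes a \emph{nonnegative deficit} that vanishes exactly on the discrete set, and then to exploit concavity. First I would record the elementary geometric fact underlying the construction (and visible in Fig.~\ref{convhull}): for every $\x_{t,n}^\RR\in\text{conv}(\mathcal{X}_L^\RR)$ one has $\|\x_{t,n}^\RR\|\le\eta$, with equality if and only if $\x_{t,n}^\RR$ is a vertex of the polygon, i.e. $\x_{t,n}^\RR\in\mathcal{X}_L^\RR$. Writing $f$ for the quartic objective \eqref{obj} and $\delta(\bX):=NT\eta^2-\sum_{t=1}^T\|\x_t^\RR\|^2\ge 0$ for the total norm deficit, the objective of \eqref{eqn:problem4} equals $f(\bX)+\lambda\,\delta(\bX)-\lambda NT\eta^2$. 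Since the additive constant $-\lambda NT\eta^2$ is irrelevant, minimizing the penalty objective over the polytope $\Omega_c$ (the feasible set of \eqref{eqn:problem4}) is equivalent to minimizing $f+\lambda\delta$ over $\Omega_c$, and on $\Omega_c$ the deficit $\delta$ vanishes \emph{precisely} at the feasible discrete points $\Omega_d$ (the feasible set of \eqref{eqn:problem3}), which are the candidate optimizers of \eqref{eqn:problem3}.

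The first main step is to show that for large $\lambda$ the perturbed objective is strictly concave. As $f$ is a polynomial, it is $C^2$ with Hessian bounded in spectral norm by some $M<\infty$ on the compact set $\Omega_c$; adding $-\lambda\sum_{t=1}^T\|\x_t^\RR\|^2$ shifts the Hessian to something $\preceq(M-2\lambda)\bI$, which is negative definite once $\lambda>M/2$. Hence $f+\lambda\delta$ is strictly concave on the convex polytope $\Omega_c$, and I would invoke the standard fact that a strictly concave function attains its minimum over a compact convex set \emph{only} at extreme points. Thus every optimal solution of \eqref{eqn:problem4} is a vertex of $\Omega_c$.

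The second step classifies these vertices. A feasible discrete point $\bX\in\Omega_d$ is an extreme point of the product of polygons $\prod_{t,n}\text{conv}(\mathcal{X}_L^\RR)$, hence—lying in the subset $\Omega_c$—an extreme point of $\Omega_c$; conversely a vertex with $\delta=0$ is discrete. So the vertices of $\Omega_c$ split into the discrete ones, which are exactly $\Omega_d$, and the remaining ones $V_n$, each carrying a strictly positive deficit. On $\Omega_d$ the objective reduces to $f$ (up to the constant), so its minimum there is the discrete optimum $f^\star_{\mathrm d}$; on $V_n$ the objective is $f(v)+\lambda\delta(v)$. Choosing $\lambda_0:=\max\{M/2,\ \max_{v\in V_n}(f^\star_{\mathrm d}-f(v))/\delta(v)\}$, which is finite because $\Omega_c$ is a polytope (so $V_n$ is finite) and each $\delta(v)>0$, guarantees that for every $\lambda>\lambda_0$ each non-discrete vertex is \emph{strictly} worse than the discrete optimum. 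Consequently the minimizers of \eqref{eqn:problem4} are exactly the minimizers of $f$ over $\Omega_d$, i.e. the optimal solutions of \eqref{eqn:problem3}.

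I expect the main obstacle to be the coupling linear constraints \eqref{linear}: because the half-spaces $\bC_t\x_t^\RR\ge\bb_t$ can slice through the polygons, the extreme points of $\Omega_c$ are \emph{not} all discrete, so one cannot simply argue ``concave minimization lands on a QCE point.'' My plan circumvents this through the value comparison of the second step—retaining the two classes of vertices and using strict concavity together with the positive deficit gap—rather than claiming every extreme point is discrete. The remaining care is (i) to make the identity ``zero-deficit vertices $=\Omega_d$'' fully rigorous, and (ii) to note the standing feasibility assumption $\Omega_d\ne\emptyset$ on \eqref{eqn:problem3}, without which a non-discrete vertex could be optimal for \eqref{eqn:problem4} and the solution sets would fail to coincide.
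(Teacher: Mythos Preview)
Your argument is correct and takes a genuinely different route from the paper's. The paper does not use concavity or the polytope/vertex structure at all; instead it compares an arbitrary non-discrete feasible point $\x$ directly with a discrete one by splitting into two cases: (i) $\x$ is within distance $\eta\sin\frac{\pi}{L}$ of some discrete feasible point $\bv_b$, in which case a geometric estimate $\|\bv_b\|^2-\|\x\|^2\ge 2\eta\sin\frac{\pi}{L}\,\rho-\rho^2$ combined with Lipschitz continuity of $F$ shows $\bv_b$ beats $\x$ for $\lambda>L_F/(\eta\sin\frac{\pi}{L})$; and (ii) $\x$ is uniformly bounded away from all discrete feasible points, in which case compactness gives a strict norm gap $\|\x\|\le c_0<\eta$ and a single discrete optimizer beats every such $\x$ once $\lambda$ exceeds $(F^*-\hat F)/(\eta-c_0)$. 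Your approach is more conceptual and arguably cleaner: strict concavity reduces the search to finitely many vertices, and the positive deficit on non-discrete vertices finishes the comparison in one stroke. What the paper's approach buys is generality---their Proposition~3 allows the extra constraint set $\mathcal{A}$ to be merely closed, whereas your vertex argument needs $\Omega_c$ to be a polytope (which it is here, since the CI constraints are linear). Your caveats at the end are well placed: the identity ``$\delta(v)=0\iff v\in\Omega_d$'' follows immediately from $\|\x_{t,n}^\RR\|\le\eta$ with equality only at QCE points, and the feasibility assumption $\Omega_d\neq\emptyset$ is indeed implicit in the paper as well.
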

\begin{proof}
See Appendix \ref{appendixA}.
\end{proof}
In practice, it is difficult to determine the explicit value of $\lambda_0$ in Proposition \ref{the1}. In our implementation, we employ a common technique in penalty-based approaches known as the homotopy technique \cite{nonlinear}, which initializes the penalty parameter $\lambda$ with a small value and gradually increases it, tracking the solution path of the corresponding penalty models. Through this process, we will eventually find the required $\lambda$ in Proposition \ref{the1}. Moreover, this technique can greatly enhance the numerical performance of the penalty approaches \cite{GEMM,NL1P}, compared to directly solving the penalty model \eqref{eqn:problem4}  with a very large $\lambda$.

 We remark here that similar penalty strategies have been employed to deal with  the QCE constraint in communication-only QCE precoding design  \cite{GEMM,ICASSP2023}. However, the optimization problems in \cite{GEMM,ICASSP2023} do not include any constraints other than the QCE constraint, while our problem involves a large number of linear constraints. These linear constraints exclude many QCE points from the feasible set and destroy the symmetry of set $\text{conv}(\mathcal{X}_L^\RR)$, making the feasible region of problem \eqref{eqn:problem4} more complicated and the equivalence between the original and penalty problems more challenging to be established.


Up to now, we have transformed the original discrete model \eqref{eqn:problem3} into an equivalent continuous penalty model \eqref{eqn:problem4}, which is more amenable to the algorithmic design. However, problem \eqref{eqn:problem4} is still a nonconvex problem with a complicated objective function and many linear constraints.  In the following subsection, we propose an efficient inexact ALM algorithm for solving problem \eqref{eqn:problem4}.
\subsection{Inexact ALM Algorithm for Solving \eqref{eqn:problem4}}\label{sec:alm}
In this subsection, we propose an inexact ALM algorithm for solving the penalty model \eqref{eqn:problem4}.  
In Section \ref{sec:reformulation}, we first  reformulate problem \eqref{eqn:problem4} into a form that facilitates the use of the ALM framework. Then, we present the proposed inexact ALM algorithm and prove its convergence in Sections \ref{sec:inALM} and \ref{sec:converge}, respectively.
\subsubsection{Reformulation of \eqref{eqn:problem4}}\label{sec:reformulation}
For problem \eqref{eqn:problem4}, we  introduce auxiliary variables $\bw_{t,q}=\bA_q\x_t^\RR\in\mathbb{R}^2$  to simplify its objective function and introduce auxiliary variables $\bz_t\geq\mathbf{0}$ to transform its inequality linear constraints into  equality ones as follows: 
\begin{equation}\label{eqn:problem5}
\begin{aligned}
\min_{\{\x_t^\RR\},\{\bw_t\},\{\bz_t\}}~&\sum_{q=1}^Q\left(\sum_{t=1}^T\|\bw_{t,q}\|^2\right)^2\hspace{-0.22cm}\\&-\hspace{-0.1cm}\left(\sum_{q=1}^Q\sum_{t=1}^Tc_q\|\bw_{t,q}\|^2\hspace{-0.05cm}\right)^2\hspace{-0.2cm}-\hspace{-0.1cm}\lambda\hspace{-0.05cm}\sum_{t=1}^T\|\x_t^\RR\|^2\\
\text{s.t. }~~~~~~&\bC_t\x_t^\RR-\bz_t= \bb_t,~\bz_t\geq\mathbf{0},~\forall~t\in[T],\\
&\tilde{\bA}\x_t^\RR-\bw_t=\mathbf{0},~\forall~t\in[T],\\
&\x_{t,n}^\RR\in\text{conv}(\mathcal{X}_L^\RR),\,\forall\, n\in[N],\forall\, t\in[T],
\end{aligned}
\end{equation}
where $\tilde{\bA}=[\bA_1^\mathsf{T},\bA_2^\mathsf{T},\dots,\bA_Q^\mathsf{T}]^\mathsf{T}\in\mathbb{R}^{2Q\times2N}$ \vspace{0.05cm}and $\bw_t=[\bw_{t,1}^\mathsf{T},\bw_{t,2}^\mathsf{T},\dots,\bw_{t,Q}^\mathsf{T}]^\mathsf{T}\in\mathbb{R}^{2Q\times 1}$. Compared to \eqref{eqn:problem4}, the  ALM subproblem of \eqref{eqn:problem5} is more tractable, which will become more clear in Section \ref{sec:BSUM}.

For ease of presentation, we further express  problem \eqref{eqn:problem5} into the following more compact form: 
\begin{equation}\label{eqn:problem6}
\begin{aligned}
\min_{\x,\bw,\bz}~&f(\bw)+g(\bw)+h(\bx)\\
\text{s.t. }~~&\bC\x-\bz= \bb,\\
&\bA\x-\bw=\mathbf{0},\\
&\x\in\mathcal{X},~\bz\in\mathcal{Z}.
\end{aligned}
\end{equation}
where $\x=[(\x_1^\RR)^\mathsf{T},(\x_2^\RR)^\mathsf{T},\dots,(\x_T^{\RR})^\mathsf{T}]^\mathsf{T},~\bw=[\bw_1^\mathsf{T}, \bw_2^\mathsf{T},$ $\dots,\bw_T^\mathsf{T}]^\mathsf{T}, ~\bz=[\bz_1^\mathsf{T},\bz_2^\mathsf{T},\dots,\bz_T^\mathsf{T}]^\mathsf{T}$, $\bb=[\bb_1^\mathsf{T},\bb_2^\mathsf{T},\dots,\bb_T^\mathsf{T}]^\mathsf{T}$, $\bC=\text{diag}(\bC_1,\bC_2,\dots,\bC_T)$, $\bA=\mathbf{I}_T\otimes \tilde{\bA}$, $\mathcal{X}=\left\{\x\mid\x_{t,n}^\RR\in\text{conv}(\mathcal{X}_L^\RR),~\forall~ n\in[N],~\forall~t\in[T]\right\},$ $\mathcal{Z}=\{\bz\mid\bz\geq\mathbf{0}\}$, and $f(\bw),$ $g(\bw),$ and $h(\x)$ are the first, second, and third terms of the objective function in \eqref{eqn:problem5}, respectively.

 \subsubsection{Inexact ALM Algorithm}\label{sec:inALM}
 The ALM \cite{nonlinear} is a popular and powerful method for solving (equality) constrained optimization problems like \eqref{eqn:problem6}. It transforms the complicated constrained  problem into a sequence of relatively easy unconstrained subproblems or subproblems with easy constraints (e.g., nonnegative constraints) via the augmented Lagrangian function. Then, by iteratively solving the ALM subproblems and updating the Lagrange multipliers and penalty parameters, the algorithm converges to a (stationary) solution of the original constrained problem.  Next, we propose an inexact ALM algorithm for solving problem \eqref{eqn:problem6}, which \emph{inexactly} solves the ALM subproblem at each iteration in order to reduce the computational cost.
 
The augmented Lagrangian function of \eqref{eqn:problem6} is given by 
\begin{equation}\label{alf}
\begin{aligned}
\hspace{-0.2cm}\mathcal{L}_{\boldsymbol{\rho}}(\x,\bw,\bz;\bmu,\bnu)=&f(\bw)+g(\bw)+h(\bx)\\&+\bmu^\mathsf{T}(\bC\x-\bz-\bb)+\bnu^\mathsf{T}(\bA\x-\bw)\\&+\frac{\rho_\mu}{2}\|\bC\x-\bz-\bb\|^2+\frac{\rho_\nu}{2}\|\bA\x-\bw\|^2,
\end{aligned}
\end{equation}  where $(\bmu,\bnu)$ are the Lagrange multipliers corresponding to the first and second equality constraints in \eqref{eqn:problem6} and $\boldsymbol{\rho}:=(\rho_\mu,\rho_\nu)$ are the corresponding penalty parameters. The proposed inexact ALM algorithm iteratively performs the following three steps.

\textbf{Step 1: Inexactly solve the ALM subproblem over primal variables $(\x,\bw,\bz)$}. 
In the classical ALM algorithm, one needs to solve the following ALM subproblem
\begin{equation}\label{ALMsubproblem}
\hspace{-0.2cm}(\x^{m+1}\hspace{-0.05cm},\bw^{m+1}\hspace{-0.05cm},\bz^{m+1})\hspace{-0.05cm}\in\arg\hspace{-0.35cm}\min_{\x\in\mathcal{X},\bw,\bz\in\mathcal{Z}}\hspace{-0.1cm}\mathcal{L}_{\boldsymbol{\rho}^m}(\x,\bw,\bz;\bmu^{m}\hspace{-0.05cm},\bnu^{m})
\end{equation}  at each iteration \cite{nonlinear}. To avoid the high computational cost of exactly solving \eqref{ALMsubproblem}, the  proposed inexact ALM algorithm only   requires \eqref{ALMsubproblem} to be solved to an $\epsilon_m$-stationary point. Here $(\x^{m+1},\bw^{m+1},\bz^{m+1})$ is called an $\epsilon_m$-stationary point of the ALM subproblem if
 \begin{equation}\label{inexactcondition}
\begin{aligned}
&\text{dist}\,(\nabla\mathcal{L}_{\boldsymbol{\rho}^m}(\x^{m+1},\bw^{m+1},\bz^{m+1};\bmu^{m},\bnu^{m})\\
&\qquad\qquad\quad+\partial \mathbb{I}_{\mathcal{X}}(\x^{m+1})+\partial\mathbb{I}_{\mathcal{Z}}(\bz^{m+1}),\mathbf{0})\leq \epsilon_{m},
\end{aligned}
\end{equation}
where $\mathbb{I}_{\mathcal{X}}$ and $\mathbb{I}_{\mathcal{Z}}$ denote the indicator functions of $\mathcal{X}$ and $\mathcal{Z}$, respectively, and $\partial \mathbb{I}_{\mathcal{X}}$ and $\partial \mathbb{I}_{\mathcal{Z}}$ denote their subdifferentials, respectively. When $\epsilon_m=0$ in \eqref{inexactcondition}, then $(\x^{m+1},\bw^{m+1}, \bz^{m+1})$ is a stationary point of problem \eqref{ALMsubproblem}. To guarantee the convergence of the inexact ALM algorithm, the accuracy of solving the ALM subproblem should be gradually increased.

In our proposed algorithm, the ALM subproblem \eqref{ALMsubproblem} is (approximately) solved by a custom-built BSUM algorithm, which admits  closed-form updates and is guaranteed to find an $\epsilon_m$-stationary point within a finite number of iterations. To maintain coherence in  presenting the  proposed inexact ALM algorithm, we defer detailed discussions of the BSUM algorithm for solving \eqref{ALMsubproblem} to Section \ref{sec:BSUM}.

\textbf{Step 2: Update the Lagrange multipliers $(\bmu,\bnu)$.} After obtaining $(\x^{m+1}, \bw^{m+1}, \bz^{m+1})$, the Lagrange multipliers $(\bmu,\bnu)$ are updated as 

{\small\begin{equation}\label{multiplier}
\begin{aligned}
\bmu^{m+1}&\hspace{-0.05cm}=\min\hspace{-0.05cm}\left\{\max\{\bmu^{m}\hspace{-0.05cm}+\rho_\mu^m(\bC\x^{m+1}\hspace{-0.05cm}-\bz^{m+1}\hspace{-0.05cm}-\bb),\mu_{\min}\},\mu_{\max}\right\}\hspace{-0.05cm},\\
\bnu^{m+1}&\hspace{-0.05cm}=\min\hspace{-0.05cm}\left\{\max\{\bnu^{m}\hspace{-0.05cm}+\rho_\nu^m(\bA\x^{m+1}-\bw^{m+1}),\nu_{\min}\},\nu_{\max}\right\},
\end{aligned}
\end{equation}}

\hspace{-0.35cm}where $(\mu_{\min},\nu_{\min})$ and $(\mu_{\max},\nu_{\max})$ are the imposed lower and upper bounds on the Lagrange multipliers $(\bmu,\bnu)$, respectively, and the operators $\min$ and $\max$ both operate component-wise on their entries. Different from the classical ALM, we apply safeguards in our method to ensure that the Lagrange multipliers are bounded, which is crucial for the convergence of the algorithm \cite{boundALM}.

\textbf{Step 3: Update the penalty parameters $\boldsymbol{\rho}=(\rho_\mu,\rho_\nu)$.} Finally, we update the penalty parameters $\boldsymbol{\rho}=(\rho_\mu,\rho_\nu)$ as in \eqref{rho}, which is given on the top of the next page, where $\tau>1$ and $\delta>0$. \begin{figure*}\begin{equation}\label{rho}
 \boldsymbol{\rho}^{m+1}=\left\{
\begin{aligned}
\tau\boldsymbol{\rho}^m,~&\text{if } \left\|\begin{smallmatrix}\sqrt{\rho^m_\mu}(\bC\x^{m+1}-\bz^{m+1}-\bb)\\ \sqrt{\rho^m_\nu}(\bA\x^{m+1}-\bw^{m+1})\end{smallmatrix}\right\| \geq \delta\left\|\begin{smallmatrix}\sqrt{\rho^m_\mu}(\bC\x^{m}-\bz^{m}-\bb)\\ \sqrt{\rho^m_\nu}(\bA\x^{m}-\bw^{m})\end{smallmatrix}\right\|;\\
\boldsymbol{\rho}^m,\hspace{0.2cm}~&\text{otherwise}.
 \end{aligned}\right.
 \end{equation}
 \vspace{-0.1cm}
 \hrule
 \vspace{-0.3cm}
\end{figure*}
The criterion in \eqref{rho} increases the penalty parameters if the violation of the constraints is not sufficiently reduced and keep them fixed otherwise. We remark that the penalty parameters $\rho_\mu$ and $\rho_\nu$ should be increased at the same rate as in \eqref{rho} to ensure the convergence of the algorithm.

We summarize the proposed inexact ALM algorithm for solving problem \eqref{eqn:problem6} in Algorithm \ref{alg:alm}.

\begin{algorithm}[t]
\caption{Inexact ALM algorithm for solving \eqref{eqn:problem6}}\label{alg:alm}
	\small
	\begin{algorithmic}[1]
		\STATE \textbf{Input:} $(\bx^0,\bw^0,\bz^0), (\bmu^0, \bnu^0)$, $\boldsymbol{\rho}^0=(\rho_\mu^0,\rho_\nu^0)$, $-\infty<\mu_{\min}<\mu_{\max}<\infty$, $-\infty<\nu_{\min}<\nu_{\max}<\infty$,  $\tau>1, \delta>0$, a positive sequence $\epsilon_m\to 0$.
		
\STATE \textbf{Initialize}: $m=0$.
\REPEAT
  \STATE\emph{(Inexactly solve the ALM subproblem)}: Obtain an $\epsilon_m$-stationary point $(\x^{m+1},\bw^{m+1},\bz^{m+1})$ of \eqref{ALMsubproblem} by Algorithm \ref{alg:BSUM} (see further ahead).
		\STATE  \emph{(Update the Lagrange multipliers)}: Obtain $(\bmu^{m+1},\bnu^{m+1})$ by \eqref{multiplier}. 
\STATE  \emph{(Update the penalty parameter)}: Obtain $\boldsymbol{\rho}^{m+1}$  by \eqref{rho}.\\
				\STATE Set  $m = m+1.$
		\UNTIL some stopping criterion is satisfied.
		\STATE \textbf{Output:} $(\x^m,\bw^m,\bz^m)$.
	\end{algorithmic} 
	\end{algorithm}

\subsubsection{Convergence Analysis} \label{sec:converge}
Before showing the convergence guarantee of the proposed inexact ALM  algorithm, we first give the formal definition of the stationary point  of problem \eqref{eqn:problem6} \cite[Proposition 2.1.2]{nonlinear}.
\begin{definition}[Stationary point of \eqref{eqn:problem6}]
A point $(\hat{\bx},\hat{\bw},\hat{\bz})$ is called a stationary point of \eqref{eqn:problem6} if it satisfies  $(\hat{\x},\hat{\bw},\hat{\bz})\in\mathcal{F}$ and 
\begin{equation}\label{optcondition}
\begin{aligned}
(\nabla f(\bw)+&\nabla g (\bw))^\mathsf{T}(\bw-\hat{\bw})+\nabla h(\x)^\mathsf{T} (\x-\hat{\x})\geq~0,\\&\forall ~(\x,\bw)\in\{(\x,\bw)\mid\exists~ \bz, (\x,\bw,\bz)\in\mathcal{F}\},
\end{aligned}
\end{equation}
where $$\mathcal{F}:=\left\{(\x,\bw,\bz)\mid \bC\x-\bz=\bb, \bA\x-\bw=\mathbf{0}, \x\in\mathcal{X}, \bz\in\mathcal{Z}\right\}$$ is the feasible region of \eqref{eqn:problem6}.
\end{definition}
 Now we present the convergence result of Algorithm \ref{alg:alm} in the following Theorem  \ref{theorem1}. .
\begin{theorem}\label{theorem1}
Any limit point $(\bar{\x},\bar{\bw},\bar{\bz})$ of $\{(\x^m,\bw^m,\bz^m)\}$ generated by Algorithm \ref{alg:alm} is a stationary point of \eqref{eqn:problem6}. 
 Moreover, $$\lim_{m\to\infty}\text{\normalfont{dist}}((\bx^m,\bw^m,\bz^m),\mathcal{S}^*)=0,$$ where $\mathcal{S}^*$ is the set of stationary points of \eqref{eqn:problem6}.
\end{theorem}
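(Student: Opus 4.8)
The plan is to follow the two-stage strategy standard for safeguarded inexact ALM: first show that every limit point is feasible and satisfies the variational inequality \eqref{optcondition}, and then upgrade this to the distance statement by a compactness argument. I would begin by recording the facts that keep the sequences well behaved. The set $\mathcal{X}$ is compact (a product of convex hulls of the finite set $\mathcal{X}_L^{\RR}$), so $\{\x^m\}$ is bounded; the safeguard in \eqref{multiplier} confines $\{\bmu^m\}$ and $\{\bnu^m\}$ to the fixed boxes, so the multipliers are bounded; and the augmented Lagrangian is coercive in the auxiliary blocks (the quartic $f$ dominates, and in fact $f+g\ge0$ by Cauchy--Schwarz since $\sum_q c_q^2=1$), which together with the monotone decrease produced by BSUM keeps $\{\bw^m\}$ and $\{\bz^m\}$ bounded. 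The pivotal algebraic step is to introduce the \emph{trial} multipliers $\tilde{\bmu}^{m+1}=\bmu^{m}+\rho_\mu^m(\bC\x^{m+1}-\bz^{m+1}-\bb)$ and $\tilde{\bnu}^{m+1}=\bnu^{m}+\rho_\nu^m(\bA\x^{m+1}-\bw^{m+1})$: after absorbing the penalty-gradient terms of $\mathcal{L}_{\boldsymbol{\rho}^m}$ into them, the $\epsilon_m$-stationarity \eqref{inexactcondition} turns into an approximate KKT system for \eqref{eqn:problem6}, namely $\nabla h(\x^{m+1})+\bC^\mathsf{T}\tilde{\bmu}^{m+1}+\bA^\mathsf{T}\tilde{\bnu}^{m+1}+\bd_x^{m+1}$, $\nabla f(\bw^{m+1})+\nabla g(\bw^{m+1})-\tilde{\bnu}^{m+1}$, and $-\tilde{\bmu}^{m+1}+\bd_z^{m+1}$ are all $O(\epsilon_m)$, for some $\bd_x^{m+1}\in\partial\mathbb{I}_{\mathcal{X}}(\x^{m+1})$ and $\bd_z^{m+1}\in\partial\mathbb{I}_{\mathcal{Z}}(\bz^{m+1})$.

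\emph{Feasibility of limit points.} Fix a subsequence with $(\x^m,\bw^m,\bz^m)\to(\bar\x,\bar\bw,\bar\bz)$; closedness of $\mathcal{X}$ and $\mathcal{Z}$ gives $\bar\x\in\mathcal{X}$ and $\bar\bz\in\mathcal{Z}$, and it remains to prove $\bC\bar\x-\bar\bz=\bb$ and $\bA\bar\x-\bar\bw=\mathbf{0}$. I would split on $\{\boldsymbol{\rho}^m\}$. If it is bounded, then by \eqref{rho} it is increased only finitely often and is eventually a constant $\boldsymbol{\rho}^{*}$; for all large $m$ the ``otherwise'' branch holds, so the weighted residual $V^m:=(\sqrt{\rho_\mu^{m}}(\bC\x^{m}-\bz^{m}-\bb),\,\sqrt{\rho_\nu^{m}}(\bA\x^{m}-\bw^{m}))$ of \eqref{rho} obeys $\|V^{m+1}\|<\delta\|V^m\|$ with $\delta\in(0,1)$ and fixed weights, whence $\|V^m\|\to0$ and the residuals vanish. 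If instead $\boldsymbol{\rho}^m\to\infty$, I would divide the $\bw$- and $\bz$-blocks of the reformulated stationarity by $\rho_\nu^m$ and $\rho_\mu^m$; since $\nabla f,\nabla g$ are bounded on the bounded $\{\bw^m\}$, the multipliers are bounded, and $\epsilon_m\to0$, the residuals $\bA\x^{m+1}-\bw^{m+1}$ and $\bC\x^{m+1}-\bz^{m+1}-\bb$ are squeezed to $\mathbf{0}$, again yielding feasibility.

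\emph{Stationarity and the distance claim.} Once $(\bar\x,\bar\bw,\bar\bz)$ is feasible, the bounded trial multipliers admit a convergent subsequence $\tilde{\bmu}^{m+1}\to\bar{\bmu}$, $\tilde{\bnu}^{m+1}\to\bar{\bnu}$. Passing the reformulated approximate KKT system to the limit---using continuity of $\nabla f,\nabla g,\nabla h$, outer semicontinuity of the normal cones $\partial\mathbb{I}_{\mathcal{X}}$ and $\partial\mathbb{I}_{\mathcal{Z}}$, and $\epsilon_m\to0$---produces the exact KKT conditions at $(\bar\x,\bar\bw,\bar\bz,\bar{\bmu},\bar{\bnu})$, from which the variational inequality \eqref{optcondition} follows after eliminating $\bar{\bmu},\bar{\bnu}$ through feasibility and the sign/complementarity conditions attached to $\mathcal{Z}$; hence $(\bar\x,\bar\bw,\bar\bz)$ is a stationary point of \eqref{eqn:problem6}. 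The second assertion then follows from a routine compactness argument: the whole sequence is bounded, so if $\text{dist}((\x^m,\bw^m,\bz^m),\mathcal{S}^{*})\not\to0$ some subsequence would stay at distance $\ge\varepsilon$ from the closed set $\mathcal{S}^{*}$, yet by the first part it has a further subsequence converging into $\mathcal{S}^{*}$, a contradiction.

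\emph{Main obstacle.} I expect the crux to be the feasibility step when $\boldsymbol{\rho}^m\to\infty$. Because the objective is nonconvex and the subproblems are solved only to approximate \emph{stationarity} (not global minimality) by BSUM, the textbook argument ``approximate minimizer of the penalized objective $\Rightarrow$ minimizer of the infeasibility measure'' is unavailable; one must instead exploit the problem-specific structure---the free quartic-coercive block $\bw$, the slack block $\bz$, and the compactness of $\mathcal{X}$---to bound the auxiliary iterates and drive the residuals to zero directly from the block stationarity conditions. A secondary technical point is the careful bookkeeping of the inexactness, ensuring that every $O(\epsilon_m)$ term genuinely vanishes in the limit despite the growing penalty.
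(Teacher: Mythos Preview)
Your overall plan---feasibility first, then stationarity via an approximate KKT system, then the distance statement by compactness---matches the paper's structure, and your treatment of the $\bw$-residual in the unbounded-penalty case is exactly right. There are, however, two genuine gaps.

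\textbf{Feasibility from the $\bz$-block.} Dividing the $\bz$-stationarity by $\rho_\mu^m$ does \emph{not} squeeze the residual $\bC\x^{m+1}-\bz^{m+1}-\bb$ to zero, because $\partial\mathbb{I}_{\mathcal{Z}}(\bz^{m+1})$ is a cone: dividing it by any positive scalar leaves it unchanged. In the limit you obtain only $\langle \bar\bz-\bC\bar\x+\bb,\bz-\bar\bz\rangle\ge 0$ for all $\bz\ge\mathbf{0}$, which says $\bar\bz=[\bC\bar\x-\bb]_+$ and still permits $(\bC\bar\x-\bb)_i<0$ at indices with $\bar z_i=0$. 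The paper closes this by \emph{also} dividing the $\x$-block inequality by $\rho^m$; after substituting $\bA\bar\x=\bar\bw$ it reads $\langle\bC^{\mathsf{T}}(\bC\bar\x-\bar\bz-\bb),\x-\bar\x\rangle\ge 0$ for all $\x\in\mathcal{X}$, and adding it to the $\bz$-inequality yields $\|\bC\bar\x-\bar\bz-\bb\|\le\|\bC\x-\bz-\bb\|$ for every $\x\in\mathcal{X}$, $\bz\in\mathcal{Z}$. Only then, and only under the (implicit) assumption $\mathcal{F}\neq\emptyset$, does the residual vanish. Your outline skips this combination, and without it the argument fails.

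\textbf{Boundedness of trial multipliers.} For stationarity you invoke a convergent subsequence of $(\tilde{\bmu}^{m+1},\tilde{\bnu}^{m+1})$. The $\bw$-block indeed pins down $\tilde{\bnu}^{m+1}=\nabla f(\bw^{m+1})+\nabla g(\bw^{m+1})+O(\epsilon_m)$, which is bounded. But no such identity is available for $\tilde{\bmu}^{m+1}$: the $\bz$-block gives only $-\tilde{\bmu}^{m+1}\in N_{\mathcal{Z}}(\bz^{m+1})+O(\epsilon_m)$, a one-sided bound, and the $\x$-block ties $\bC^{\mathsf{T}}\tilde{\bmu}^{m+1}$ to an element of the unbounded cone $N_{\mathcal{X}}(\x^{m+1})$. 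When $\rho^m\to\infty$ there is no reason for $\tilde{\bmu}^{m+1}=\bmu^m+\rho_\mu^m(\bC\x^{m+1}-\bz^{m+1}-\bb)$ to stay bounded. The paper avoids trial multipliers altogether: it takes the inner product of the $\bw$-equation with $\bw-\bw^{m+1}$ for a \emph{feasible} test point $(\x,\bw,\bz)\in\mathcal{F}$ and adds the $\x$- and $\bz$-variational inequalities; the multiplier and penalty contributions then collapse to $\langle\bmu^m,\bC\x^{m+1}-\bz^{m+1}-\bb\rangle+\rho_\mu^m\|\bC\x^{m+1}-\bz^{m+1}-\bb\|^2$ and the analogous $\bnu$-terms, all bounded below by quantities that vanish in the limit. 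This delivers \eqref{optcondition} directly, without any multiplier extraction or constraint qualification.

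A smaller point: $f+g$ is \emph{not} coercive (equality in your Cauchy--Schwarz step occurs whenever $\sum_t\|\bw_{t,q}\|^2\propto c_q$), so the boundedness of $\{\bw^m\}$ really rests on the quadratic penalty $\tfrac{\rho_\nu}{2}\|\bA\x-\bw\|^2$. The paper proves uniform-in-$m$ boundedness by a direct and rather delicate analysis of the cubic $\bw$-update (its Lemma~\ref{bounded}); your descent-plus-coercivity sketch would need a comparable argument to ensure the bound does not degrade as $m$ grows.
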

\begin{proof}
See Appendix \ref{appendixC}.
\end{proof}
We remark that the above  result highly depends on the boundedness of the sequences $\{(\x^m,\bw^m,\bz^m)\}$ and $\{(\bmu^m,\bnu^m)\}$ generated by Algorithm \ref{alg:alm}. In sharp contrast to classical convergence results for the ALM where the boundedness is typically imposed as an uncheckable assumption \cite{nonlinear}, in the proposed inexact ALM algorithm, the boundedness of the Lagrange multipliers is naturally ensured by the update rule in \eqref{multiplier}, and the boundedness of the primal variables is guaranteed by the specialized BSUM algorithm  for solving the ALM subproblem; see Section \ref{sec:BSUM} and Appendix \ref{appendixB} for detailed discussions. 

It is  worth mentioning that a similar inexact ALM framework has been proposed in \cite{boundALM}.  However, the problem considered in \cite{boundALM} is much simpler than \eqref{eqn:problem6}. In addition, the condition that guarantees the convergence in \cite{boundALM} is more restrictive than that in the proposed inexact ALM algorithm. In particular, in \cite{boundALM}, $\epsilon_m$ is  required to satisfy $\epsilon_m\rho^m\to0$    (where $\rho^m$ is the corresponding penalty parameter), while we only require $\epsilon_m\to 0$.

\subsection{BSUM Algorithm for Solving Subproblem \eqref{ALMsubproblem}}\label{sec:BSUM}
The efficiency of the proposed inexact ALM algorithm is mainly determined by the efficiency of solving the ALM subproblem \eqref{ALMsubproblem} to obtain an $\epsilon_m$-stationary point. In this subsection, we propose an efficient BSUM algorithm for solving \eqref{ALMsubproblem} by judiciously exploiting the problem's special structure.
Since $(\bmu^{m},\bnu^{m})$ are treated as fixed parameters when solving \eqref{ALMsubproblem},  we denote 
\begin{equation}\label{def:lm}
\mathcal{L}_m(\x,\bw,\bz):=\mathcal{L}_{\boldsymbol{\rho}^m}(\x,\bw,\bz;\bmu^{m},\bnu^{m})
\end{equation} in this subsection for simplicity of notations.

Noting that the constraints in  \eqref{ALMsubproblem} are separable within each block of variables $\{\x,\bw,\bz\}$, it is convenient to update the variables in an alternating fashion to solve the problem.  However, the objective function is complicated (even with respect to each block of variables) and cannot be efficiently optimized within each block. This motivates us to apply a BSUM framework, which minimizes an  upper bound of the objective function for each block at each iteration.  Specifically, the BSUM framework is given by 
 \begin{subequations}
 \begin{align}
 \x^{(r+1)}&\in\arg\min_{\x\in\mathcal{X}}~u_x(\x;\x^{(r)},\bw^{(r)},\bz^{(r)});\label{bsum1}\\
  \bw^{(r+1)}&\in\arg\min~u_w(\bw;\x^{(r+1)},\bw^{(r)},\bz^{(r)});\\
    \bz^{(r+1)}&\in\arg\min_{\bz\in\mathcal{Z}}~u_z(\bz;\x^{(r+1)},\bw^{(r+1)},\bz^{(r)})\label{bsum3},
 \end{align}
 \end{subequations}
 where $u_x, u_w,$ and $u_z$ are locally tight upper bounds of $\mathcal{L}_m$ in \eqref{def:lm}, i.e., $u_x(\x;\x^{(r)},\bw^{(r)},\bz^{(r)})\geq \mathcal{L}_m(\x,\bw^{(r)},\bz^{(r)})$ for all $\x\in\mathcal{X}$ with equality  holding at $\x=\x^{(r)}$ (similar properties hold for $u_w$ and $u_z$). Please see \cite{BSUM} for more detailed discussions on BSUM. Here we use $\{(\x^{(r)},\bw^{(r)},\bz^{(r)})\}$ to denote the inner sequence generated by the BSUM framework for solving the ALM subproblem, which is different from the outer sequence $\{(\bx^m,\bw^m,\bz^m)\}$ generated by the inexact ALM.   
 
In the following, we tailor a BSUM algorithm for solving problem \eqref{ALMsubproblem}. The key is to design the upper bounds $u_x,u_w,u_z$ appropriately such that they are good approximations of the objective function $\mathcal{L}_m$ and at the same time the subproblems \eqref{bsum1}--\eqref{bsum3} admit closed-form solutions or can be efficiently solved. Next we discuss the updates of the $\x$-, $\bw$-, and $\bz$-blocks separately. 

 \textbf{Update of variable $\x$}. The difficulty in updating variable $\x$ mainly arises from the complicated constraint $\x\in\mathcal{X}$ defined below \eqref{eqn:problem6}. Notably,  the set $\mathcal{X}$ is decoupled over $\{\x_{t,n}^\RR\}$. To leverage this property, we construct a separable upper bound  function of $\mathcal{L}_m(\x,\bw^{(r)},\bz^{(r)})$ as follows: 
\begin{equation}\label{ux}
\begin{aligned}&u_x(\x; \x^{(r)},\bw^{(r)},\bz^{(r)})= \mathcal{L}_m(\x^{(r)},\bw^{(r)},\bz^{(r)})\\&\quad~+\nabla_\x \mathcal{L}_m(\x^{(r)},\bw^{(r)},\bz^{(r)})^\mathsf{T}(\x-\x^{(r)})+\frac{\gamma^m}{2}\|\x-\x^{(r)}\|^2.\\
\end{aligned}
\end{equation}
The above $u_x(\x; \x^{(r)},\bw^{(r)},\bz^{(r)})$  is exactly the first-order Taylor expansion of $\mathcal{L}_m(\x, \bw^{(r)}, \bz^{(r)})$ at $\x=\x^{(r)}$ plus a quadratic regularization term, which equals $\mathcal{L}_m(\x,\bw^{(r)},\bz^{(r)})$ at $\x=\x^{(r)}$. To ensure that it serves as an upper bound of $\mathcal{L}_m(\x, \bw^{(r)}, \bz^{(r)})$,  we set $\gamma^m= \|\rho^m_\mu\bC^\mathsf{T}\bC+\rho_\nu^m{\bA}^\mathsf{T}{\bA}\|$. 

With $u_x(\x;\x^{(r)},\bw^{(r)},\bz^{(r)})$ defined in \eqref{ux} and by ignoring the constant terms, the $\x$-subproblem can be expressed as 
\begin{equation}\label{updateofx}
\begin{aligned}
\min_{\x\in\mathcal{X}}\nabla_{\x}\mathcal{L}_{m}(\x^{(r)}\hspace{-0.05cm},\bw^{(r)}\hspace{-0.05cm},\bz^{(r)})^{\mathsf{T}}\x\hspace{-0.05cm}+\hspace{-0.07cm}\frac{\gamma^m}{2}\|\x-\x^{(r)}\|^2\hspace{-0.05cm},
\end{aligned}
\end{equation}
whose solution is given by
\begin{equation}\label{xsolution}
\begin{aligned}
\x^{(r+1)}=&\mathcal{P}_{\mathcal{X}}\left(\x^{(r)}-\frac{\nabla\,\mathcal{L}_{m}(\x^{(r)},\bw^{(r)},\bz^{(r)})}{\gamma^m}\right).
\end{aligned}
\end{equation}
 According to the definition of  $\mathcal{X}$, the projection onto $\mathcal{X}$ can be decomposed into $NT$ of projections onto $\text{conv}(\mathcal{X}_L^\RR)$, which admit closed-form expressions \cite{GEMM}.
   
  \textbf{Update of variable $\bw$.} The functions $f(\bw)$ and $g(\bw)$ involved in the ALM subproblem are both quartic in $\bw$, making the $\bw$-update challenging.  In particular, $g(\bw)$ is concave and couples all the components in $\bw$. To simplify the objective function while best preserving its structure, we linearize the concave quartic term $g(\bw)$ and keep all the convex terms unchanged, which gives the following locally tight upper bound of $\mathcal{L}_m(\x^{(r+1)},\bw,\bz^{(r)})$:
  \begin{equation*}
\begin{aligned}
&u_w(\bw;\bx^{(r+1)},\bw^{(r)},\bz^{(r)})\\
=\,&f(\bw)+g(\bw^{(r)})+\nabla g(\bw^{(r)})^\mathsf{T}(\bw-\bw^{(r)})+h(\bx^{(r+1)})\\&+\hspace{-0.05cm}(\bmu^{m})^\mathsf{T}(\bC\x^{(r+1)}\hspace{-0.06cm}-\bz^{(r)}-\bb)+(\bnu^{m})^\mathsf{T}(\bA\x^{(r+1)}\hspace{-0.05cm}-\bw)\\&+\frac{\rho_\mu^{m}}{2}\|\bC\x^{(r+1)}-\bz^{(r)}-\bb\|^2+\frac{\rho_\nu^m}{2}\|\bA\x^{(r+1)}-\bw\|^2.
\end{aligned}
\end{equation*} 
Then, by omitting the constant terms, the $\bw$-subproblem reads
 \begin{equation}\label{wupdate}
 \begin{aligned}
\min_\bw \,f(\bw)+(\nabla g(\bw^{(r)})-\bnu^{m})^\mathsf{T}\bw+\frac{\rho_v^{m}}{2}\|\bA\x^{(r+1)}-\bw\|^2.
 \end{aligned}
 \end{equation}
Note that in the objective function of \eqref{wupdate}, $f(\bw)$, i.e., the first term of the objective function of \eqref{eqn:problem5},  is separable over the index $q\in[Q]$, and both of the last two terms  are fully separable. Hence, the problem can be decomposed into $Q$ individual subproblems. 

  Let $\bw_{(q)}:=[\bw_{1,q}^\mathsf{T},\bw_{2,q}^\mathsf{T},\dots,\bw_{T,q}^\mathsf{T}]^\mathsf{T}$ denote the vector collecting all the components in $\bw$ related to $q$ and let $\boldsymbol{\xi}_q^{(r)}$ be the coefficient of the linear term in \eqref{wupdate} related to $\bw_{(q)}$.
   Then for each  $q\in[Q]$, the $\bw_{(q)}$-subproblem has the following form:
\begin{equation}\label{wq2}\arg\min_{\bw_{(q)}}\|\bw_{(q)}\|^4+\frac{\rho_v^m}{2}\|\bw_{(q)}\|^2+(\boldsymbol{\xi}_{{q}}^{(r)})^\mathsf{T}\bw_{(q)}.\end{equation}
Since the first two terms in the objective function of \eqref{wq2} are irrelevant to the angle of $\bw_{(q)}$, it follows that the optimal angle of $\bw_{(q)}$ is $-\boldsymbol{\xi}_{{q}}^{(r)}/\|\boldsymbol{\xi}_{{q}}^{(r)}\|$.  Therefore, we only need to optimize the norm of $\bw_{(q)}$, i.e., solving a univariate quartic problem as follows: 
 \begin{equation}\label{normw}
 \beta_q^{(r+1)}\in\arg\min_{\beta\geq 0}~\left\{\beta^4+\frac{\rho_v^m}{2}\beta^2-\left\|\boldsymbol{\xi}_{{q}}^{(r)}\right\|\beta\right\}.
 \end{equation} 
By setting the derivative of the quartic objective function in \eqref{normw} to zero, we obtain the following cubic equation:
   $$ 4\beta^3+\rho_v^m\beta-\left\|\boldsymbol{\xi}_{{q}}^{(r)}\right\|=0.$$
The solution of \eqref{normw} is the unique positive root of the above equation, whose explicit expression is given by \vspace{-0.2cm} 
$$\begin{aligned}\beta_q^{(r+1)}=&\left(\frac{\|\boldsymbol{\xi}_{{q}}^{(r)}\|}{8}+\left(\left(\frac{\|\boldsymbol{\xi}_{{q}}^{(r)}\|}{8}\right)^2+\left(\frac{\rho_\nu^m}{12}\right)^3\right)^{\frac{1}{2}}\right)^{\frac{1}{3}}\\&+\left(\frac{\|\boldsymbol{\xi}_{{q}}^{(r)}\|}{8}-\left(\left(\frac{\|\boldsymbol{\xi}_{{q}}^{(r)}\|}{8}\right)^2+\left(\frac{\rho_\nu^m}{12}\right)^3\right)^{\frac{1}{2}}\right)^{\frac{1}{3}}.\end{aligned}$$ 
Combining the above discussions, we obtain the optimal solution of \eqref{wq2}: \vspace{-0.2cm}
\begin{equation}\label{wsolution}
\bw_{(q)}^{(r+1)}=-\frac{\beta_q^{(r+1)}}{\|\boldsymbol{\xi}_{{q}}^{(r)}\|} \boldsymbol{\xi}_{{q}}^{(r)},~\forall~q\in[Q].
\end{equation}

 \textbf{Update of variable $\bz$}. The ALM subproblem with respect to  $\bz$ is  straightforward to solve. In particular,  we simply set 
 $$u_z(\bz; \bx^{(r+1)}, \bw^{(r+1)}, \bz^{(r)})=\mathcal{L}_m(\bx^{(r+1)},\bw^{(r+1)},\bz).$$
The  $\bz$-subproblem\vspace{-0.1cm}
 \begin{equation}\label{updatez}
\min_{\bz\in\mathcal{Z}}\,-(\bmu^{m})^\mathsf{T}\bz+\frac{\rho_\mu^m}{2}\|\bz-\bC\x^{(r+1)}+\bb\|^2
\end{equation}
 admits a closed-form solution as 
\begin{equation}\label{zsolution}
\bz^{(r+1)}=\mathcal{P}_{\{\bz\geq\mathbf{0}\}}\left(\bC\x^{(r+1)}-\bb+\frac{\bmu^{m}}{\rho_\mu^m}\right).
\end{equation}

The proposed BSUM algorithm for solving the ALM subproblem \eqref{ALMsubproblem} is summarized in Algorithm \ref{alg:BSUM}. Since all variables admit closed-form updates, the BSUM algorithm can be performed  efficiently. Note that in Algorithm \ref{alg:BSUM}, we set the initial point $(\x^{(0)}, \bw^{(0)}, \bz^{(0)})$ as the approximate stationary point of the previous ALM subproblem. This warm-start strategy is able to accelerate the convergence of BSUM, thereby enhancing the numerical  efficiency of the inexact ALM algorithm.  As will be demonstrated in the simulation, Algorithm \ref{alg:BSUM} equipped with the warm-start strategy can generally terminate within only a few iterations. 
\begin{algorithm}[t]
\caption{BSUM algorithm for solving \eqref{ALMsubproblem}}\label{alg:BSUM}
	\small
	\begin{algorithmic}[1]
		\STATE \textbf{Input:} $(\x^{m}, \bw^{m}, \bz^{m})$, $(\bmu^{m}, \bnu^{m}), \boldsymbol{\rho}^m=(\rho_\mu^m,\rho_\nu^m), $ $\gamma^m= \|\rho^m_\mu\bC^\mathsf{T}\bC+\rho_\nu^m{\bA}^\mathsf{T}{\bA}\|$, $\epsilon_m$.
		\STATE \textbf{Initialize}: $(\x^{(0)}, \bw^{(0)}, \bz^{(0)})=(\x^{m}, \bw^{m}, \bz^{m})$, $r=0$.
		                 \REPEAT
		\STATE Update $\x^{(r+1)}$, $\bw^{(r+1)}$, and $\bz^{(r+1)}$ by \eqref{xsolution}, \eqref{wsolution}, and \eqref{zsolution}, respectively.
		\STATE Set $r=r+1$.
		\UNTIL $(\x^{(r)},\bw^{(r)},\bz^{(r)})$ is an $\epsilon_m$-stationary point of \eqref{ALMsubproblem}.
		\STATE \textbf{Output:} $(\x^{m+1},\bw^{m+1},\bz^{m+1})=(\x^{(r)},\bw^{(r)},\bz^{(r)})$.
	\end{algorithmic} \label{nl1p}
	
	\vspace{-0.05cm}
\end{algorithm}

 In the following proposition, we theoretically show  that Algorithm \ref{alg:BSUM} can terminate within a finite number of iterations, i.e.,  an $\epsilon_m$-stationary point of \eqref{ALMsubproblem} can be found within a finite number of iterations.   Note that according to the update rule of $\boldsymbol{\rho}$ in \eqref{rho}, $\rho_\mu$ and $\rho_\nu$  are increased consistently, and thus we can express $\boldsymbol{\rho}^m$ as $\boldsymbol{\rho}^m=(a_\mu,a_\nu){\rho}^m$ with $a_\mu>0$ and $a_\nu>0$ being two constants independent of $m$. With this notation, the  iteration complexity of Algorithm \ref{alg:BSUM} is given as follows.

\begin{proposition}\label{proposition2}
Algorithm \ref{alg:BSUM} can return an $\epsilon_m$-stationary point of \eqref{ALMsubproblem}  within $\mathcal{O}\left(\frac{(\rho^m)^2}{\epsilon_m^2}\right)$ iterations.
\end{proposition}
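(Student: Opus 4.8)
The plan is to run the standard ``sufficient-decrease plus stationarity-residual'' analysis for the three-block BSUM, but with careful bookkeeping of how every constant scales with $\rho^m$. Throughout I abbreviate $\mathcal{L}_m^{(r)}:=\mathcal{L}_m(\x^{(r)},\bw^{(r)},\bz^{(r)})$ and $\Delta\x^{(r)}:=\x^{(r+1)}-\x^{(r)}$ (and likewise $\Delta\bw^{(r)},\Delta\bz^{(r)}$). First I would exploit that each of the three surrogates is strongly convex in its own block: $u_x$ in \eqref{ux} is $\gamma^m$-strongly convex; the $\bw$-surrogate is $\rho_\nu^m$-strongly convex because $f(\bw)$ is convex and the retained penalty $\tfrac{\rho_\nu^m}{2}\|\bA\x^{(r+1)}-\bw\|^2$ contributes the modulus; and the exact $\bz$-subproblem \eqref{updatez} is $\rho_\mu^m$-strongly convex. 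Combining the optimality of each closed-form update \eqref{xsolution}, \eqref{wsolution}, \eqref{zsolution} with the tightness and upper-bound properties of the surrogates gives the sufficient-decrease inequality
\begin{equation*}
\mathcal{L}_m^{(r)}-\mathcal{L}_m^{(r+1)}\geq \tfrac{1}{2}\big(\gamma^m\|\Delta\x^{(r)}\|^2+\rho_\nu^m\|\Delta\bw^{(r)}\|^2+\rho_\mu^m\|\Delta\bz^{(r)}\|^2\big).
\end{equation*}
Since $\boldsymbol{\rho}^m=(a_\mu,a_\nu)\rho^m$ and $\gamma^m=\|a_\mu\bC^\mathsf{T}\bC+a_\nu\bA^\mathsf{T}\bA\|\,\rho^m$, all three constants are $\Theta(\rho^m)$, so the right-hand side is at least $\underline{c}\,\rho^m\big(\|\Delta\x^{(r)}\|^2+\|\Delta\bw^{(r)}\|^2+\|\Delta\bz^{(r)}\|^2\big)$ for a constant $\underline{c}>0$ independent of $m$.

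Next I would bound the stationarity residual $\mathcal{G}^{(r+1)}$, i.e.\ the left-hand side of \eqref{inexactcondition} evaluated at $(\x^{(r+1)},\bw^{(r+1)},\bz^{(r+1)})$. Writing the first-order optimality condition of each subproblem and subtracting it from the corresponding partial gradient of $\mathcal{L}_m$ at the \emph{updated} point, the $\bz$-residual vanishes (its update minimizes $\mathcal{L}_m$ exactly in $\bz$, and $\bz$ is updated last), the $\bw$-residual equals $\nabla g(\bw^{(r+1)})-\nabla g(\bw^{(r)})$ (only the concave term was linearized, and $\bw$ does not couple with $\bz$ in $\mathcal{L}_m$), and the $\x$-residual equals $\nabla_\x\mathcal{L}_m(\x^{(r+1)},\bw^{(r+1)},\bz^{(r+1)})-\nabla_\x\mathcal{L}_m(\x^{(r)},\bw^{(r)},\bz^{(r)})-\gamma^m\Delta\x^{(r)}$. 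The crucial point is that the iterates remain in a compact set (guaranteed by the design of Algorithm~\ref{alg:BSUM}; see Appendix~\ref{appendixB}), on which the cubic gradients $\nabla f,\nabla g$ and the linear $\nabla h$ are Lipschitz with constants independent of $\rho^m$, whereas the penalty contributions to $\nabla_\x\mathcal{L}_m$ carry an explicit factor $\rho^m$. Hence $\mathcal{G}^{(r+1)}\leq C\rho^m\big(\|\Delta\x^{(r)}\|+\|\Delta\bw^{(r)}\|+\|\Delta\bz^{(r)}\|\big)$ for a constant $C$ independent of $m$.

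Combining the two estimates through Cauchy--Schwarz yields
\begin{equation*}
(\mathcal{G}^{(r+1)})^2\leq 3C^2(\rho^m)^2\big(\|\Delta\x^{(r)}\|^2+\|\Delta\bw^{(r)}\|^2+\|\Delta\bz^{(r)}\|^2\big)\leq \tfrac{3C^2}{\underline{c}}\,\rho^m\big(\mathcal{L}_m^{(r)}-\mathcal{L}_m^{(r+1)}\big),
\end{equation*}
so one factor of $\rho^m$ cancels against the sufficient decrease. Telescoping over $r=0,\dots,R-1$ then gives $\min_{0\le r\le R-1}(\mathcal{G}^{(r+1)})^2\leq \tfrac{3C^2}{\underline{c}}\cdot\tfrac{\rho^m}{R}\,\Delta_m$, where $\Delta_m:=\mathcal{L}_m^{(0)}-\inf\mathcal{L}_m$.

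The final ingredient is that $\Delta_m=\mathcal{O}(\rho^m)$: on the compact feasible region the primal objective $f+g+h$ and the (safeguarded, hence bounded) multiplier terms are $\mathcal{O}(1)$ and $\mathcal{L}_m$ is bounded below by an $\mathcal{O}(1)$ constant, while at the warm-start point the quadratic penalty terms are at most $\mathcal{O}(\rho^m)$. Substituting $\Delta_m=\mathcal{O}(\rho^m)$ gives $\min_r(\mathcal{G}^{(r+1)})^2=\mathcal{O}\big((\rho^m)^2/R\big)$, so the residual falls below $\epsilon_m$ once $R=\mathcal{O}\big((\rho^m)^2/\epsilon_m^2\big)$, which is exactly the claim. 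The main obstacle is not the mechanical telescoping but the careful bookkeeping of the two distinct sources of the $\rho^m$ factor --- the $\rho^m$-scaling of the penalty gradients in the stationarity residual, and the $\mathcal{O}(\rho^m)$ size of the warm-start gap $\Delta_m$ --- together with the uniform boundedness of the iterates on which all Lipschitz estimates (for the quartic $f,g$) ultimately rest; this is where the argument is most delicate and where I would invoke Appendix~\ref{appendixB}.
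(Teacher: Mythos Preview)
Your proposal is correct and follows essentially the same three-step argument as the paper's Appendix~\ref{appendixD}: (i) a sufficient-decrease inequality with modulus $\Theta(\rho^m)$ from the strong convexity of each block surrogate, (ii) a stationarity-residual bound of order $\rho^m$ times the iterate increments (with the $\bz$-residual vanishing, the $\bw$-residual equal to $\nabla g(\bw^{(r+1)})-\nabla g(\bw^{(r)})$, and the $\x$-residual picking up the cross terms in $\Delta\bw,\Delta\bz$ through the penalty gradients), and (iii) telescoping together with $\Delta_m=\mathcal{O}(\rho^m)$ at the warm start. Your identification of the two distinct $\rho^m$ factors and the reliance on Appendix~\ref{appendixB} for uniform boundedness of the iterates (needed for the Lipschitz constants of the quartic $\nabla f,\nabla g$) mirrors the paper exactly.
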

\begin{proof}
See Appendix \ref{appendixD}.
\end{proof}
\section{Numerical Results}\label{sec:4}
In this section, we present simulation results to evaluate the performance of the proposed approach under various system configurations.  Each system configuration is represented by a tuple $(N, K, M, L)$, where $N$ and $K$ are the numbers of transmit antennas and  users in the system, respectively, $M$ is the constellation order, and $L$ is the quantization level.
 We generate the simulation parameters in the same way as in \cite{DFRC2, DFRC_CI1}. Specifically, the transmit power is fixed as $P=1$ and the SNR is defined as $1/\sigma^2$. We assume a Rayleigh fading communication channel $\bH$, whose elements are independently  drawn from $\mathcal{CN}(0, 1)$. 
 We assume that there are three interested targets located at  $\bar{\theta}_1=-40^\circ,~\bar{\theta}_2=0^\circ,$ and $\bar{\theta}_3=40^\circ$, respectively, and the desired beampattern is given by  
\begin{equation}\label{desire}d(\theta)=\left\{
\begin{aligned}
&1,~~\text{if }\theta\in\left[\bar{\theta}_i-\frac{\Delta_\theta}{2},\bar{\theta}_i+\frac{\Delta_\theta}{2}\right],~i=1,2,3;\\
&0,\hspace{0.24cm}\text{otherwise},
\end{aligned}\right.
\end{equation}
where $\Delta_\theta=10^\circ$ is the beam width. The direction grids $\{\theta_q\}_{q\in[Q]}$ are uniformly sampled from $-90^\circ$ to $90^\circ$ with a resolution of $1^\circ$. The safety margin thresholds for different users at different time slots are set as the same, i.e., $b_{t,k}=b$ for all $t\in[T]$ and $k\in[K].$ The block length is set as $T=50$.

The implementation details of the proposed Algorithm \ref{alg:alm} are as follows. 
We set $\rho^0_\mu=0.01\sqrt{\lambda}$ and $\rho_\nu^0=\frac{1}{3}{\rho_\mu^0}$. Both $(\bx^0,\bw^0,\bz^0)$ and  $(\bmu^0, \bnu^0)$ are set as all-zero vectors. The bounds on the Lagrange multipliers are   given by $\mu_{\min}=\nu_{\min}=-10^3,~\mu_{\max}=\nu_{\max}=10^3$. Finally, $\tau=1.01, \delta=0.95,$ and  $\epsilon_m=\frac{1}{m}$. Algorithm \ref{alg:alm} is terminated when the iteration number is larger than $500$ or when 
\begin{equation}\label{stop}\max\{e^{m-1},\|\bC\x^m-\bz^m-\bb\|,\|\bA\x^m-\bw^m\|\}\leq \sqrt{T}\times 10^{-3}, \end{equation}
where 
$$
\begin{aligned}
e^{m-1}:=&\|\nabla_\x \mathcal{L}(\x^{m},\bw^m,\bz^m)-\nabla_\x \mathcal{L}_m(\x^{m-1},\bw^{m-1},\bz^{m-1})\\&~-\gamma^m(\x^{m}-\x^{m-1})\|+\|\nabla g(\bw^{m})-\nabla g(\bw^{m-1})\|\end{aligned}$$  is defined  to characterize the accuracy of solving the ALM subproblem (see \eqref{def:e2} and \eqref{def:e}). 
The above stopping criterion is employed based on the fact that when $\max\{{e}^{m-1},\|\bC\x^m-\bz^m-\bb\|,|\bA\x^m-\bw^m\|\}$ is small, the output of Algorithm \ref{alg:alm} approaches a stationary point of the penalty model \eqref{eqn:problem6}. 

In the following, we first show the convergence behavior of the proposed inexact ALM and BSUM algorithms  in Section \ref{sec:simulation1}. Then, we examine the impact of various problem parameters on the communication and radar performance in Section \ref{sec:simulation2}. Finally, we compare the performance of the proposed approach with existing SOTA ones in Section \ref{sec:simulation3}.
\begin{figure}
\subfigure[Inner loop.]{\includegraphics[width=0.48\columnwidth]{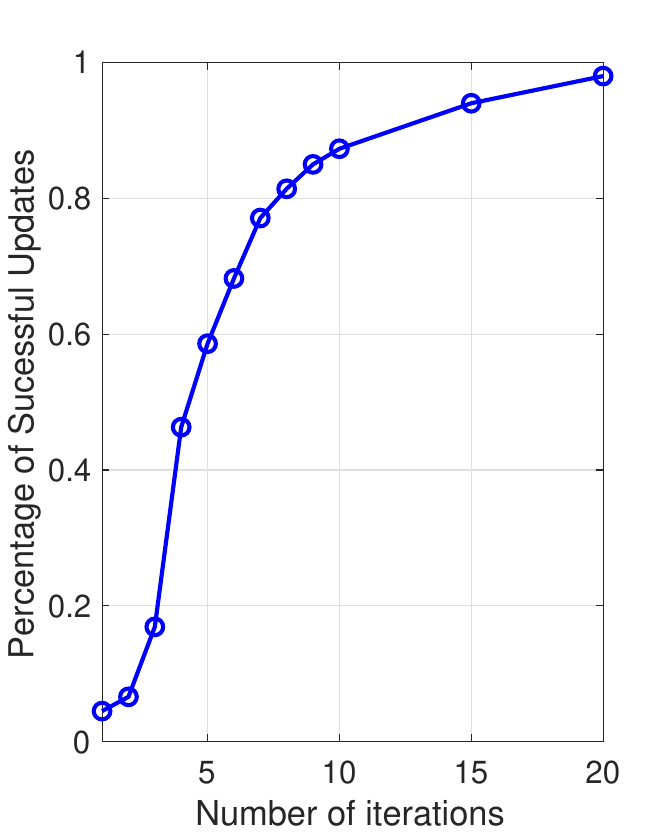}}
\subfigure[Outer loop.]{\includegraphics[width=0.48\columnwidth]{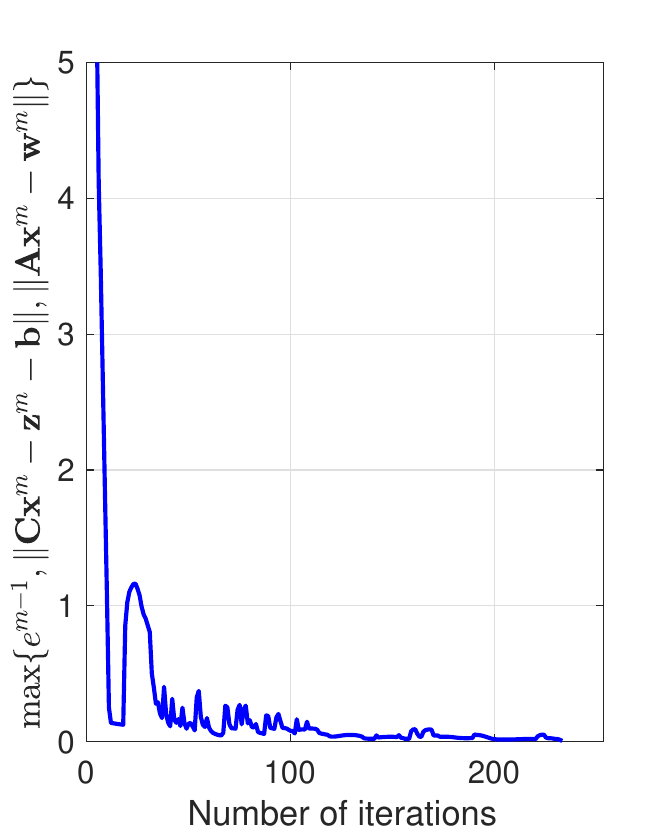}}
\centering
\vspace{-0.1cm}
\caption{Convergence behaviors of proposed Algorithms \ref{alg:alm} and \ref{alg:BSUM}.}
\label{converge}
\vspace{-0.3cm}
\end{figure}
\begin{figure}
\includegraphics[scale=0.38]{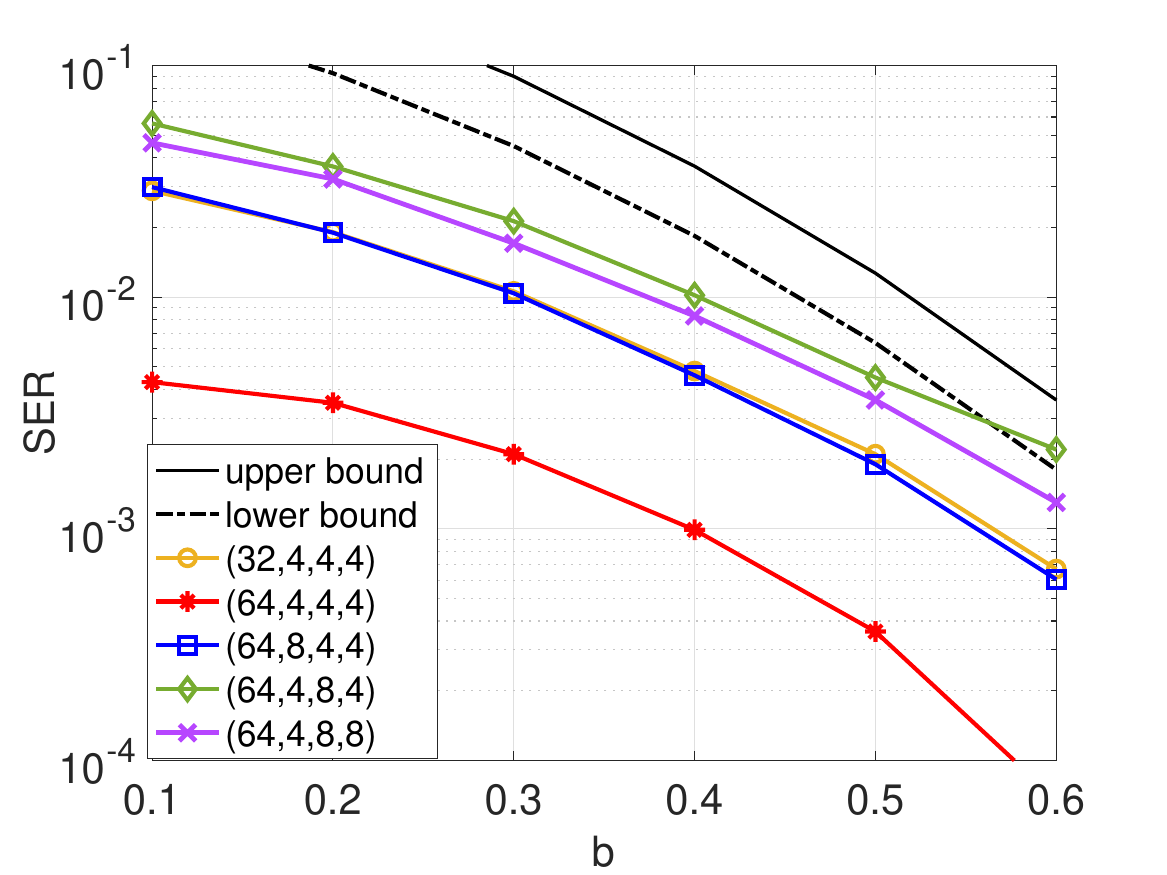}
\centering
\caption{Achieved SERs by the proposed approach versus the safety margin threshold $b$ under different system configurations, where the tuple in the legend represents the system parameter $(N,K,M,L)$ and the SNR is set as $10$ dB. The curves labeled ``upper bound'' and ``lower bound'' correspond to the theoretical upper and lower bounds on the SEP in \eqref{upperbound}, respectively.}
\label{SER_system}
\vspace{-0.4cm}
\end{figure}
\subsection{Convergence Behavior}\label{sec:simulation1}
\begin{figure*}
\subfigure[Different $N$.]{\includegraphics[width=0.5\columnwidth]{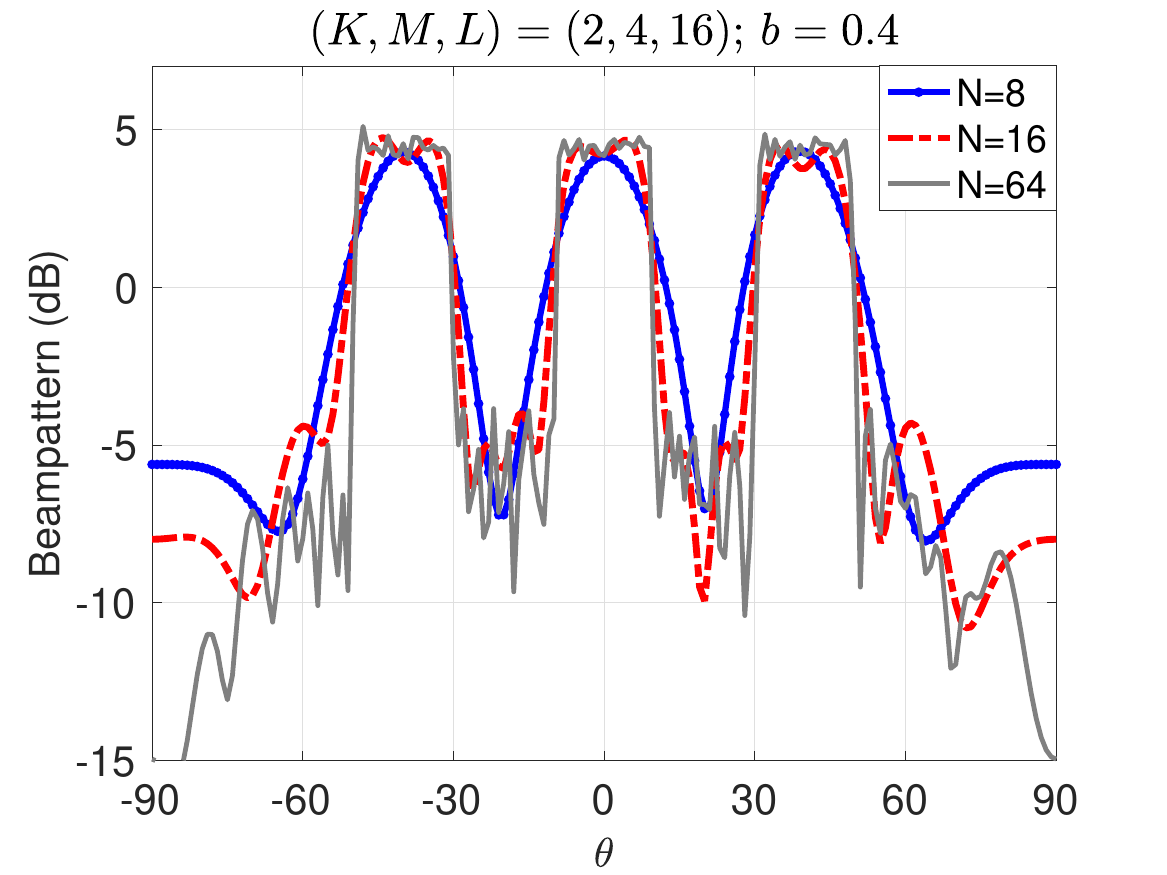}}
\subfigure[Different $K$.]{\includegraphics[width=0.5\columnwidth]{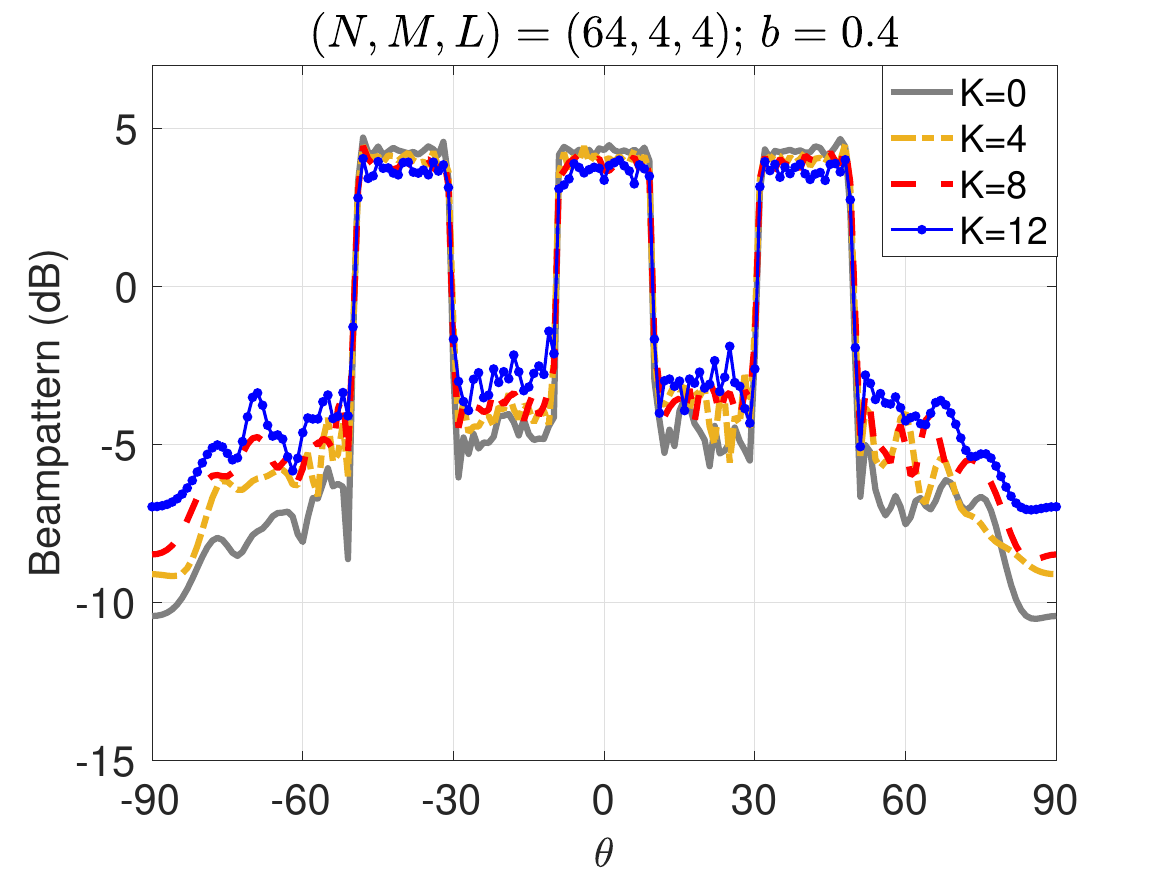}}
\subfigure[Different $L$.]{\includegraphics[width=0.5\columnwidth]{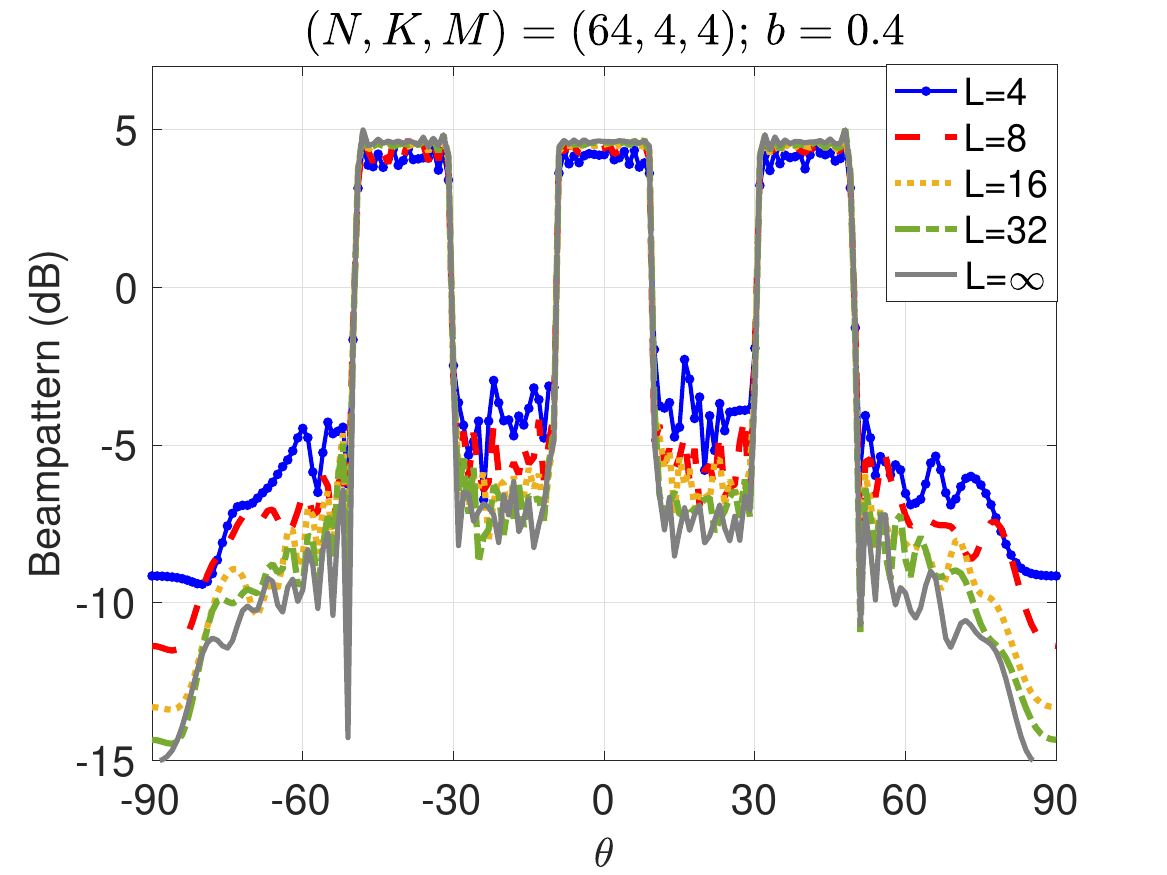}}
\subfigure[Different $b$.]{\includegraphics[width=0.5\columnwidth]{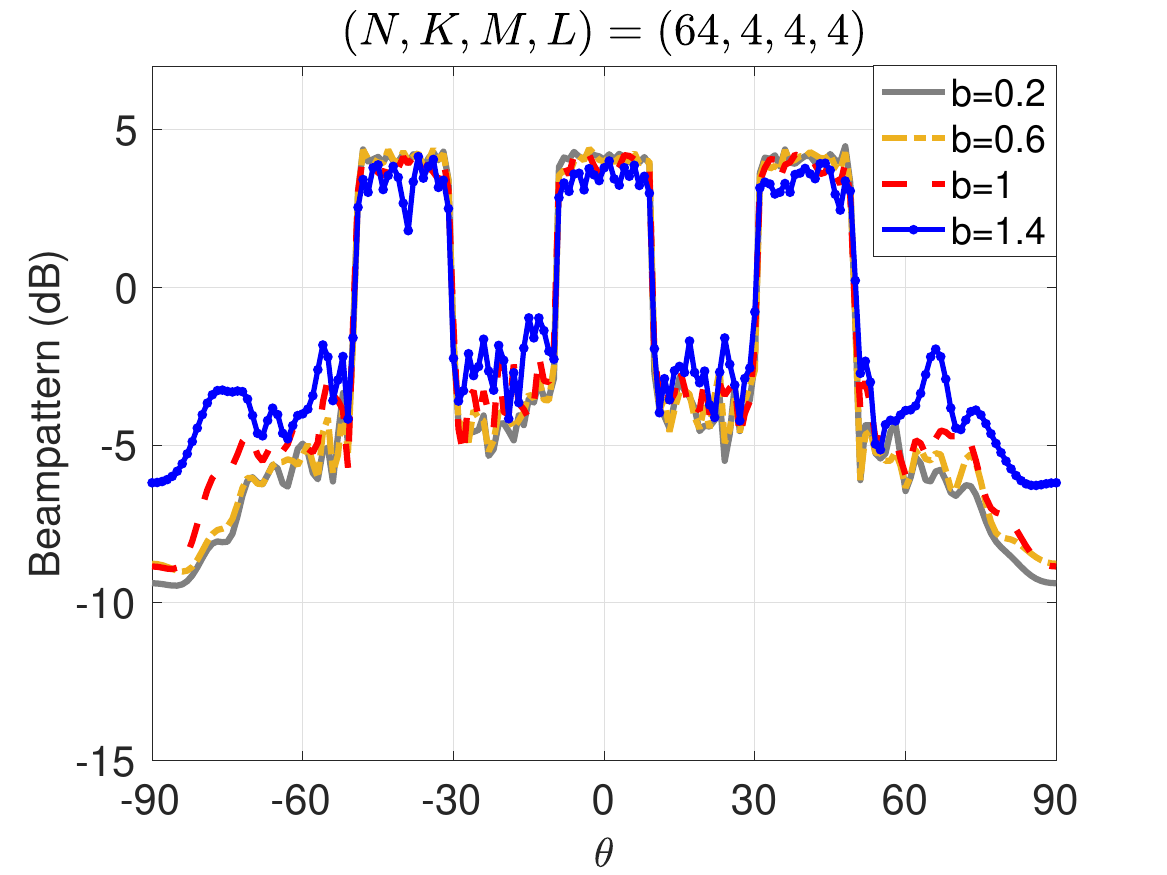}}

\centering
\caption{Achieved beampatterns by the proposed approach under different system configurations.}
\label{pattern}
\end{figure*}
We first test the convergence behaviors of proposed Algorithms \ref{alg:alm} and \ref{alg:BSUM} in Fig. \ref{converge}, where $(N,K,M,L)=(64,4,4,4)$ and $b=0.8$. Specifically, Fig. \ref{converge} (a) shows the  behavior of the inner loop, where proposed BSUM Algorithm \ref{alg:BSUM} is applied to solve the ALM subproblem.  Here,  we say that a successful update is achieved  if  Algorithm \ref{alg:BSUM} reaches an $\epsilon_m$-stationary point of the ALM subproblem within the specified number of iterations. As shown Fig. \ref{converge} (a),  an $\epsilon_m$-stationary point can be obtained within a very small number of inner iterations. Notably, over $80\%$ of the inner loops  terminate in fewer than $10$ iterations and almost all of the inner loops terminates within $20$ iterations. This, combined with the fact that all variables admit closed-form updates, demonstrates the efficiency of proposed BSUM Algorithm \ref{alg:BSUM}. The convergence behavior of the outer loop is presented in Fig. \ref{converge} (b).  As demonstrated, the proposed Algorithm \ref{alg:alm} converges rapidly at the early iterations and successfully meet the stopping criterion in \eqref{stop} within the preset maximum of $500$  iterations.  The good convergence behaviors of both the inner and outer loops make the proposed inexact ALM algorithm  computationally very efficient.

\subsection{Communication and Radar Performance under Different System Configurations}\label{sec:simulation2}
In this subsection, we evaluate the  communication and radar performance achieved by the proposed approach under various system configurations. In particular, we aim at investigating the impact of several problem parameters, including the system parameters $(N,K,M,L)$ and the safety margin threshold $b$, on the communication SER  and the radar beampattern.

We first depict the SER  with respect to the safety margin threshold $b$ under different system configurations in Fig. \ref{SER_system}, where the communication SNR is set as $10$ dB. The theoretical upper and lower bounds on the SEP in \eqref{upperbound} are included as benchmarks. As shown in the figure, the achieved SERs by the proposed approach under all system configurations decrease as $b$ increases and remain much lower than the theoretical bound in most cases. The is  because  some of the  CI constraints in \eqref{linear} are satisfied with strict inequality at the output solution, which makes the achieved SER much better than the theoretical SER derived based on the fixed $b$.  However, when the problem becomes difficult (i.e., when $N/K$ is small, $L$ is small, $M$ is large, and $b$ is large), more CI constraints will be  satisfied with equality or even violated (even the problem itself might be infeasible), leading to a reduction in the gap between the achieved and theoretical SERs, e.g., in the configuration $(N,K,M,L)=(64,4,8,4)$. This also explains why different system configurations exhibit dramatically distinct SER performances with the same $b$. 


In Fig. \ref{pattern}, we depict the achieved beampatterns by the proposed approach under various system configurations and explore the effects of different problem parameters on the achieved beampattern. We first test the impact of the number of transmit antennas on the obtained beampatterns in Fig. \ref{pattern} (a). As demonstrated, the number of transmit antennas has a strong influence on the shape of the obtained beampattern. With more antennas, the obtained beampattern can better approximate the desired square shape given in \eqref{desire}, emitting equal power in the main beam. This demonstrates the advantage of employing massive MIMO technology in the DFRC system. 

In contrast, the number of users $K$,  the quantization level $L$, and the safety margin threshold $b$ have a weak impact on the shape and power level of the main beam; see Fig. \ref{pattern} (b)-(d). Instead, they  affect the power level of undesirable angles. Specifically, the power level of undesirable angles decreases as $K$ and $b$ decrease and $L$ increases ---  that is, when the constraints in the considered problem are easier to be satisfied.  An important engineering observation that can be drawn  from Fig. \ref{pattern} (c) is that  the achieved beampattern with only $L=16$, which corresponds to $3$-bit DACs,  is similar to that achieved by infinite-resolution DACs. Finally,  we note that the constellation order $M$  affects the problem in a similar manner to $b$ (as it only influences the parameter $\bC_t$ involved in the CI constraint). As a result, the impact of $M$ on the achieved beampattern is also similar to that of $b$. As such, we omit the detailed simulation result of $M$. 

\begin{figure}
\includegraphics[scale=0.38]{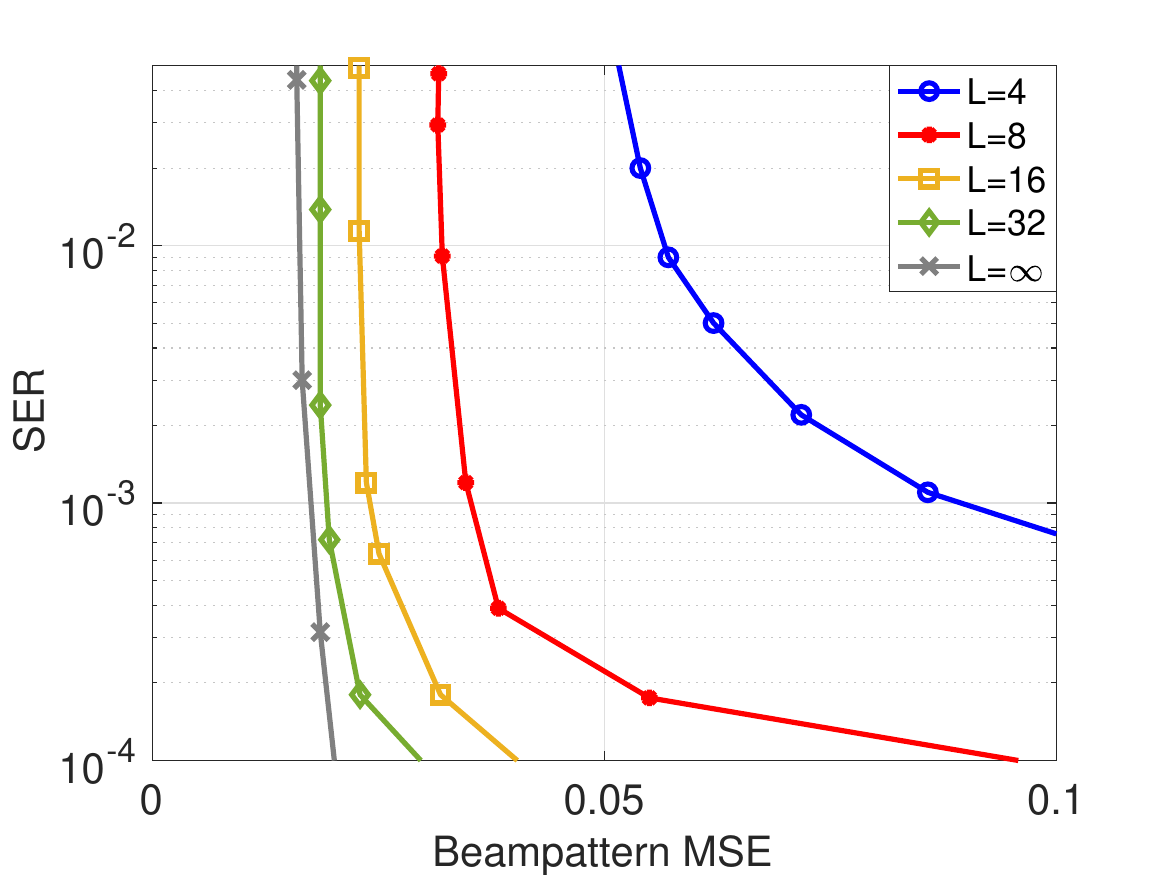}
\centering
\caption{Achieved radar beampattern MSE and communication SER trade-offs achieved by the proposed approach, where $(N,K,M,L)=(64,4,8,4)$ and SNR$=10$ dB.}
\label{MSESER_L}
\vspace{-0.2cm}
\end{figure}
To illustrate the impact of the quantization level $L$ on the system performance more clearly,  we plot the trade-off curves between the radar beampattern MSE and the communication SER for different quantization levels in  Fig. \ref{MSESER_L}. We consider a system with $(N,K,M,L)=(64,4,8,4)$ and set  the SNR as $10$ dB. The curves are obtained by varying the safety margin threshold $b$. It can be observed from the figure that increasing the quantization level from $L=4$ to $L=16$, which corresponds to increasing the resolution of DACs from $1$ bit to $3$ bits, can significantly enhance both the communication and radar performance of the DFRC system.  However, only marginal gains can be obtained if the quantization level is further increased beyond $L=16$, i.e.,  the resolution of DACs is increased beyond 3 bits.


\subsection{Comparison with Existing SOTA Approaches}\label{sec:simulation3}
In this subsection, we compare both the radar and communication performance of the proposed approach with existing SOTA ones.  As no prior work has addressed the general QCE transmit waveform design problem for DFRC systems, we shall focus on the one-bit case,  i.e., $L=4$.  We compare the proposed approach with the algorithm\footnote{In \cite{DFRC2}, several algorithms have been proposed  to deal with different DFRC waveform design problems that are formulated based on different criteria. Here, we  adopt the one designed for solving the weighted radar and communication optimization problem with the total power constraint as the benchmark, as it demonstrates the best radar and communication performance with the presence of one-bit quantization.} in \cite{DFRC2} and the MVAM algorithm in \cite{DFRC1bit1}. Note that the algorithm in \cite{DFRC2} is designed for the infinite-resolution case. 
To obtain one-bit transmit signal,  we directly quantize its output to satisfy the one-bit constraint, which is  termed as ``Quantize''. We also include the unquantized version of the algorithm in \cite{DFRC2} as a benchmark, which is termed as ``Unquantize''.
\begin{figure}
\subfigure[$(N,K,M)=(16,2,4)$.]{\includegraphics[width=0.48\columnwidth]{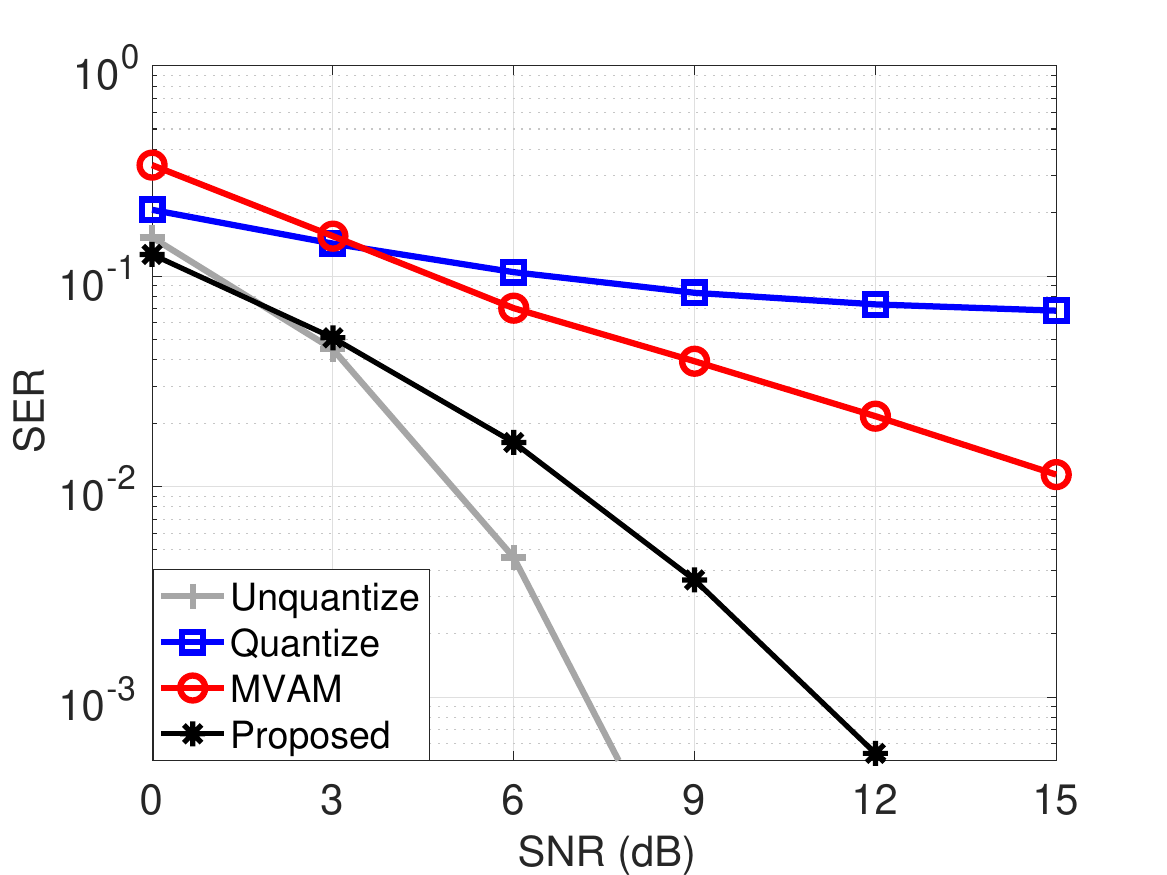}}
\subfigure[$(N,K,M)=(64,4,8)$.]{\includegraphics[width=0.48\columnwidth]{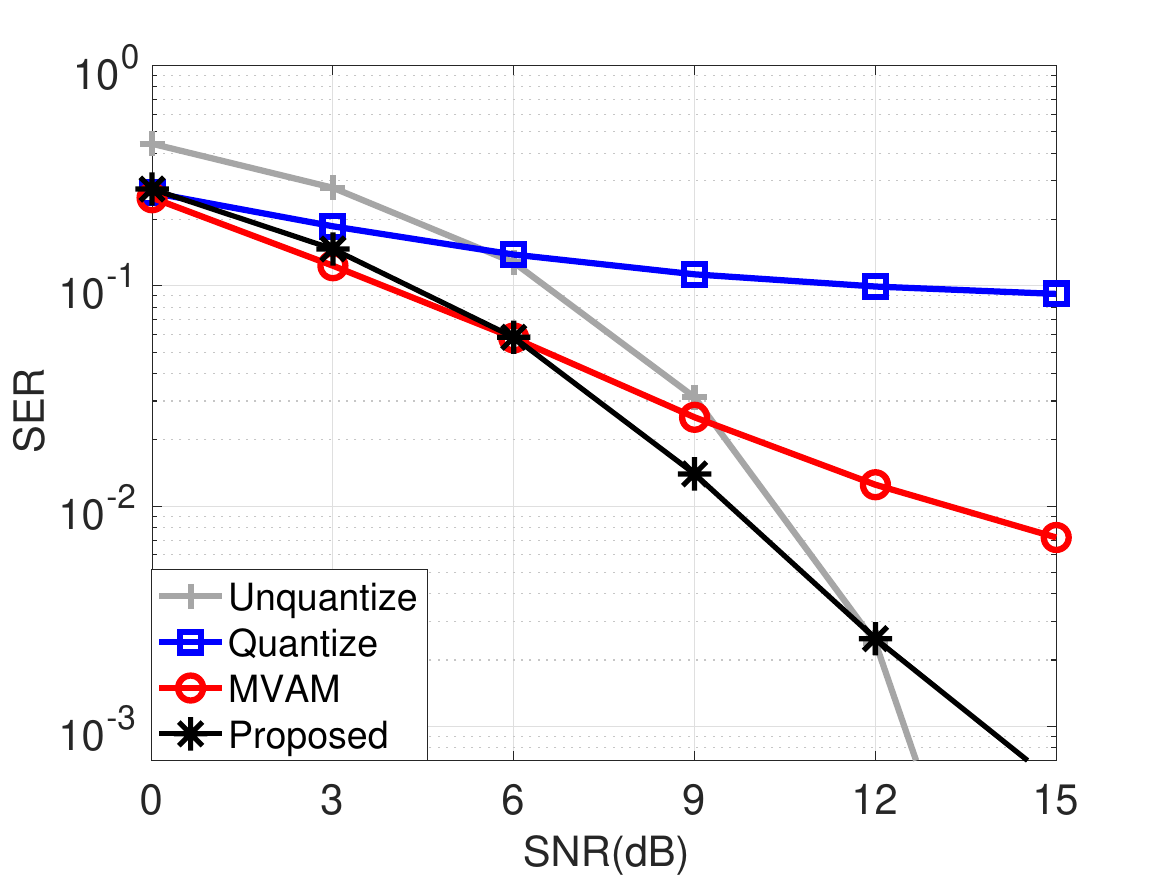}}
\centering
\caption{Achieved SERs by different algorithms, where $b=0.4$.}
\label{SER}
\end{figure}
\begin{figure}
\subfigure[$(N,K,M)=(16,2,4)$.]{\includegraphics[width=0.48\columnwidth]{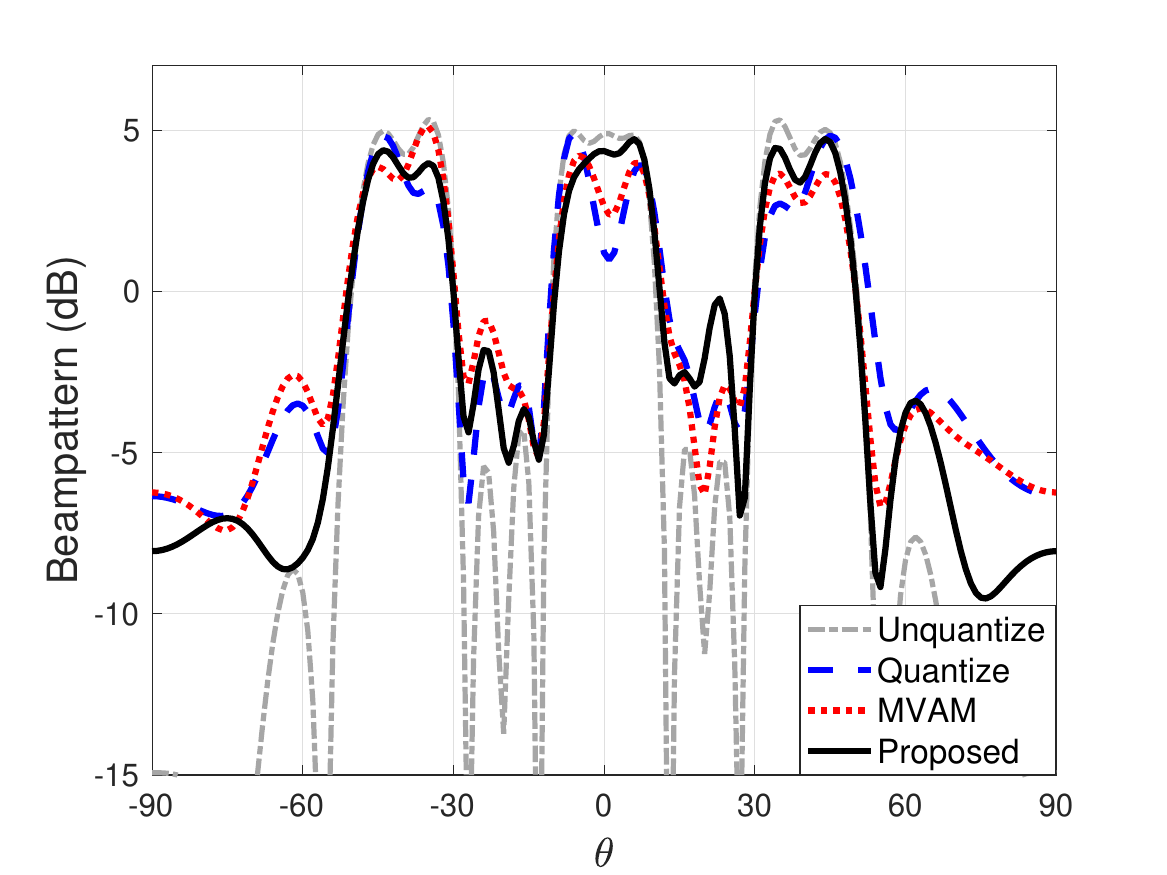}}
\subfigure[$(N,K,M)=(64,4,8)$.]{\includegraphics[width=0.48\columnwidth]{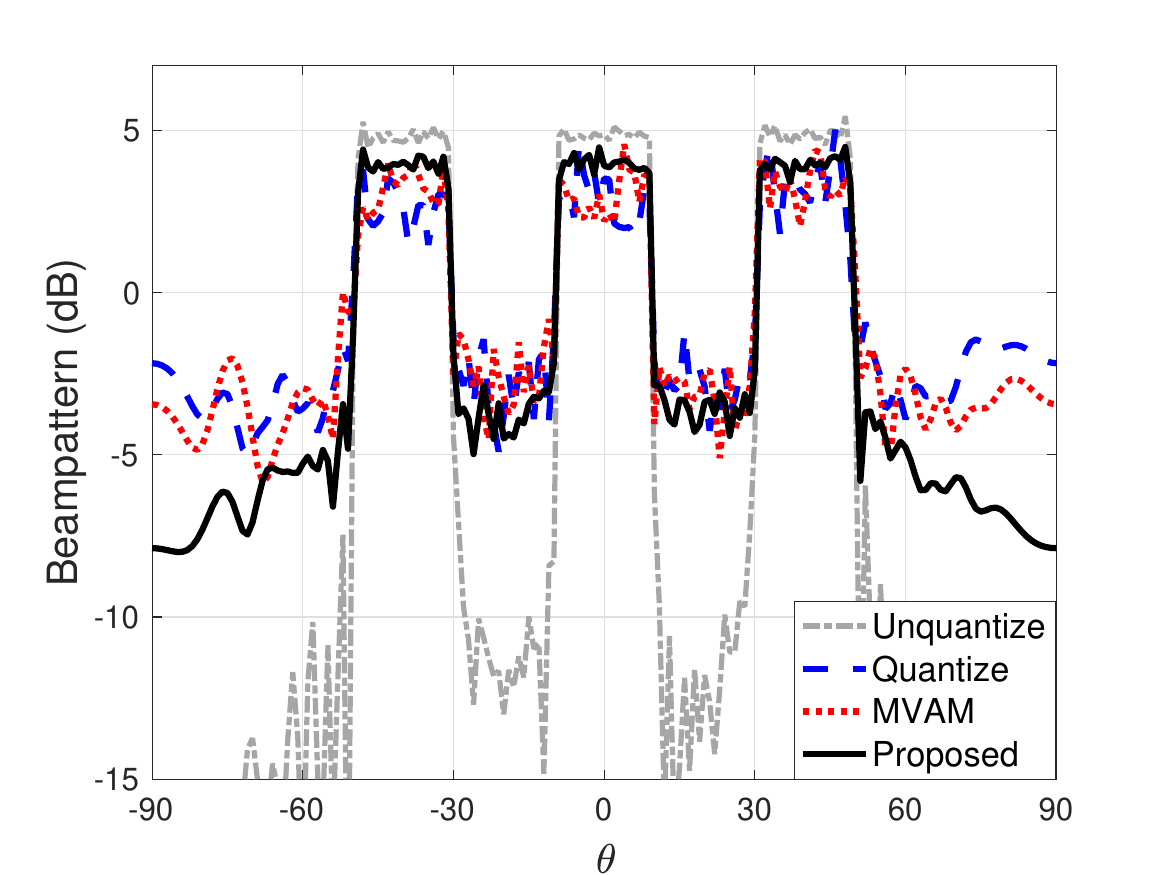}}
\centering
\caption{Achieved beampatterns by different algorithms, where $b=0.4$.}
\vspace{-0.2cm}
\label{beampattern}
\end{figure}

In Figs. \ref{SER} and \ref{beampattern}, we compare the SER and beampattern achieved by the considered four approaches for both a small system with $(N,K,M)=(16,2,4)$ and  a large system with $(N,K,M)=(64,4,8)$.   The safety margin threshold is set as $b=0.4$. Note that both of the algorithms in \cite{DFRC2} and \cite{DFRC1bit1} focus on the optimization of a weighted  radar and communication performance. In our implementation, we  tune the weighting factor to achieve the best balanced radar and communication performance for these algorithms.  As shown in Fig. \ref{SER}, the SER performance of the  algorithm in \cite{DFRC2} suffers from a severe degradation with the direct one-bit quantization.  In comparison, the MVAM algorithm \cite{DFRC1bit1} and the proposed algorithm, which are designed specifically for the one-bit scenario, achieve better SER performance. Due to superior problem formulations, they even outperform the unquantized version of the algorithm in \cite{DFRC2} in large systems and at low SNRs; see Fig. \ref{SER} (b). Of these two approaches, the proposed one exhibits significantly lower SER, especially at high SNRs. Regarding the radar performance, it can be observed from Fig. \ref{beampattern} that the three one-bit approaches, though experience degradation due to the one-bit quantization, still produce satisfactory beampatterns, featuring strong mainlobes around the desired angles. Among them, the proposed approach demonstrates the best beampattern.

To give a more fair comparison, we plot in Fig. \ref{MSEvSER}   the trade-off curves between the radar beampattern MSE and the communication SER of the three one-bit approaches. Specifically, the curves for the algorithms in \cite{DFRC2} and \cite{DFRC1bit1} are obtained by ranging the weighting factors from $0$ to $1$ and the curve for the proposed algorithm is obtained by varying the safety margin threshold $b$. As demonstrated in the figure, the proposed algorithm achieves substantially lower radar beampattern MSE and communication SER  compared to the other two algorithms, especially when the system dimension is large.  In particular, with the same communication SER, the radar beampattern MSE achieved by the proposed approach is generally more than three times lower than those achieved by the SOTA algorithms.   This is attributed to both the superiority of the CI-based problem formulation (compared to the MSE-based formulation)  and the effectiveness of the proposed  approach. 
 \begin{figure}
\subfigure[$(N,K,M)=(16,2,4)$.]{\includegraphics[width=0.48\columnwidth]{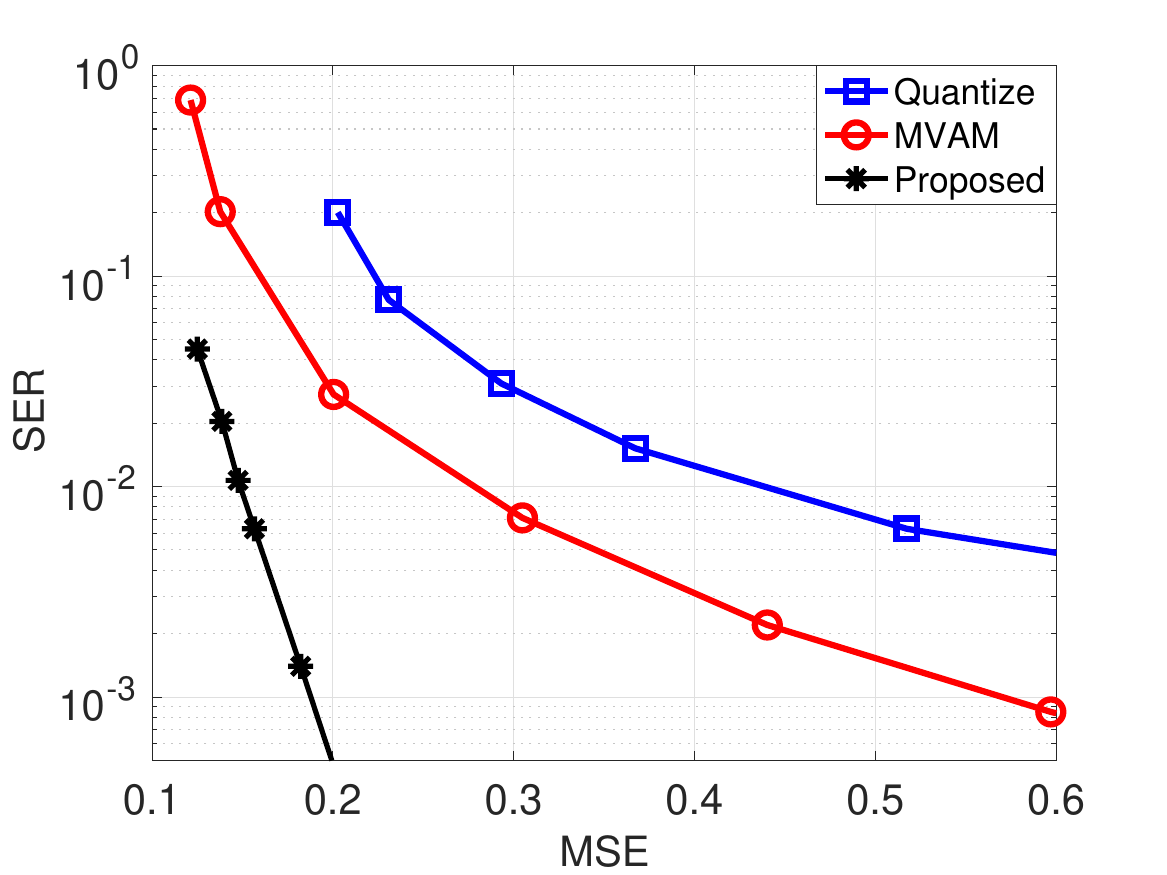}}
\subfigure[$(N,K,M)=(64,4,8)$.]{\includegraphics[width=0.48\columnwidth]{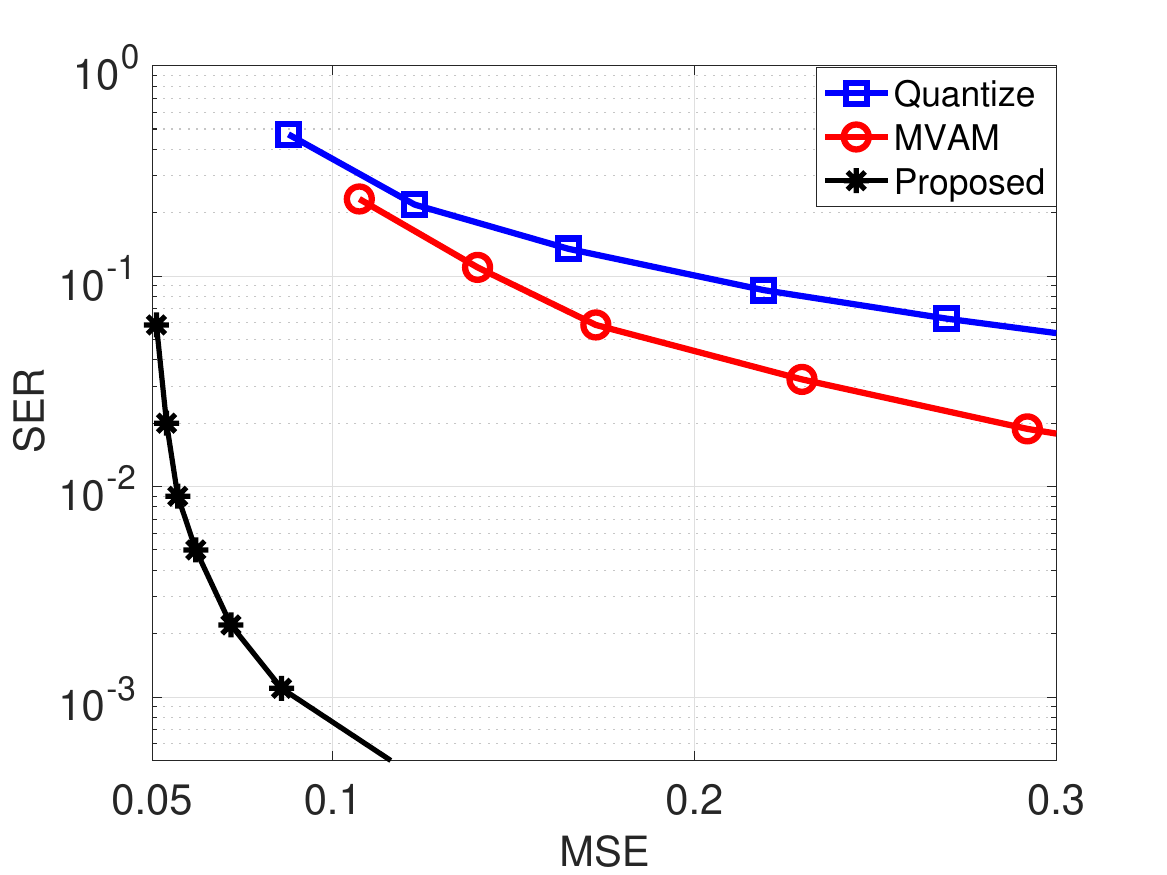}}
\centering
\caption{Achieved radar beampattern  MSE and communication SER trade-offs by different algorithms, where SNR$=10$ dB.}
\label{MSEvSER}
\end{figure}
\section{Conclusion}\label{sec:5}
In this paper, we investigated the general QCE waveform design for massive MIMO DFRC systems. The transmit waveform was optimized to minimize the MSE between the designed and desired beampatterns while satisfying the CI-based communication QoS constraints and the QCE constraint. We proposed an efficient approach for solving  the formulated discrete optimization problem, which consists of two main steps. First, the discrete problem was transformed into an equivalent continuous problem using the penalty technique. Second, an inexact ALM algorithm was developed for solving the nonconvex continuous penalty model, where the ALM subproblem at each iteration was efficiently solved by a custom-designed BSUM algorithm in an inexact fashion. Simulation results demonstrated the good convergence behaviors of the proposed inexact ALM and BSUM algorithms and the superiority of the proposed approach compared to the existing SOTA ones in both radar and communication performance.   An interesting  future work is to optimize the overall radar performance rather than solely focusing on the transmitter design. 

\appendices
\section{Proof of Proposition \ref{the1}}\label{appendixA}
In this appendix, we prove the following proposition, which covers Proposition \ref{the1} as a special case.
\begin{proposition}
Consider the following problem: 
\begin{equation}\label{appen_1}
\begin{aligned}
\min_{\x\in\mathcal{A}}~&F(\x)\\
\text{\normalfont{s.t. }}~&\x_n\in\mathcal{X}_L^\RR,~ \forall~n\in[m],
\end{aligned}
\end{equation}
where $\x\in\mathbb{R}^{2m}$, $F:\mathbb{R}^{2m}\rightarrow \mathbb{R}$ is a continuously differentiable function, $\mathcal{A}\subseteq\mathbb{R}^{2m}$ is a closed set, and $\x_n:=[x_n;x_{m+n}]$. Then there exists $\lambda_0>0$ such that for all $\lambda>\lambda_0$,  problem \eqref{appen_1} shares the same optimal solution with the following penalty model:
\begin{equation}\label{appen_2}
\begin{aligned}
\min_{\x\in\mathcal{A}}~&F(\x)-\lambda\|\x\|^2\\
\text{\normalfont{s.t. }}~&\x_n\in\text{\normalfont{conv}}(\mathcal{X}_L^\RR),~\forall~n\in[m].
\end{aligned}
\end{equation}
\end{proposition}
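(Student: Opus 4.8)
\emph{Overall strategy.} The plan is to establish exactness of the penalty in two stages. First I would show that there is a threshold $\lambda_0>0$ such that for every $\lambda>\lambda_0$, any optimal solution $\x^*$ of the relaxed model \eqref{appen_2} must satisfy $\x^*_n\in\mathcal{X}_L^\RR$ for all $n$, and hence is feasible for \eqref{appen_1}. Granting this, the second stage is immediate: every point feasible for \eqref{appen_1} has each block on a vertex, so $\|\x\|^2=\sum_{n=1}^m\eta^2=m\eta^2$ is constant, and on this set the penalty objective $F(\x)-\lambda\|\x\|^2$ equals $F(\x)-\lambda m\eta^2$, a fixed downward shift of $F$. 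Thus the minimizers of \eqref{appen_2} (which by the first stage all lie in this discrete set) coincide exactly with the minimizers of $F$ over the discrete feasible set, i.e.\ with the optimal solutions of \eqref{appen_1}. Throughout, the feasible region $\mathcal{A}\cap(\text{conv}(\mathcal{X}_L^\RR))^m$ of \eqref{appen_2} is compact ($\mathcal{A}$ closed, the polygon product compact), so $F$ is bounded there with $\Delta:=F_{\max}-F_{\min}<\infty$ and Lipschitz with constant $\ell:=\max\|\nabla F\|$; I assume \eqref{appen_1} is feasible and fix an optimal solution $\x^\sharp$, noting $\|\x^\sharp\|^2=m\eta^2$.

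\emph{Geometric engine.} The quantitative heart is a sharp growth estimate for the concave function $\eta^2-\|\cdot\|^2$ at the vertices. Since each vertex $v\in\mathcal{X}_L^\RR$ supports the hull through $\langle p,v\rangle\le\eta^2$ for all $p\in\text{conv}(\mathcal{X}_L^\RR)$, with equality only at $p=v$ (the linear functional $\langle\cdot,v\rangle$ is maximized over the regular polygon uniquely at $v$), I would write $\eta^2-\|p\|^2=2\langle v-p,v\rangle-\|v-p\|^2$ and exploit the wedge geometry at $v$: every inward direction lies in the cone spanned by the two incident edges, along each of which $\langle\cdot,v\rangle$ is strictly negative. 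This yields constants $\kappa>0$ and $r_0>0$ depending only on $L$ and $\eta$ (hence uniform over all vertices) such that
$$\eta^2-\|p\|^2\ \ge\ \kappa\,\|p-v\|\quad\text{whenever } p\in\text{conv}(\mathcal{X}_L^\RR),\ \|p-v\|\le r_0.$$
In words, moving off a vertex costs norm \emph{linearly}, so the penalty $-\lambda\|\cdot\|^2$ disfavours non-vertex points with a margin proportional to $\lambda$ times the displacement.

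\emph{Forcing the vertices (first stage).} I would argue by contradiction. If no finite $\lambda_0$ worked, there would exist $\lambda_j\to\infty$ and optimal solutions $\x^{*j}$ of \eqref{appen_2}, each with at least one non-vertex block. Comparing with the feasible competitor $\x^\sharp$ gives $F(\x^{*j})-\lambda_j\|\x^{*j}\|^2\le F(\x^\sharp)-\lambda_j m\eta^2$, whence $\lambda_j(m\eta^2-\|\x^{*j}\|^2)\le\Delta$ and $\|\x^{*j}\|^2\to m\eta^2$. By compactness a subsequence converges to some $\bar\x\in\mathcal{A}\cap(\text{conv}(\mathcal{X}_L^\RR))^m$ (closedness of $\mathcal{A}$ is essential here), and $\|\bar\x\|^2=m\eta^2$ forces every block of $\bar\x$ onto a vertex, so $\bar\x$ is feasible for \eqref{appen_1}. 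For large $j$ the iterate $\x^{*j}$ lies within $r_0$ of $\bar\x$, which is then its nearest vertex configuration; summing the geometric estimate over the blocks and combining with the Lipschitz bound gives
$$F(\x^{*j})-\lambda_j\|\x^{*j}\|^2\ \ge\ \big(F(\bar\x)-\lambda_j m\eta^2\big)+(\lambda_j\kappa-\ell)\,\|\x^{*j}-\bar\x\|.$$
Once $\lambda_j>\ell/\kappa$ the last term is strictly positive (since $\x^{*j}\neq\bar\x$, being non-vertex), so $\bar\x$ strictly beats $\x^{*j}$, contradicting its optimality. This proves the first stage and, with the shift observation above, the proposition.

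\emph{Main obstacle.} The delicate point is that $\mathcal{A}$ is only assumed closed, not convex or polyhedral, so the feasible set of \eqref{appen_2} need not be a polytope and the familiar ``a concave objective attains its minimum at a vertex'' shortcut is unavailable; moreover one cannot simply round $\x^*$ to the nearest vertex configuration, as that point may leave $\mathcal{A}$. The resolution is to let closedness supply feasibility \emph{in the limit}: any accumulation point of near-optimal iterates is automatically a feasible vertex configuration, against which the sharp linear margin of the geometric lemma can be deployed. Making the threshold $\lambda_0$ genuinely uniform---cleanly decoupling the ``entering the radius $r_0$'' estimate (governed by $\Delta/(\kappa\lambda)$) from the local strict-minimality estimate (governed by $\lambda\kappa-\ell$)---is the part that will require the most care.
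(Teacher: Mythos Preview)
Your approach is correct but genuinely different from the paper's. Both arguments rest on the same geometric engine---the linear lower bound $\eta^2-\|p\|^2\ge \kappa\|p-v\|$ for $p\in\text{conv}(\mathcal{X}_L^\RR)$ near a vertex $v$ (the paper makes this explicit with $\kappa=\eta\sin\frac{\pi}{L}$ via the angle bound $\phi_{b,n}\le\frac{\pi}{2}-\frac{\pi}{L}$). Where you diverge is in how you produce the feasible competitor against which a non-vertex optimum is shown suboptimal. The paper argues \emph{directly}: it splits $\mathcal{F}_c\setminus\mathcal{F}_d$ into a near region $\mathcal{S}_1$ (points within $\eta\sin\frac{\pi}{L}$ of some $\mathbf{v}_b\in\mathcal{F}_d$) and a far region $\mathcal{S}_2$, compares points in $\mathcal{S}_1$ to the nearby $\mathbf{v}_b$ using the geometric estimate plus Lipschitz continuity, and compares points in $\mathcal{S}_2$ to the discrete optimum $\mathbf{v}^*$ using a uniform strict gap $\sup_{\mathcal{S}_2}\|\cdot\|^2<m\eta^2$. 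This yields an explicit threshold $\lambda_0=\max\{L_F/(\eta\sin\frac{\pi}{L}),\,(F^*-\hat{F})/(m\eta^2-c_0^2)\}$. You instead argue by \emph{contradiction via compactness}: a sequence of non-vertex optima with $\lambda_j\to\infty$ must subconverge to a point $\bar\x$ with $\|\bar\x\|^2=m\eta^2$, hence a vertex configuration, which lies in $\mathcal{A}$ by closedness and then strictly beats $\x^{*j}$ once $\lambda_j>\ell/\kappa$.

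What each buys: the paper's route is constructive and gives a concrete $\lambda_0$, at the price of a two-case split and a separate compactness argument on $\mathcal{S}_2$. Your route is non-constructive but more economical---the limiting argument subsumes both the near and far regimes in one stroke and makes the role of ``$\mathcal{A}$ closed'' especially transparent (the competitor $\bar\x$ is furnished by the limit rather than selected from $\mathcal{F}_d$). Your stated worry about making $\lambda_0$ ``genuinely uniform'' is in fact already handled by the contradiction: you do not need to decouple the two estimates, since the existence of $\lambda_0$ is exactly the negation of the sequence hypothesis you falsify.
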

\begin{proof}
 It suffices to show that for all $\lambda>\lambda_0$, each optimal solution of \eqref{appen_2} is feasible for \eqref{appen_1}. For ease of the notation,  let $$\mathcal{F}_d:=\mathcal{A}\cap\{\x\mid\x_n\in\mathcal{X}_L^\RR,~\forall~n\in[m]\}$$ and $$\mathcal{F}_c:=\mathcal{A}\cap\{\x\mid\x_n\in\text{conv}(\mathcal{X}_L^\RR),~\forall~n\in[m]\}$$ be the feasible sets of the discrete problem \eqref{appen_1} and the continuous penalty problem \eqref{appen_2}, respectively. 
In addition, let $\{{\bv}_1,{\bv}_2,\dots,{\bv}_B\}$ denote all the points in $\mathcal{F}_d$ and define 
 $$\mathcal{S}_1:=\bigcup\limits_{b=1}^B\left\{\x\in\mathcal{F}_c\mid0<\|\x-\bv_b\|<\eta\sin\frac{\pi}{L}\right\}$$
 and  
 $$\mathcal{S}_2:=\bigcup\limits_{b=1}^B\left\{\x\in\mathcal{F}_c\mid\|\x-\bv_b\|\geq\eta\sin\frac{\pi}{L}\right\},$$ where $\mathcal{S}_1$ collects the points in $\mathcal{F}_c$ that are sufficiently close to, but not within, $\mathcal{F}_d$, and $\mathcal{S}_2$ collects the points in $\mathcal{F}_c$ that are away from $\mathcal{F}_d$.  To prove our claim, we need to show that all points in $\mathcal{F}_c\backslash\mathcal{F}_d$ cannot be optimal for problem \eqref{appen_2}. For this purpose,  we next examine the two cases $\x\in\mathcal{S}_1$ and $\x\in\mathcal{S}_2$, separately. 

    \begin{figure}
 \centering
 \includegraphics[scale=0.2]{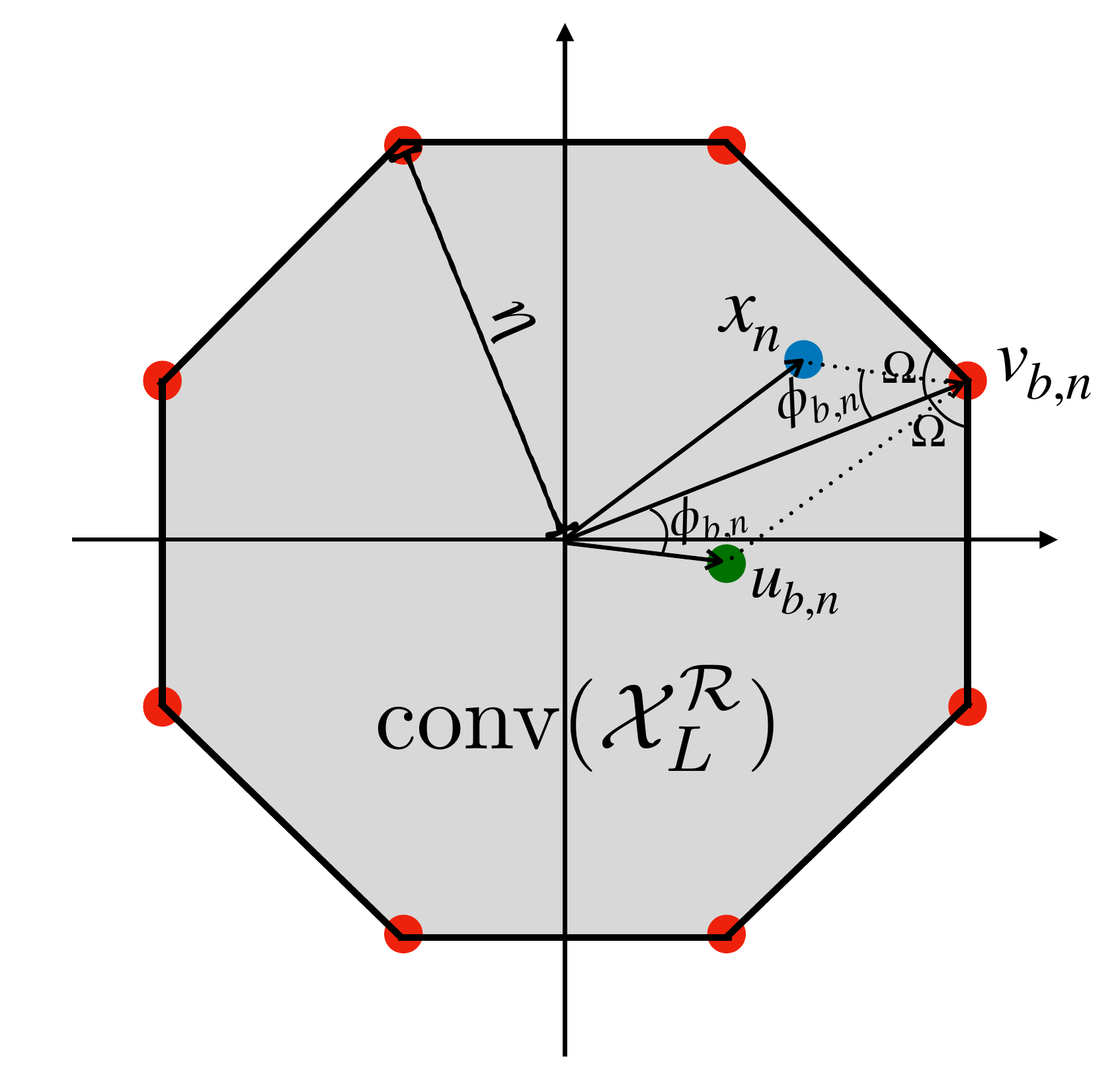}
 \caption{An illustration of $\{\x_n,\bv_{b,n},\bu_{b,n}\}$ and $\phi_{b,n}$, where $\Omega=\frac{\pi}{2}-\frac{\pi}{L}$. Since $\x_n$ lies in $\text{conv}(\mathcal{X}_L^\RR)$, i.e., the shaded area, we have $\phi_{b,n}\leq \Omega$.  The equality holds when $\x_n$ lies on  the two edges  of $\text{conv}(\mathcal{X}_L^\RR)$ that are incident with $\bv_{b,n}$. }
 \vspace{-0.3cm}
 \label{fig:appendixA}
 \end{figure}
\textbf{Case I}: $\x\in \mathcal{S}_1$. For any  $\x\in\mathcal{S}_1$,  there exists $\bv_b\in\mathcal{F}_d$ such that $\|\x-\bv_b\|=\rho<\eta\sin\frac{\pi}{L}$. Let $\bu_b=\bv_b-\bx$ and $\phi_{b,n}$ denote the angle between $\bu_{b,n}$ and $\bv_{b,n}$ for all $n$. Then we have
 \begin{equation}\label{eqn:appendixA1}
 \begin{aligned}
\|\bv_b\|^2-\|\x\|^2&=\|\bv_b\|^2-\|\bv_b-\bu_b\|^2\\&=2\bu_b^\mathsf{T}\bv_b-\|\bu_b\|^2\\&=2\sum_{n=1}^m\bu_{b,n}^\mathsf{T}\bv_{b,n}-\rho^2\\
&\overset{(a)}{=}2\eta\sum_{n=1}^m\|\bu_{b,n}\|\cos\phi_{b,n}-\rho^2\\
&\overset{(b)}{\geq}2\eta\sin\frac{\pi}{L}\rho-\rho^2,
\end{aligned}
\end{equation}
where (a) uses the fact that $\|\bv_{b,n}\|=\eta$, and (b) holds since $\sum_{n=1}^N\|\bu_{b,n}\|\geq\|\bu_b\|=\rho$ and $\phi_{b,n}\leq\frac{\pi}{2}-\frac{\pi}{L}$; see Fig. \ref{fig:appendixA} for an illustration. Since $F(\x)$ is continuously differentiable on the bounded set $\mathcal{F}_c$, it is  Lipschitz continuous on $\mathcal{F}_c$. Let $L_F$ be the Lipschitz constant, then we get 
\begin{equation}\label{eqn:appendixA2}
|F(\bx)-F(\bv_b)|\leq L_F\|\bx-\bv_b\|=L_F\rho.
\end{equation}
Combing \eqref{eqn:appendixA2} with \eqref{eqn:appendixA1} and noting that $\rho<\eta\sin\frac{\pi}{L}$, we have that for any $\lambda>\frac{L_F}{\eta\sin\frac{\pi}{L}}$, 
$$\begin{aligned}
&F(\x)-\lambda\|\x\|^2-(F(\bv_b)-\lambda\|\bv_b\|^2)\\\geq\,&\rho(-L_F+2\lambda\eta\sin\frac{\pi}{L}-\lambda\rho)\\\geq\,&\rho(-L_F+\lambda\eta\sin\frac{\pi}{L})>0,\end{aligned}$$ i.e., $\x$ is not optimal for \eqref{appen_2}.

 \textbf{Case II: $\x\in \mathcal{S}_2$}. Now consider any $\x\in\mathcal{S}_2$. It is easy to check that the set $\mathcal{S}_2$ is compact, and thus 
$$\max_{\x\in\mathcal{S}_2}\|\x\|:=c_0<\eta.$$ 
Let $\bv^*\in\mathcal{F}_d$ and  $F^*:=F(\bv^*)$ be an optimal solution and the optimal value of problem \eqref{appen_1}, respectively, and let  $\hat{F}:=\min_{\x\in\mathcal{S}_2}F(\x).$ Then for all $\x\in\mathcal{S}_2$ and $\lambda>\frac{F^*-\hat{F}}{\eta-c_0}$, we have
$$F(\x)-\lambda \|\x\|^2-(F(\bv^*)-\lambda\|\bv^*\|^2)\geq \hat{F}-F^*+\lambda (\eta^2-c_0^2)>0,$$ where the first inequality holds since $\|\bv^*\|=\eta$. Hence, $\x$ is not optimal for \eqref{appen_2}.

Combining the above two cases, we can now conclude that the desired result holds for  $\lambda_0=\max\{\frac{L_F}{\eta\sin\frac{\pi}{L}},\frac{F^*-\hat{F}}{\eta-c_0}\}.$
 \end{proof}
 \section{Boundedness of the sequences generated by Algorithms \ref{alg:alm} and \ref{alg:BSUM}}\label{appendixB}
In this section, we show that the sequences generated by Algorithms \ref{alg:alm} and \ref{alg:BSUM} are bounded. This result is important to the proof of both Theorem \ref{theorem1} and  Proposition \ref{proposition2}. 
\begin{lemma} \label{bounded}There exists a constant $\overline{M}>0$ (independent of $m$ and $r$) such that
 \begin{enumerate}
 \item The sequence $(\x^m,\bw^m,\bz^m)$ generated by Algorithm \ref{alg:alm} satisfies 
 $$
 \|\x^m\|\leq \overline{M},~\|\bw^m\|\leq\overline{M},~\|\bz^m\|\leq\overline{M},~\forall~m.$$
 \item  For any $m>0$, the sequence $(\x^{(r)},\bw^{(r)},\bz^{(r)})$ generated by Algorithm \ref{alg:BSUM} satisfies
$$\|\x^{(r)}\|\leq \overline{M},~\|\bw^{(r)}\|\leq\overline{M},~\|\bz^{(r)}\|\leq\overline{M},~\forall~r.$$
 \end{enumerate}
\end{lemma}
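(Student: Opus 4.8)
The plan is to treat the three blocks $\x$, $\bz$, and $\bw$ separately, since the difficulty is concentrated entirely in the $\bw$-block. Both the update \eqref{xsolution} and the warm start keep $\x$ inside $\mathcal{X}$, and since every $\x_{t,n}^\RR$ lies in $\text{conv}(\mathcal{X}_L^\RR)\subseteq\{\bu:\|\bu\|\le\eta\}$, the set $\mathcal{X}$ is compact and $\|\x^m\|,\|\x^{(r)}\|\le\eta\sqrt{NT}=:M_x$ hold automatically. For $\bz$ I would read the bound directly off the closed-form update \eqref{zsolution}: nonexpansiveness of the projection gives $\|\bz^{(r+1)}\|\le\|\bC\|M_x+\|\bb\|+\|\bmu^m\|/\rho_\mu^m$, which is uniformly bounded because the safeguard in \eqref{multiplier} keeps the components of $\bmu^m$ within $[\mu_{\min},\mu_{\max}]$ and $\rho_\mu^m\ge\rho_\mu^0>0$; the same bound applies to $\bz^m$.

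The crux is $\bw$, and the obstacle is that $\rho_\nu^m$ may diverge, so a naive bound through the cubic update \eqref{wsolution} (whose coefficient $\boldsymbol{\xi}_q^{(r)}$ carries the cubic term $\nabla g(\bw^{(r)})$) only yields the self-defeating recursion $\|\bw^{(r+1)}\|\lesssim\mathrm{const}+\|\bw^{(r)}\|^3/\rho_\nu^m$. Instead I would exploit two structural facts. First, writing $a_q=\|\bw_{(q)}\|^2\ge0$, the Cauchy--Schwarz inequality together with $\sum_q c_q^2=1$ (immediate from \eqref{def:cq}) gives $f(\bw)+g(\bw)=\sum_q a_q^2-(\sum_q c_q a_q)^2\ge0$, so $\mathcal{L}_m(\x,\cdot,\bz)$ is coercive in $\bw$ through the penalty $\tfrac{\rho_\nu^m}{2}\|\bA\x-\bw\|^2$. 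Second, the BSUM updates are monotone, so every inner iterate satisfies $\mathcal{L}_m(\x^{(r)},\bw^{(r)},\bz^{(r)})\le\mathcal{L}_m(\x^m,\bw^m,\bz^m)=:V_m$. Combining these, dropping the nonnegative $f+g$ and the nonnegative $\mu$-penalty, and using $|h(\x)|=\lambda\|\x\|^2\le\lambda M_x^2$ and the boundedness of $\bmu^m,\bnu^m$, I obtain for $s_r:=\|\bA\x^{(r)}-\bw^{(r)}\|$ an inequality of the form $\tfrac{\rho_\nu^m}{2}s_r^2-\|\bnu^m\|s_r\le V_m+C$ with $C$ fixed, whence $s_r\lesssim\sqrt{V_m/\rho_\nu^m}$ and $\|\bw^{(r)}\|\le\|\bA\|M_x+s_r$. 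Thus bounding $\bw$ reduces to bounding $V_m/\rho_\nu^m$ uniformly in $m$.

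To control $V_m/\rho_\nu^m$ I would run an induction on $m$ with hypothesis $\|\bw^m\|\le\overline{M}$. The key leverage is that $(\x^m,\bw^m,\bz^m)$ is $\epsilon_{m-1}$-stationary for the previous subproblem, so the $\bw$-block condition $\|\nabla f(\bw^m)+\nabla g(\bw^m)-\bnu^{m-1}+\rho_\nu^{m-1}(\bw^m-\bA\x^m)\|\le\epsilon_{m-1}$, with $\|\nabla f\|,\|\nabla g\|=O(\overline{M}^3)$, forces the warm-start residual to be small: $\|\bA\x^m-\bw^m\|\lesssim(\overline{M}^3+1)/\rho_\nu^{m-1}$. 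Feeding this back, the two $\rho$-weighted terms in $V_m$ stay controlled: the $\mu$-penalty contributes $\tfrac{\rho_\mu^m}{2\rho_\nu^m}\|\bC\x^m-\bz^m-\bb\|^2=\tfrac{a_\mu}{2a_\nu}\|\bC\x^m-\bz^m-\bb\|^2$, a fixed constant because the ratio $\rho_\mu^m/\rho_\nu^m\equiv a_\mu/a_\nu$ is frozen by the common update \eqref{rho} while $\x^m,\bz^m$ are bounded, and the $\nu$-penalty obeys $\tfrac{\rho_\nu^m}{2}\|\bA\x^m-\bw^m\|^2\lesssim\overline{M}^6/\rho_\nu^{m-1}$, which vanishes as $\rho$ grows. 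Hence $\|\bw^{(r)}\|^2\le\overline{M}_0^2+C'\overline{M}^6/\rho_\nu^{m-1}$ with $\overline{M}_0$ independent of $\overline{M}$, and choosing $\overline{M}^2=2\overline{M}_0^2$ closes the step whenever $\rho_\nu^{m-1}$ exceeds a fixed threshold $\rho^{*}\propto\overline{M}_0^4$.

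The main obstacle is exactly this uniformity in $m$, and specifically the two regimes left by the threshold. If $\rho_\nu^m\to\infty$, only finitely many indices have $\rho_\nu^{m-1}<\rho^{*}$; for each such early subproblem the per-$m$ sublevel bound $s_r\lesssim\sqrt{V_m/\rho_\nu^m}$ is already finite, so those iterates are bounded by a constant that can be absorbed into $\overline{M}$. If instead $\rho_\nu^m$ stabilizes at some $\rho_\nu^\infty<\rho^{*}$, then by the very criterion in \eqref{rho} the constraint residuals must be strictly contracting, so $\|\bA\x^m-\bw^m\|\to0$, which again keeps $V_m/\rho_\nu^m$ bounded. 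Taking $\overline{M}$ as the maximum of the induction bound, the early-iteration bound, and the $\x$- and $\bz$-bounds yields the claimed uniform constant. I expect the delicate bookkeeping to lie precisely in verifying that $\epsilon_{m-1}$-stationarity really renders the warm-start residual $O(1/\rho_\nu^{m-1})$ and that the frozen ratio $\rho_\mu/\rho_\nu$ truly removes every uncancelled power of $\rho$ from $V_m/\rho_\nu^m$.
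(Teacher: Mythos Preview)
Your treatment of $\x$ and $\bz$ matches the paper's. For $\bw$ the paper does something quite different: it works directly with the cubic update. Because $\beta_q^{(r+1)}$ is the positive root of $4\beta^3+\rho_\nu^m\beta=\|\boldsymbol{\xi}_q^{(r)}\|$ and $\|\boldsymbol{\xi}_q^{(r)}\|\le 4c_qS^{(r)}\beta_q^{(r)}+\rho_\nu^m M_{w,0}$ with $S^{(r)}:=\sum_{q}c_q(\beta_q^{(r)})^2$, one gets $\beta_q^{(r+1)}\le\max\{(c_qS^{(r)}\beta_q^{(r)})^{1/3},M_{w,0}\}$; then $\sum_q c_q^2=1$ together with a H\"older inequality turns this into the \emph{non-expanding} recursion $S^{(r+1)}\le\max\{S^{(r)},\text{const}\}$, which is pushed across both inner and outer loops via the warm start. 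No $\rho$-dependent threshold and no induction on $m$ are needed; the final bound depends only on $\bw^0$ and fixed data.

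Your sublevel-set route, by contrast, has a circular dependence at the ``early iterations'' step that I do not see how to close. The induction step ``$\|\bw^m\|\le\overline{M}\Rightarrow\|\bw^{(r)}\|\le\overline{M}$'' only fires once $\rho_\nu^{m-1}\ge\rho^{*}$, and you chose $\overline{M}=\sqrt{2}\,\overline{M}_0$, $\rho^{*}\propto\overline{M}_0^4$. But to seed that induction you need $\|\bw^{m_0}\|\le\overline{M}$ at the first $m_0$ past the threshold, and your early-iteration argument only says each $\|\bw^m\|$ is finite. In your own notation the raw sublevel estimate is $\|\bw^{(r)}\|^2\lesssim\overline{M}_0^2+B_{m-1}^4/\rho_\nu^m+B_{m-1}^6/(\rho_\nu^{m-1})^2$, the $B_{m-1}^4$ term coming from $f(\bw^m)\le\|\bw^m\|^4$ inside $V_m$. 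When $\rho_\nu^m$ is small this recursion \emph{amplifies} (e.g.\ $B_{m-1}^2>\rho_\nu^{m-1}$ already yields $B_m\gtrsim B_{m-1}^3/\rho_\nu^{m-1}>B_{m-1}$), and the update rule \eqref{rho} may hold $\rho_\nu$ fixed for an a~priori unbounded number of outer steps (whenever the residual keeps contracting), so the early bound $B_{\text{early}}$ is not controlled by $\overline{M}_0$. Absorbing $B_{\text{early}}$ into $\overline{M}$ then raises the threshold you actually need to $\propto B_{\text{early}}^4$, which postpones the moment the induction engages---during which $B_m$ can grow further. The ``$\rho$ stabilizes'' branch has the same defect: contraction of the residual kills the $r_m^2$ contribution to $V_m$ but not the $f(\bw^m)\le B_{m-1}^4$ one. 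What makes the paper's route work is precisely that its per-step recursion on $S^{(r)}$ is non-expanding \emph{regardless} of the size of $\rho_\nu^m$.
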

\begin{proof}
Note that for any $m\geq0$, $(\x^{m+1},\bw^{m+1},\bz^{m+1})$ is the output of Algorithm \ref{alg:BSUM} for solving the ALM subproblem 
\eqref{ALMsubproblem}. Hence, it suffices to prove the second assertion of the  above lemma. 

Given any $m\geq0$ and consider the sequence $\{(\x^{(r+1)},\bw^{(r+1)},\bz^{(r+1)})\}$ generated by Algorithm \ref{alg:BSUM} for solving the ALM subproblem \eqref{ALMsubproblem}. First, since $\x^{(r+1)}\in\mathcal{X}$ and noting the definition of $\mathcal{X}$, we have $$\|\x^{(r+1)}\|\leq \sqrt{NT}\eta:=M_x, ~\forall~r.$$  
In addition, it follows from the update rule of variable $\bz$ in \eqref{zsolution} 
that
$$\begin{aligned}&\|\bz^{(r+1)}\|\\
\leq\,& \|\bC\|\|\x^{(r+1)}\|+\|\bb\|+\frac{\|\bmu^m\|}{\rho_\mu^m}\\
\leq\,& \|\bC\|M_x\hspace{-0.05cm}+\hspace{-0.05cm}\|\bb\|\hspace{-0.05cm}+\hspace{-0.05cm}\frac{\sqrt{2KT}\max\{|\mu_{\min}|,|\mu_{\max}|\}}{\rho_\mu^0}:=M_z,~\forall~r,
\end{aligned}$$
where the second inequality holds since the elements of $\bmu$ are lower and upper bounded by $\mu_{\min}$ and $\mu_{\max},$ respectively, and $\rho_\mu^m$ is increasing in $m$ and thus lower bounded by $\rho_\mu^0$; see \eqref{multiplier} and \eqref{rho}. The remaining task is to show that $\{\bw^{(r)}\}$ is bounded by a constant (independent of $m$ and $r$). According to the update rule of $\bw^{(r)}$ in \eqref{wsolution}, we only need to prove that  $\{\beta_q^{(r+1)}\}$ is bounded for all $q\in[Q]$. 

Recall that  $\beta_q^{(r+1)}$ is the unique positive root of  the cubic equation:
$$ P(\beta):=4\beta^3+\rho_v^m\beta-\left\|\boldsymbol{\xi}_{{q}}^{(r)}\right\|=0,$$
where\footnote{Here and after, for a vector $\boldsymbol{\xi}$, we use $\boldsymbol{\xi}_{(q)}$ to denote its subvector that collects the elements in $\boldsymbol{\xi}$ at the same positions as $\bw_{(q)}$ in $\bw$.} $\boldsymbol{\xi}_{{q}}^{(r)}=\left(\nabla g(\bw^{(r)})-\bnu^m-\rho_\nu^m\bA\x^{(r+1)}\right)_{(q)}$.
Using the the definitions of $g(\bw)$ and $\beta_q^{(r)}$, we can further express $\boldsymbol{\xi}_{q}^{(r)}$ as 
$$\boldsymbol{\xi}_{{q}}^{(r)}=\left(4\sum_{q\in[Q]}c_q(\beta_q^{(r)})^2\right)c_q\bw_{(q)}^{(r)}-\bnu^m_{(q)}-\rho_\nu^m\left(\bA\x^{(r+1)}\right)_{(q)}.$$
It is obvious that $P(\beta)$ is increasing in $\beta$ with $P(0)<0$ and $P(+\infty)=+\infty$. Hence, if we can find a positive constant $c$ such that $P(c)>0$,  then $\beta_q^{(r+1)}\leq c$. Note that $P(\beta)$ can be lower bounded as 
$$
\begin{aligned}
P(\beta)\geq&4\beta^3+\rho_\nu^m\beta-4\left(\sum_{q=1}^Qc_q(\beta_q^{(r)})^2\right)c_q\beta_q^{(r)}\\
&-\left\|\boldsymbol{\nu}^m_{(q)}\right\|-\rho_\nu^m\left\|(\bA\x^{(r+1)})_{(q)}\right\|,
\end{aligned}$$
where we have used the fact that $\|\bw_{(q)}^{(r)}\|=\beta_q^{(r)}$. For the last two terms in the above lower bound, there exists a constant $M_{w,0}>0$ independent of $m$ and $r$ such that 
$$\left\|\boldsymbol{\nu}_{(q)}^m\right\|+\left\|\rho_{\nu}^m(\bA\x^{(r+1})_{(q)}\right\|\leq \rho_\nu^m M_{w,0}.$$
Let $\mathcal{I}=\{q\in[Q]\mid c_q>0\}$, where $c_q$ is given in \eqref{def:cq}. We next investigate the two cases $q\notin\mathcal{I}$ and $q\in\mathcal{I}$ separately. 

In case of $q\notin\mathcal{I}$, i.e., $c_q=0$, we have
$$P(\beta)\geq 4\beta^3+\rho_v^m(\beta-M_{w,0}),$$ which further implies that 
$P(M_{w,0})>0$ and $\beta_q^{(r+1)}\leq M_{w,0}$.

In case of $q\in\mathcal{I}$,  it is straightforward to verify that $P(\beta)>0$ if  $$\beta>\max\left\{\left(\sum_{q\in\mathcal{I}_q}c_q(\beta_q^{(r)})^2\right)^{\frac{1}{3}}(c_q\beta_q^{(r)})^{\frac{1}{3}},M_{w,0}\right\}.$$ Then we immediately get
\begin{equation*}\label{rqbound}
\beta_q^{(r+1)}\leq \max\left\{\left(\sum_{q\in\mathcal{I}_q}c_q(\beta_q^{(r)})^2\right)^{\frac{1}{3}}(c_q\beta_q^{(r)})^{\frac{1}{3}},M_{w,0}\right\}.
\end{equation*}
Squaring the above inequality,  multiplying both sides by $c_q$, and then summing over $q\in\mathcal{I}$, we obtain 
\begin{equation}\label{rqbound}
\begin{aligned}
&\sum_{q\in\mathcal{I}}c_q(\beta_q^{(r+1)})^2\\\leq& \max\hspace{-0.05cm}\left\{\hspace{-0.05cm}\left(\sum_{q\in\mathcal{I}}c_q(\beta_q^{(r)})^2\hspace{-0.05cm}\right)^{\hspace{-0.13cm}\frac{2}{3}}\hspace{-0.1cm}\left(\sum_{q\in\mathcal{I}}c_q(c_q\beta_q^{(r)})^{\frac{2}{3}}\hspace{-0.05cm}\right)\hspace{-0.1cm},\,\sum_{q\in\mathcal{I}}c_qM_{w,0}^2\hspace{-0.05cm}\right\}\vspace{-0.05cm}.
\end{aligned}\end{equation}
In addition, using $\sum_{q\in\mathcal{I}}c_q^2=1$ (see \eqref{def:cq}) and   the (extended) Holder inequality, i.e.,

$$\left(\sum_{q\in\mathcal{I}}|u_q|^a|v_q|^b\right)^{a+b}\hspace{-0.1cm}\leq\left(\sum_{q\in\mathcal{I}}|u_q|^{a+b}\right)^a\left(\sum_{q\in\mathcal{I}}|v_q|^{a+b}\right)^b\hspace{-0.05cm},$$
with 
$u_q=c_q^2,~v_q=c_q(\beta_q^{(r)})^2, a=\frac{2}{3}, $ and $b=\frac{1}{3}$, we have
$$\begin{aligned}
\sum_{q\in\mathcal{I}}c_q(c_q\beta_q^{(r)})^{\frac{2}{3}}&\leq\left(\sum_{q\in\mathcal{I}}c_q(\beta_q^{(r)})^2\right)^{\hspace{-0.1cm}\frac{1}{3}}\hspace{-0.1cm}.
\end{aligned}$$
 Plugging the above inequality into \eqref{rqbound} yields $$\sum_{q\in\mathcal{I}}c_q(\beta_q^{(r+1)})^2\leq \max\left\{\sum_{q\in\mathcal{I}}c_q(\beta_q^{(r)})^2,\sum_{q\in\mathcal{I}}c_qM_{w,0}^2\right\},$$
which further implies
\begin{equation}\label{decrease_w}
\sum_{q\in\mathcal{I}}c_q(\beta_q^{(r+1)})^2\leq \max\left\{\sum_{q\in\mathcal{I}}c_q(\beta_q^{(0)})^2,\sum_{q\in\mathcal{I}}c_qM_{w,0}^2\right\},~\forall ~r.
\end{equation}
Note that $\beta_q^{(0)}=\|\bw_{(q)}^m\|$ (since the  initial point of Algorithm \ref{alg:BSUM} is set as $(\x^{(0)},\bw^{(0)},\bz^{(0)})=(\x^m,\bw^m,\bz^m)$) and $(\bx^{m+1},\bw^{m+1},\bz^{m+1})$ is the output of Algorithm \ref{alg:BSUM}. Then, it follows from \eqref{decrease_w} that 
\begin{equation}\label{decrease_w2}
\sum_{q\in\mathcal{I}}c_q\|\bw_{(q)}^{m+1}\|^2\leq\max\left\{\sum_{q\in\mathcal{I}}c_q\|\bw_{(q)}^m\|^2,\sum_{q\in\mathcal{I}}c_qM_{w,0}^2\right\},~\forall ~m.
\end{equation}
 Combining \eqref{decrease_w2} with \eqref{decrease_w}, we get
\begin{equation*}
\begin{aligned}
\sum_{q\in\mathcal{I}}c_q(\beta_q^{(r+1)})^2\overset{(a)}{\leq}&\max\left\{\sum_{q\in\mathcal{I}}c_q\|\bw_{(q)}^m\|^2,\sum_{q\in\mathcal{I}}c_qM_{w,0}^2\right\}\\\overset{(b)}{\leq} &\max\left\{\sum_{q\in\mathcal{I}}c_q\|\bw_{(q)}^0\|^2,\sum_{q\in\mathcal{I}}c_qM_{w,0}^2\right\},~\forall ~r,
\end{aligned}
\end{equation*}
where (a) is obtained by replacing $\beta^{(0)}_q$ in \eqref{decrease_w} with $\|\bw_{(q)}^{m}\|$ and (b) is obtained by iteratively applying  \eqref{decrease_w2} for $m-1, \dots, 0$. 
Hence, since $c_q>0$ for $q\notin\mathcal{I}$, we further have
$$
\begin{aligned}
&\sum_{q\in\mathcal{I}}\left(\beta_q^{(r+1)}\right)^2\\\leq \,&\frac{\max\left\{\sum_{q\in\mathcal{I}}c_q\|\bw_{(q)}^0\|^2,\sum_{q\in\mathcal{I}}c_qM_{w,0}^2\right\}}{\min_{q\in\mathcal{I}}c_q}:=M_{w,1},~\forall~r.
\end{aligned}$$
Let $M_w:=\sqrt{(Q-|\mathcal{I}|)M_{w,0}^2+M_{w,1}}$, where $|\mathcal{I}|$ denotes the number of elements in $\mathcal{I}$. Then it is simple to check that 
$\|\bw^{(r)}\|\leq M_w$ for all $r$.

Combing the above discussions,  we can now conclude that 
$\overline{M}:=\max\{M_x, M_z, M_w\}$ satisfies the condition in Lemma \ref{bounded}, 
 which completes the proof.
\end{proof}

 \section{Proof of Theorem \ref{theorem1}}\label{appendixC}
According to Lemma \ref{bounded},  the sequence $\{(\x^{m+1},\bw^{m+1},\bz^{m+1})\}$ generated by Algorithm \ref{alg:alm} is bounded. 
As a result, to prove Theorem \ref{theorem1}, we only need to prove the first assertion in it (and the second assertion follows directly from the first assertion and the fact that $\{(\x^{m+1},\bw^{m+1},\bz^{m+1})\}$ lies in a compact set). The following  proof follows a similar argument to that in \cite[Theorems 3.3-3.4]{boundALM}.

Let $(\bar{\x},\bar{\bw},\bar{\bz})$ be any limit point of $\{(\x^{m+1},\bw^{m+1},\bz^{m+1})\}$. We first show that $(\bar{\x},\bar{\bw},\bar{\bz})$ is feasible, i.e., $(\bar{\x},\bar{\bw},\bar{\bz})\in\mathcal{F}$. Since $\mathcal{X}$ and $\mathcal{Z}$ are closed sets and $\bx^{m+1}\in\mathcal{X}$ and $\bz^{m+1}\in\mathcal{Z}$ for all $m$, we have $\bar{\x}\in\mathcal{X}$ and $\bar{\bz}\in\mathcal{Z}$. The remaining task is to show that $(\bar{\x},\bar{\bw},\bar{\bz})$ satisfies the linear constraints in $\mathcal{F}$, i.e., $\bC\bar{\x}-\bar{\bz}=\bb$  and $\bA\bar{\x}=\bar{\bw}$. From the update rule in \eqref{rho}, we can express $(\rho_\mu^m, \rho_\nu^m)=(a_\mu\rho^m, a_\nu\rho^m)$, where $a_\mu>0$, $a_\nu>0$, and $\rho^m$ is increasing in $m$. Hence, either $\rho^m\to\infty$ or $\rho^m$ keeps fixed after a finite number of iterations. In the latter case,  the violation of the linear constraints will be decreased by a factor of $\delta<1$ at each iteration and thus tends to zero as $m\to\infty$, which further implies that  $(\bar{\x},\bar{\bw},\bar{\bz})\in\mathcal{F}$.

 We next consider the case where  $\rho^m\to\infty$. Assume, without loss of generality, that $(\x^{m+1},\bw^{m+1},\bz^{m+1})\rightarrow (\bar{\x},\bar{\bw},\bar{\bz})$. Since  $(\x^{m+1},\bw^{m+1},\bz^{m+1})$ is an $\epsilon_m$-stationary point of the ALM subproblem, there exists 
$\mathbf{e}^m:=\left[{\mathbf{e}^m_x}^\mathsf{T},{\mathbf{e}^m_y}^\mathsf{T},{\mathbf{e}^m_z}^\mathsf{T}\right]^\mathsf{T}$ such that 
\begin{equation}\label{em2}
\begin{aligned}\mathbf{e}^{m}\hspace{-0.05cm}\in\hspace{-0.05cm}\nabla& \mathcal{L}_m(\bx^{m+1},\bw^{m+1}\,\bz^{m+1})\\&\qquad\qquad+\partial \mathbb{I}_{\mathcal{X}}({\x}^{m+1})+\partial\mathbb{I}_{\mathcal{Z}}({\bz}^{m+1})
\end{aligned}\end{equation}with $\|\be^m\|\leq\epsilon_m$ (see \eqref{inexactcondition}).
Substituting $\mathcal{L}_m(\x^{m+1},\bw^{m+1},\bz^{m+1})$ in \eqref{def:lm} into \eqref{em2}, we get
\begin{subequations}\label{the1:eqn1}
\begin{align}
&\hspace{-0.25cm}\left<\nabla h(\x^{m+1})\hspace{-0.05cm}+\hspace{-0.05cm}\bC^\mathsf{T}\bmu^m\hspace{-0.05cm}+\hspace{-0.05cm}\bA^\mathsf{T}\bnu^m\hspace{-0.05cm}+\hspace{-0.05cm}\rho_\mu^m\bC^\mathsf{T}(\bC\x^{m+1}\hspace{-0.05cm}-\hspace{-0.05cm}\bz^{m+1}\hspace{-0.05cm}-\hspace{-0.05cm}\bb)\right.\notag\\
&\hspace{-0.25cm}~\left.+\rho_\nu^m\bA^\mathsf{T}(\bA\x^{m+1}\hspace{-0.05cm}-\hspace{-0.05cm}\bw^{m+1})\hspace{-0.05cm}-\hspace{-0.05cm}\be_x^{m},\x-\x^{m+1}\right>\geq 0,~\forall~\x\in\mathcal{X},\label{the1eqn1:1}\\
&\hspace{-0.25cm}\nabla f(\bw^{m+1})+\nabla g(\bw^{m+1})-\bnu^m\notag\\
&\hspace{2.4cm}-\rho_\nu^m(\bA\x^{m+1}-\bw^{m+1})-\be_w^{m}=\mathbf{0},\label{the1eqn1:2}\\
 &\hspace{-0.25cm}\left<-\bmu^m-\rho_\mu^m(\bC\x^{m+1}-\bz^{m+1}-\bb)-\mathbf{e}_z^m, \bz-\bz^{m+1}\right>\geq0,\notag\\&\hspace{6.4cm}\forall~\bz\in\mathcal{Z},\label{the1eqn1:3}
\end{align}
\end{subequations}
where we have used the fact that for a convex set $\mathcal{X}$, $\boldsymbol{\eta}\in\partial\mathbb{I}_{\mathcal{X}}(\tilde{\x})$ is equivalent to $\boldsymbol{\eta}^\mathsf{T}(\x-\tilde{\x})\leq 0$ for all $\x\in\mathcal{X}.$
By further dividing \eqref{the1eqn1:1}--\eqref{the1eqn1:3} by $\rho^m$, we obtain
\begin{subequations}\label{the1:eqn2}
\begin{align}
&\hspace{-0.25cm}\left<\hspace{-0.03cm}\frac{\nabla h(\x^{m+1})\hspace{-0.05cm}+\hspace{-0.05cm}\bC^\mathsf{T}\bmu^m\hspace{-0.05cm}+\hspace{-0.05cm}\bA^\mathsf{\hspace{-0.05cm}T}\bnu^m\hspace{-0.05cm}-\hspace{-0.05cm}\be_x^m}{\rho^m}\hspace{-0.05cm}+\hspace{-0.05cm}a_\mu\bC^\mathsf{T}(\bC\x^{m+1}\hspace{-0.05cm}-\hspace{-0.05cm}\bz^{m+1}\hspace{-0.05cm}\right.\notag\\
&\left.{\color{white}\frac{x^m}{\rho^m}}\hspace{-0.5cm}-\hspace{-0.05cm}\bb)\hspace{-0.05cm}+\hspace{-0.05cm}a_\nu\bA^\mathsf{\hspace{-0.05cm}T}(\bA\x^{m+1}\hspace{-0.05cm}-\hspace{-0.05cm}\bw^{m+1}),\,\x-\x^{m+1}\hspace{-0.07cm}\right>\geq 0,~\hspace{-0.05cm}\forall~\hspace{-0.05cm}\x\in\mathcal{X}\hspace{-0.05cm},\label{the1eqn2:1}\\
&\hspace{-0.25cm}\frac{\nabla f(\bw^{m+1})+\nabla g(\bw^{m+1})-\bnu^m-\be_w^{m}}{\rho^m}\notag\\
&\hspace{3cm}-a_\nu(\bA\x^{m+1}-\bw^{m+1})=\mathbf{0},\label{the1eqn2:2}\\
 &\hspace{-0.25cm}\left<-\frac{\bmu^m-\mathbf{e}_z^m}{\rho^m}-a_\mu(\bC\x^{m+1}-\bz^{m+1}-\bb), \bz-\bz^{m+1}\right>\geq0,\notag\\&\hspace{6.4cm}\forall~\bz\in\mathcal{Z}.\label{the1eqn2:3}
\end{align}
\end{subequations}
Recall that $(\x^{m+1},\bw^{m+1},\bz^{m+1})$ lies in a compact set. Since $\nabla h(\x), \nabla f(\bw)$, and $\nabla g(\bw)$ are all continuous, they are bounded on that compact set. In addition, the multipliers $(\bmu^m,\bnu^m)$ are bounded and $\|\be^m\|\leq \epsilon_m$ with $\epsilon_m\to0$. Therefore, by letting $m\to\infty$ in  \eqref{the1eqn2:1}--\eqref{the1eqn2:3}, we have 
\begin{subequations}
\begin{align}
&\hspace{-0.15cm}\langle a_\mu\bC^\mathsf{T}(\bC\bar{\x}\hspace{-0.05cm}-\hspace{-0.05cm}\bar{\bz}\hspace{-0.05cm}-\hspace{-0.05cm}\bb)\hspace{-0.05cm}+\hspace{-0.05cm}a_\nu\bA^\mathsf{T}(\bA\bar{\x}\hspace{-0.05cm}-\hspace{-0.05cm}\bar{\bw}),\x\hspace{-0.05cm}-\hspace{-0.05cm}\bar{\x}\rangle\geq 0,~\forall~\hspace{-0.05cm}\x\in\mathcal{X},\label{the1eqn3:1}\\
&\hspace{-0.05cm}a_\nu(\bar{\bw}-\bA\bar{\x})=\mathbf{0},\label{the1eqn3:2}\\
&\hspace{-0.15cm}\langle a_\mu(\bar{\bz}-\bC\bar{\x}+\bb),\bz-\bar{\bz}\rangle\geq 0~,\forall~\bz\in\mathcal{Z}.\label{the1eqn3:3}
\end{align}
\end{subequations}
From \eqref{the1eqn3:2}, $\bar{\bw}-\bA\bar{\x}=\mathbf{0}$. Substituting this into \eqref{the1eqn3:1} and combining  \eqref{the1eqn3:1} with  \eqref{the1eqn3:3}, we get 
$$\left<\bC\bar{\x}-\bar{\bz}-\bb,(\bC\x-\bz)-(\bC\bar{\x}-\bar{\bz})\right>\geq0,~\forall~\x\in\mathcal{X},~\bz\in\mathcal{Z},$$
which further implies that 
$$\begin{aligned}
\hspace{-0.03cm}\|\bC\bar{\x}\hspace{-0.05cm}-\hspace{-0.05cm}\bar{\bz}\hspace{-0.05cm}-\hspace{-0.05cm}\bb\|^{\hspace{-0.01cm}2}\hspace{-0.03cm}&\hspace{-0.02cm}\leq\hspace{-0.05cm} \left<\bC\bar{\x}-\bar{\bz}-\bb,\bC\x-\bz-\bb\right>\\
&\hspace{-0.05cm}\leq\hspace{-0.1cm}\|\bC{\x}\hspace{-0.05cm}-\hspace{-0.05cm}{\bz}\hspace{-0.05cm}-\hspace{-0.05cm}\bb\|\|\bC\bar{\x}\hspace{-0.05cm}-\hspace{-0.05cm}\bar{\bz}-\bb\|,\,\forall\hspace{0.03cm}\x\in\mathcal{X},\hspace{0.03cm}\bz\in\mathcal{Z},
\end{aligned}$$
i.e., 
$\|\bC\bar{\x}-\bar{\bz}-\bb\|\leq \|\bC{\x}-{\bz}-\bb\|$ for all $\x\in\mathcal{X}$ and $\bz\in\mathcal{Z}.$
Hence, $\bC\bar{\x}-\bar{\bz}-\bb=\mathbf{0}$ (as long as $\mathcal{F}$ is nonempty), and $(\bar{\bx},\bar{\bw},\bar{\bz})\in\mathcal{F}$. 

Finally, we show that $(\bar{\x},\bar{\bw},\bar{\bz})$ satisfies \eqref{optcondition}. Given any $(\bx,\bw,\bz)\in\mathcal{F}$,  by taking an inner product of \eqref{the1eqn1:2} with $\bw-\bw^{m+1}$ and then summing it  with \eqref{the1eqn1:1} and \eqref{the1eqn1:3}, we obtain
 \begin{equation}\label{convergence3}
\begin{aligned}
&\langle\nabla f(\bw^{m+1})+\nabla g(\bw^{m+1}),{\bw}-\bw^{m+1}\rangle\\
&\hspace{4.4cm}+\langle \nabla h(\x^{m+1}),{\x}-\x^{m+1}\rangle\\
\overset{(a)}{\geq} &\langle \bmu^m, \bC\x^{m+1}-\bz^{m+1}-\bb\rangle+\langle\bnu^m,\bA\x^{m+1}-\bw^{m+1}\rangle\\
&+\rho_\mu^{m}\,\|\bC\x^{m+1}-\bz^{m+1}-\bb\|^2+\rho_\nu^{m}\,\|\bA\x^{m+1}-\bw^{m+1}\|^2\\
&+\langle\mathbf{e}_x^{m},\x-\x^{m+1}\rangle+\langle\mathbf{e}_w^{m},{\bw}-\bw^{m+1}\rangle+\langle\mathbf{e}_z^{m},{\bz}-\bz^{m+1}\rangle\\
\geq&\hspace{-0.05cm}-\hspace{-0.05cm}\|\bmu^m\|\|\bC\x^{m+1}\hspace{-0.05cm}-\hspace{-0.05cm}\bz^{m+1}-\bb\|\hspace{-0.08cm}-\hspace{-0.05cm}\|\bnu^m\|\|\bA\x^{m+1}\hspace{-0.08cm}-\hspace{-0.05cm}\bw^{m+1}\|\\
&\hspace{-0.06cm}-\hspace{-0.08cm}\|\mathbf{e}_x^{m}\|\|\x\hspace{-0.05cm}-\hspace{-0.05cm}\x^{m+1}\|\hspace{-0.08cm}-\hspace{-0.08cm}\|\mathbf{e}_w^{m}\|\|{\bw}\hspace{-0.05cm}-\hspace{-0.05cm}\bw^{m+1}\|\hspace{-0.08cm}-\hspace{-0.08cm}\|\mathbf{e}_z^{m}\|\|\bz\hspace{-0.05cm}-\hspace{-0.05cm}\bz^{m+1}\|,\\
\end{aligned}
\end{equation}
where $(a)$ follows from the fact $\bC\x-\bz=\bb$ and $\bA\x-\bw=\mathbf{0}$.  It is simple to verify that $\mathcal{F}$ is a compact set, given that  $\mathcal{X}$ is compact.  Furthermore, note that $(\bmu^m, \bnu^m)$ are bounded, $(\bar{\x},\bar{\bw},\bar{\bz})\in\mathcal{F}$,  $\|\be^m\|\rightarrow 0$, and $\{(\bx^{m+1},\bw^{m+1},\bz^{m+1})\}$ is bounded. Then, by letting $m\to\infty$ in \eqref{convergence3}, we get
$$\langle\nabla f(\bar{\bw})+\hspace{-0.03cm}\nabla g(\bar{\bw}),{\bw}-\bar{\bw}\rangle+\langle \nabla h(\bar{\x}),{\x}-\bar{\x}\rangle\hspace{-0.05cm}\geq 0,\hspace{-0.05cm}
$$
which gives \eqref{optcondition} and completes the proof.
\section{Proof of Proposition \ref{proposition2}}\label{appendixD}
In this appendix, we give the proof of Proposition \ref{proposition2}.  We first give an outline of the proof, which contains three main steps. \\
\textbf{Step 1: {Control the violation of the first-order optimality condition}}. For $(\x^{(r+1)},\bw^{(r+1)},\ \bz^{(r+1)})$,  we bound its violation of the first-order optimality condition by the difference between successive iterations. Specifically, based on the update rules of the proposed BSUM Algorithm \ref{alg:BSUM},  we find an $\mathbf{e}^{(r+1)}$ with 
\begin{equation*}
\label{def:e}
\begin{aligned}
\mathbf{e}^{(r+1)}\in\nabla &\mathcal{L}_m(\bx^{(r+1)},\bw^{(r+1)}\,\bz^{(r+1)})\\&\qquad\qquad+\partial \mathbb{I}_{\mathcal{X}}({\x}^{(r+1)})+\partial\mathbb{I}_{\mathcal{Z}}({\bz}^{(r+1)})
\end{aligned}
\end{equation*}
 that satisfies
\begin{equation}\label{error0}\begin{aligned}
\|\be^{(r+1)}\|^2
\leq &C_0\,(\rho^m)^2\left(\|\x^{(r+1)}-\x^{(r)}\|^2\right.\hspace{-0.1cm}\\&+\hspace{-0.1cm}\left.\|\bw^{(r+1)}-\bw^{(r)}\|^2\hspace{-0.1cm}+\hspace{-0.1cm}\|\bz^{(r+1)}-\bz^{(r)}\|^2\right),
\end{aligned}
\end{equation}
where $C_0$ is a constant independent of $m$.\\
\textbf{Step 2: Establish the sufficient decrease and lower boundedness of the objective function}. We show that the objective function is sufficiently decreased at each iteration:
\begin{equation}\label{decrease0}\begin{aligned}
&\mathcal{L}_m(\x^{(r+1)},\bw^{(r+1)},\bz^{(r+1)})-\mathcal{L}_m(\x^{(r)},\bw^{(r)},\bz^{(r)})\\
\leq\,& -C_1\rho^m\left(\|\x^{(r+1)}-\x^{(r)}\|^2-\|\bw^{(r+1)}-\bw^{(r)}\|^2\right.\\&\qquad\qquad\quad\left.-\|\bz^{(r+1)}-\bz^{(r)}\|^2\right)
\end{aligned}\end{equation}
and is bounded from below. The sufficient decrease  is established based on the fact that a strongly convex upper bound is minimized for each block at each iteration, and the boundedness is due to the specific structure of $\mathcal{L}_m$ and the boundedness of the sequence.\\
\textbf{Step 3: Identify an upper bound of $r$ with $\|\mathbf{e}^{(r)}\|\leq \epsilon_m$ by combining Steps 1 and 2.}

We next give the detailed proof and consider the above three steps separately. 

\textbf{Step 1:} From the update rules of the BSUM algorithm given by \eqref{xsolution}, \eqref{wsolution}, \eqref{zsolution},  we have
\begin{equation*}
\begin{aligned}
\mathbf{0}&\in\hspace{-0.05cm}\nabla_{\x} \mathcal{L}_m(\bx^{(r)},\hspace{-0.06cm}\bw^{(r)}, \hspace{-0.06cm}\bz^{(r)})\hspace{-0.05cm}+\hspace{-0.05cm}\gamma^m(\x^{(r+1)}\hspace{-0.06cm}-\hspace{-0.06cm}\x^{(r)})\hspace{-0.05cm}+\hspace{-0.06cm}\partial\mathbb{I}_{\mathcal{X}}(\x^{(r+1)}),\\
\mathbf{0}&=\hspace{-0.05cm}\nabla_{\bw} \mathcal{L}_m(\bx^{(r+1)},\bw^{(r+1)},\bz^{(r+1)})\hspace{-0.06cm}+\hspace{-0.06cm}\nabla g(\bw^{(r)})\hspace{-0.06cm}-\hspace{-0.05cm}\nabla g(\bw^{(r+1)}),\\
\mathbf{0}&\in\hspace{-0.05cm}\nabla_{\bz} \mathcal{L}_m(\bx^{(r+1)},\bw^{(r+1)},\bz^{(r+1)})+\partial\mathbb{I}_{\mathcal{Z}}(\bz^{(r+1)}).
\end{aligned}
\end{equation*}
Define 
\begin{equation}\label{def:e2}
\hspace{-0.3cm}\begin{aligned}
\be_x^{(r+1)}\hspace{-0.08cm}&= \hspace{-0.05cm}\nabla_{\x} \mathcal{L}_m(\bx^{(r+1)}\hspace{-0.02cm},\hspace{-0.03cm}\bw^{(r+1)}\hspace{-0.02cm}, \hspace{-0.03cm}\bz^{(r+1)}\hspace{-0.025cm})\hspace{-0.05cm}\\&\quad-\hspace{-0.05cm}\nabla_{\x} \mathcal{L}_m(\bx^{(r)},\bw^{(r)}, \bz^{(r)}\hspace{-0.025cm})\hspace{-0.05cm}-\hspace{-0.05cm}\gamma^m(\x^{(r+1)}\hspace{-0.05cm}-\hspace{-0.05cm}\x^{(r)}),\\
\be_w^{(r+1)}\hspace{-0.08cm}&=\nabla g(\bw^{(r+1)})-\nabla g(\bw^{(r)}).
\end{aligned}
\end{equation}
Then 
\begin{equation}\label{def:e}\begin{aligned}\mathbf{e}^{(r+1)}:=\left[\begin{smallmatrix}\mathbf{e}^{(r+1)}_{x}\\\mathbf{e}^{(r+1)}_{w}\\\mathbf{0}\end{smallmatrix}\right]\in\,&\nabla \mathcal{L}_m(\bx^{(r+1)},\bw^{(r+1)}\,\bz^{(r+1)})\\&+\partial \mathbb{I}_{\mathcal{X}}({\x}^{(r+1)})+\partial\mathbb{I}_{\mathcal{Z}}({\bz}^{(r+1)}).
\end{aligned}\end{equation}
From Lemma \ref{bounded}, $\{\bw^{(r+1)}\}$ lies in a bounded set, which we denote by $\mathcal{W}$. Let $L_g$ be the Lipschitz constant of $\nabla g(\bw)$ on $\mathcal{W}$. Then using the definitions of $\mathbf{e}^{(r+1)}$ and $\mathcal{L}_m$, we have
\begin{equation}\label{err_bound}
\begin{aligned}
&\|\mathbf{e}^{(r+1)}\|^2\\\leq\,& 3\|(\gamma^m+2\lambda)\mathbf{I}-\rho_\mu^m\bC^T\bC-\rho_\nu^m\bA^T\bA\|^2\|\x^{(r+1)}-\x^{(r)}\|^2\\
&+\left(3(\rho_\nu^m)^2\|\bA\|^2+L_g^2\right)\|\bw^{(r+1)}-\bw^{(r)}\|^2\\&+3(\rho_\mu^m)^2\|\bC\|^2\|\bz^{(r+1)}-\bz^{(r)}\|^2.
\end{aligned}
\end{equation}
Note that  $\rho_\mu^m=a_\mu\rho^m$,  $\rho_\nu^m=a_\nu\rho^m$, and $\gamma^m=\rho^m\|a_\mu\bC^T\bC+a_\nu\bA^T\bA\|$, where $\rho^m$ is increasing in $m$ and thus lower-bounded by $\rho^0>0$. Hence,  according to \eqref{err_bound}, we can find a  constant $C_0>0$ independent of $m$ such that \eqref{error0} holds, which completes the first step.

\textbf{Step 2:} We next establish the sufficient decrease of $\mathcal{L}_m$. Specifically, 
$$
\begin{aligned}
&\mathcal{L}_m(\x^{(r+1)},\bw^{(r)},\bz^{(r)})-\mathcal{L}_m(\x^{(r)},\bw^{(r)},\bz^{(r)})\\
\overset{(a)}{\leq}\,& u_x(\x^{(r+1)};\x^{(r)},\bw^{(r)},\bz^{(r)})-u_x(\x^{(r)}; \x^{(r)},\bw^{(r)},\bz^{(r)})\\
\overset{(b)}{\leq}\,& -\frac{\gamma^m}{2}\|\x^{(r+1)}-\x^{(r)}\|^2,
\end{aligned}$$
where $(a)$ is due to the facts that $u_x(\x;\x^{(r)},\bw^{(r)},\bz^{(r)})$ is an upper bound of $\mathcal{L}_m(\x,\bw^{(r)},\bz^{(r)})$ and $u_x(\x^{(r)};\x^{(r)},\bw^{(r)},\bz^{(r)}) =\mathcal{L}_m(\x^{(r)},\bw^{(r)},\bz^{(r)})$ and (b) follows from the update rule $\x^{(r+1)}\in\arg\min_{\x\in\mathcal{X}}~u_x(\x;\x^{(r)}, \bw^{(r)},\bz^{(r)})$ and the fact that  $u_x(\x;\x^{(r)},\bw^{(r)},\bz^{(r)})$ is strongly convex with modulus $\gamma^m$. Similarly, we can show that 
$$
\begin{aligned}
&\mathcal{L}_m(\x^{(r+1)},\bw^{(r+1)},\bz^{(r)})-\mathcal{L}_m(\x^{(r+1)},\bw^{(r)},\bz^{(r)})\\
\leq\, &-\frac{\rho_\nu^m}{2}\|\bw^{(r+1)}-\bw^{(r)}\|^2
\end{aligned}$$
 and 
 $$
\begin{aligned}
&\mathcal{L}_m(\x^{(r+1)},\bw^{(r+1)},\bz^{(r+1)})-\mathcal{L}_m(\x^{(r+1)},\bw^{(r+1)},\bz^{(r)})\\
\leq\,& -\frac{\rho_\mu^m}{2}\|\bz^{(r+1)}-\bz^{(r)}\|^2.
\end{aligned}$$
Summing the above inequalities and letting $C_1:=\min\{a_\mu,a_\nu, \|a_\mu\bC^T\bC+a_\nu\bA^T\bA\|\}/2$, we get  the  sufficient decrease of the objective function in \eqref{decrease0}. Since the sequence $\{(\x^{(r+1)},\bw^{(r+1)},\bz^{(r+1)})\}$ and the multipliers $(\bmu^m,\bnu^m)$ are bounded, the objective function $\mathcal{L}_m$ is bounded from below.  We denote the lower bound of $\mathcal{L}_m$ by $\underline{{L}}$. 

\textbf{Step 3}: Let 
\begin{equation}\label{repsilonm}
r(\epsilon_m):=\min\{r\geq0\mid\|\be^{(r)}\|\leq \epsilon_m\}.
\end{equation} Our goal is to determine an upper bound of $r(\epsilon_m)$. Given any $R\geq1$, combining \eqref{error0} and \eqref{decrease0} and summing them from $r=0$ to $r=R-1$, we have 
\begin{equation}\label{error_sum}
\begin{aligned}
\sum_{r=1}^{R}\|\be^{(r)}\|^2\leq \frac{C_0\rho^m}{C_1}(&\mathcal{L}_m(\x^{(0)},\bw^{(0)},\bz^{(0)})\\&-\mathcal{L}_m(\x^{(R)},\bw^{(R)},\bz^{(R)})).
\end{aligned}
\end{equation}
Recalling the definition of $\mathcal{L}_m$ in \eqref{def:lm}, we can upper bound $\mathcal{L}_m(\x^{(0)},\bw^{(0)},\bz^{(0)})$ as 
$$\mathcal{L}_m(\x^{(0)},\bw^{(0)},\bz^{(0)})\leq C_2{\rho}^m,$$
where $C_2$ is a positive constant independent of $m$. Then, letting $R=r(\epsilon_m)-1$ in \eqref{error_sum} and noting the definition of $r(\epsilon_m)$, we have 
$$(r(\epsilon_m)-1)\epsilon_m^2\leq \sum_{r=1}^{r(\epsilon_m)-1}\|\be^{(r)}\|^2\leq \frac{C_0C_2}{C_1}(\rho^m)^2-\underline{L},$$
which implies that $r(\epsilon_m)=\mathcal{O}\left(\frac{(\rho^m)^2}{\epsilon_m^2}\right).$

\bibliographystyle{IEEEtran}
\bibliography{IEEEabrv,ISAC_QCE}

\end{document}